\documentclass[a4paper,USenglish,cleveref,numberwithinsect,autoref]{lipics-v2021}

\newcommand{\M}{\mathcal{M}}
\newcommand{\N}{\mathcal{N}}
\newcommand{\Obs}{\mathcal{T}}
\newcommand{\Spec}{\mathcal{S}}

\newcommand{\U}{\mathcal{U}}

\newcommand{\converges}{\ensuremath{\mathord{\downarrow}}}
\newcommand{\diverges}{\ensuremath{\mathord{\uparrow}}}

\usepackage{soul}
\usepackage[T1]{fontenc}
\usepackage{lmodern}
\usepackage[utf8]{inputenc}
\usepackage{amsmath}
\usepackage{amssymb}
\usepackage{pbox}
\usepackage{tabularx}
\usepackage{color}
\usepackage{colortbl}
\usepackage{graphicx}
\usepackage{pifont}
\usepackage{booktabs}
\usepackage{microtype}
\usepackage{float}
\usepackage{stmaryrd}
\usepackage{bbding}
\usepackage[outline]{contour}
\contourlength{1.2pt}

\usepackage{tikz}
\usetikzlibrary{patterns, intersections, positioning, shapes.geometric, calc,
	automata, arrows, decorations.pathreplacing,decorations.pathmorphing}
\usetikzlibrary{cd}

\newcommand{\treeNodeLabel}[1]{\contour{white}{#1}}

\tikzset{
  initial text={},
	treenode/.style = {align=center, inner sep=0pt, text centered},
  basis/.style = {
    pattern=north east lines,
    pattern color=magenta!80!black!80!white,
  },
  frontier/.style={
    pattern=crosshatch dots,
    pattern color=yellow!80!black,
  },
  q0class/.style={
    pattern=vertical lines,
    pattern color=red!60!white,
  },
  q1class/.style={
    pattern=north west lines,
    pattern color=blue!60!white,
  },
  q2class/.style={
    pattern=crosshatch,
    pattern color=green!50!black!60!white,
  },
}

\usepackage{xspace}
\usepackage{multirow}
\usepackage{float}
\usepackage{algorithm,algpseudocode}
\usepackage{aliascnt}
\usepackage{etoolbox}

\usepackage[capitalise]{cleveref}

\newcommand{\partialto}{\ensuremath{\rightharpoonup}}

\newcommand{\apart}{\ensuremath{\mathrel{\#}}}

\newcommand{\semantics}[1]{\ensuremath{{\llbracket #1\rrbracket}}}
\newcommand{\stateCan}[2]{\ensuremath{\delta(#1,#2)\converges}}
\newcommand{\takeout}[1]{\relax}

\newenvironment{proofappendix}[2][Proof of]{%
	\paragraph*{#1~\autoref{#2}}%
	\addcontentsline{toc}{subsection}{#1~\autoref{#2}}
}{%
}

\newif\iflong
\longtrue

\title{New Fault Domains for Conformance Testing of Finite State Machines}

\titlerunning{New Fault Domains for Conformance Testing}

\author{Frits Vaandrager}{Radboud University, Nijmegen, The Netherlands}{Frits.Vaandrager@ru.nl}{https://orcid.org/0000-0003-3955-1910}{} 
\author{Ivo Melse}{Radboud University, Nijmegen, The Netherlands}{Ivo.Melse@ru.nl}{}{}

\funding{Supported by NWO project OCENW.M.23.155, Evidence-Driven Black-Box Checking (EVI).}

\relatedversiondetails{This is the extended version of an article that will appear in the Proceedings of the 36th International Conference on Concurrency Theory (CONCUR 2025).
}

\acknowledgements{%
This article grew out of earlier work of the first author \cite{Vaandrager2024,VaandragerGRW22} and the bachelor thesis of the second author \cite{Melse2024}. An early version was also made available on arXiv \cite{abs-2410-19405}.	
Many thanks to Kirill Bogdanov, Paul  Fiter\u{a}u-Bro\c{s}tean, Dennis Hendriks, B\'{a}lint Kocsis, Bram Pellen, Jurriaan Rot, and the anonymous reviewers for their insightfull feedback on earlier versions.	
}

\authorrunning{F.\ Vaandrager and I.\ Melse}

\Copyright{Frits Vaandrager and Ivo Melse}

\ccsdesc[500]{Theory of computation}

\keywords{conformance testing, finite state machines, Mealy machines, apartness, observation tree, fault domains, $k$-$A$-complete test suites}


\nolinenumbers

\begin{document}
\maketitle

\begin{abstract}
	A fault domain reflects a tester's assumptions about faults that may occur in an implementation and that need to be detected during testing.
	A fault domain that has been widely studied in the literature on black-box  conformance testing is the class of finite state machines (FSMs) with at most $m$ states.  Numerous strategies for generating test suites have been proposed that guarantee fault coverage for this class. These so-called $m$-complete test suites grow exponentially in $m-n$, where $n$ is the number of states of the specification, so one can only run them for small values of $m-n$. 
	But the assumption that $m-n$ is small is not realistic in practice. In his seminal paper from 1964, Hennie raised the challenge to design checking experiments in which the number of states may increase appreciably.
	In order to solve this long-standing open problem,
	we propose (much larger) fault domains that capture the assumption that all states in an implementation can be reached by first performing a sequence from some set $A$ (typically a state cover for the specification), followed by $k$ arbitrary inputs, for some small $k$.
	The number of states of FSMs in these fault domains grows exponentially in $k$.
	We present a sufficient condition for \emph{$k$-$A$-completeness} of test suites with respect to these fault domains.
	Our condition implies $k$-$A$-completeness of two prominent $m$-complete test suite generation strategies, the Wp and HSI methods. 
	Thus these strategies are complete for much larger fault domains than those for which they were originally designed, and thereby solve Hennie's challenge.
	We show that three other prominent $m$-complete methods (H, SPY and SPYH) do not always generate $k$-$A$-complete test suites.
\end{abstract}

\newpage
\section{Introduction}
We revisit the classic problem of black-box conformance testing \cite{LeeY96} in 
a simple setting in which both specifications and implementations can be described as (deterministic, complete) finite state machines (FSMs), a.k.a.\ Mealy machines.
Ideally, given a specification FSM $\cal S$, a tester would like to have a finite set of tests $T$ that is complete in the sense that an implementation FSM $\M$ will pass all tests in $T$ if and only if $\M$ is equivalent to $S$.
Unfortunately, such a test suite does not exist: 
if $N$ is the number of inputs in the longest test in $T$ then an implementation $\M$ may behave like $\cal S$ for the first $N$ inputs, but differently from that point onwards.  Even though $\M$ is not equivalent to $\cal S$, it will pass all tests in $T$.
This motivates the use of a \emph{fault domain}, a collection of FSMs that reflects the tester's assumptions about faults that may occur in an implementation and that need to be detected during testing. The challenge then becomes to define a fault domain that captures \emph{realistic assumptions} about possible faults, but still allows for the design of \emph{sufficiently small} test suites that are complete for the fault domain and can be run within reasonable time.

A fault domain that has been widely studied is the set $\U_m$ of finite state machines (FSMs) with at most $m$ states.  Test suites that detect any fault in this class are called \emph{$m$-complete}.
The idea of $m$-complete test suites can be traced back to Moore~\cite{Mo56} and Hennie \cite{Hennie64}.
Numerous methods for constructing $m$-complete test suites have been proposed, for different types of transition system models, see for instance
\cite{Vas73,Ch78,YP90,FujiwaraEtAl91,PetrenkoYLD93,PHK95,LeeY96,DorofeevaEY05,PetrenkoY05,SPY12,SmeenkMVJ15,SouchaB18,BosJM19,ThesisJoshua,KrugerJR24,Vaandrager2024,GazdaH25}. We refer to \cite{LeeY96,DorofeevaEMCY10,ThesisJoshua} for overviews and further references.
The interest in $m$-complete test suites is somewhat surprising, given that in a black-box setting there is typically no sensible way to bound the number of possible states of an implementation to a small $m$.
After all, each additional Boolean variable in an implementation potentially doubles the number of extra states.
This is problematic in practice, since the size of $m$-complete test suites generated by existing methods grows exponentially in $m-n$, where $n$ is the number of states of the specification. Actually,
Moore~\cite{Mo56} was not aiming for practical methods and only described \emph{gedanken-experiments}.  
He introduced the example of combination lock machines, and was therefore well aware of the combinatorial explosions in $m$-complete test suites. 
In his seminal paper from 1964,
Hennie \cite{Hennie64} also observed the exponential blow-up in $m$-complete test suites and wrote ``Further work is needed before it will be practical to design checking experiments in which the number of states may increase appreciably.''
In two classic papers, Vasilevskii \cite{Vas73} and Chow \cite{Ch78} independently showed that $m$-complete test suites can be constructed with a size that is polynomial in the size of the specification FSM, for a fixed value of $k = m-n$.  Nevertheless, the test suites generated by their $W$-method grow exponentially in $k$, and so they did not solve the problem raised by Hennie \cite{Hennie64}. 

In this article, we solve the long-standing open problem of Hennie \cite{Hennie64} as follows:
\begin{enumerate}
\item 
As an alternative for $m$-completeness,
we propose fault domains $\U_k^A$ that contain all FSMs in which any state can be reached by first performing a sequence from some set $A$ (typically a state cover for the specification), followed by $k$ arbitrary inputs, for some small $k$.
These fault domains contain FSMs with a number of extra states that grows exponentially in $k$.
\item
Based on ideas from \cite{DorofeevaEY05,Vaandrager2024,VaandragerGRW22}, we present a sufficient condition for \emph{$k$-$A$-completeness} of test suites with respect to these fault domains, phrased entirely in terms of properties of their testing tree.
We present a $\Theta(N^2)$-time  algorithm to check this condition for a testing tree with $N$ states.
\item
We show that our sufficient condition implies $k$-$A$-completeness of two prominent approaches for test suite generation:
the Wp-method of Fujiwara et al \cite{FujiwaraEtAl91}, and
the HSI-method of Luo et al \cite{Luo1995} and Petrenko et al \cite{YP90,PetrenkoYLD93}.
The W-method of Vasilevskii \cite{Vas73} and Chow \cite{Ch78}, and the
UIOv-method of Chan et al \cite{ChanEtAl89} are instances of the Wp-method, 
and the ADS-method of Lee \& Yannakakis \cite{LYa94} and the hybrid ADS method of Smeenk et al \cite{SmeenkMVJ15} are instances of the HSI-method.
This means that
$k$-$A$-completeness of these methods follows as well.
Hence these $m$-complete test suite generation methods are complete for much larger fault domains than those for which they were designed originally.
\item
We present counterexamples showing that three other prominent test generation methods, 
the H-method of Dorofeeva et al \cite{DorofeevaEY05},
the SPY-method of Sim\~{a}o, Petrenko and Yevtushenko \cite{SPY12} and the
SPYH-method of Soucha and Bogdanov  \cite{SouchaB18},
do not always generate $k$-$A$-complete test suites.
\end{enumerate}

The rest of this article is structured as follows. 
First, Section~\ref{sec:mealy} recalls some basic definitions 
regarding (partial) Mealy machines, observation trees, and test suites.
Section~\ref{test suite completeness} introduces $k$-$A$-complete test suites,
and shows how they strengthen the notion of $m$-completeness.
Next, we present our sufficient condition for $k$-$A$-completeness in Section~\ref{sec: main result}.  Based on this condition (and its proof), Section~\ref{sec:applications} establishes $k$-$A$-completeness of the Wp and HSI methods, and $m$-completeness of the H-method. Finally, Section~\ref{sec: discussion}, discusses implications of our results and directions for future research.
All proofs and some examples are deferred to appendices.


\section{Preliminaries}
\label{sec:mealy}
In this section, we recall a number of key concepts that play a role in this article: partial functions, sequences, Mealy machines, observation trees, and test suites.  

\subsection{Partial Functions and Sequences}
\label{prel functions and sequences}

We write $f \colon X \partialto Y$ to denote that $f$ is a partial function from $X$
to $Y$ and write $f(x) \converges$ to mean that $f$ is defined on $x$, that is,
$\exists y \in Y \colon f(x)=y$, and conversely write $f(x)\diverges$ if $f$ is
undefined for $x$.
Often, we identify a partial function $f \colon X \rightharpoonup Y$ with the set $\{ (x,y) \in X \times Y \mid f(x)=y \}$. 
We use Kleene equality on partial functions, which states that on
a given argument either both functions are undefined, or both are defined and
their values on that argument are equal.

Throughout this paper, we fix a nonempty, finite set $I$ of \emph{inputs} and a set $O$ of \emph{outputs}. We use standard notations for sequences.  If $X$ is a set then $X^*$ denotes the set of finite \emph{sequences} (also called \emph{words}) over $X$. For $k$ a natural number, $X^{\leq k}$ denotes the set of sequences over $X$ with length at most $k$. We write $\epsilon$ to denote the empty sequence, $X^+$ for the set $X^* \setminus \{\epsilon\}$, $x$ to denote the sequence consisting of a single element $x \in X$, and $\sigma \cdot \rho$ (or simply $\sigma  \rho$) to denote the concatenation of two sequences $\sigma, \rho \in X^*$. The concatenation operation is extended to sets of sequences by pointwise extension.
We write $| \sigma |$ to denote the length of sequence $\sigma$.
For a sequence $\tau = \rho ~ \sigma$ we say that $\rho$ and $\sigma$ are a prefix and a suffix of $\tau$, respectively. We write $\rho \leq \tau$ iff $\rho$ is a prefix of $\tau$.
A set $W \subseteq X^*$ is \emph{prefix-closed} if any prefix of a word in $W$ is also in $W$, that is, for all $\rho, \tau \in X^*$ with $\rho\leq\tau$, $\tau \in W$ implies $\rho \in W$.
For $W \subseteq X^*$, $\mathit{Pref}(W)$ denotes the \emph{prefix-closure} of $W$, that is, the set $\{ \rho \in X^* \mid \exists \tau \in W : \rho \leq \tau \}$ of all prefixes of elements of $W$.
If $\sigma = x \rho$ is a word over $X$ with $x \in X$, then we write ${\sf hd}(\sigma)$ for $x$, and
${\sf tl}(\sigma)$ for $\rho$.

\subsection{Mealy machines}
\label{prel mealy}
Next, we recall the definition of Finite State Machines (FSMs) a.k.a.\  Mealy machines.

\begin{definition}[Mealy machine]
	\label{Mealy}
	A \emph{Mealy machine} is
	a tuple $\M = (Q, q_0, \delta, \lambda)$, where
	$Q$ is a finite set of \emph{states},
	$q_0 \in Q$ is the \emph{initial state},
	$\delta\colon Q \times I \partialto Q$ is a (partial) \emph{transition function}, and
$\lambda\colon Q \times I \partialto O$ is a (partial) \emph{output function} that satisfies
  $\lambda(q,i)\converges \Leftrightarrow \delta(q,i) \converges$, for $q \in Q$ and $i \in I$.
We use superscript $\M$ to disambiguate to which Mealy machine we refer, e.g.\  $Q^{\M}$, $q^{\M}_0$, $\delta^{\M}$ and $\lambda^{\M}$.
We write $q\xrightarrow{i/o}q'$ 
to denote $\lambda(q,i) = o$ and $\delta(q,i) = q'$.
We call a state $q \in Q$ \emph{complete} iff $q$ has an outgoing transition for each input, that is, $\delta(q,i) \converges$, for all $i \in I$.
A set of states $W \subseteq Q$ is \emph{complete} iff each state in $W$ is complete.
The Mealy machine $\M$ is \emph{complete} iff $Q$ is complete.
The transition and output functions are lifted to sequences in the usual 
way. 
Let $q, q'\in Q$, $\sigma \in I^*$ and $\rho \in O^*$.
We write $q\xrightarrow{\sigma/\rho}q'$ to denote $\lambda(q,\sigma) = \rho$ and $\delta(q,\sigma) = q'$.
We write 
$q\xrightarrow{\sigma/\rho}$ if there is a $q' \in Q$ with $q\xrightarrow{\sigma/\rho}q'$,
we write
$q\xrightarrow{\sigma}q'$ if there is a $\rho\in O^*$ with $q\xrightarrow{\sigma/\rho}q'$, and
we write
$q\xrightarrow{+} q'$ if there is a $\sigma \in I^+$ with $q\xrightarrow{\sigma}q'$.
If $q_0 \xrightarrow{\sigma}q$ then we say that $q$ is \emph{reachable} via $\sigma$.

A \emph{state cover} for $\M$ is a finite, prefix-closed set of input sequences $A \subset I^*$ such that, for every $q \in Q$, there is a $\sigma \in A$ such that $q$ is reachable via $\sigma$.
A state cover $A$ is \emph{minimal} if each state of $\M$ is reached by exactly one sequence from $A$.
$\M$ is \emph{initially connected} if it has a state cover.
We will only consider Mealy machines that are initially connected.
\end{definition}

\begin{definition}[Semantics and minimality]
	\label{def:equivalence}
The \emph{semantics} of a state $q$ of a Mealy machine $\M$ is the map $\semantics{q}^{\M} \colon I^*\partialto O^*$
  defined by $\semantics{q}^{\M}(\sigma) = \lambda^{\M}(q, \sigma)$.
		
States $q, r$ of Mealy machines $\M$ and $\N$, respectively, are \emph{equivalent},
written $q \approx r$, iff $\semantics{q}^{\M} = \semantics{r}^{\N}$.
Mealy machines $\M$ and $\N$ are \emph{equivalent}, written $\M \approx \N$, iff their initial
states are equivalent: $q_0^\M \approx q_0^\N$.
A Mealy machine $\M$ is \emph{minimal} iff, for all pairs of states $q, q'$, $q \approx q'$ iff $q =q'$.
\end{definition}

\begin{example}
Figure~\ref{Fig:Conformance} shows an example (taken from \cite{ThesisJoshua}) with a graphical representation of two minimal, complete Mealy machines that are inequivalent, since the input sequence $a b a$ triggers different output sequences in both machines.
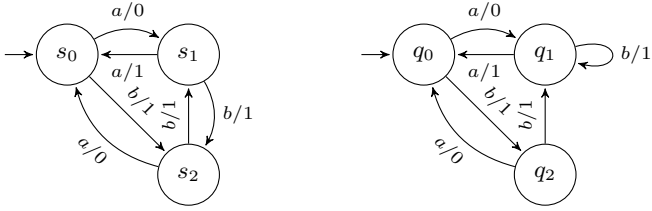
\begin{figure}[t!]
	\begin{center}
		\begin{tikzpicture}[->,>=stealth',shorten >=1pt,auto,node distance=1.8cm,main node/.style={circle,draw,font=\sffamily\large\bfseries},
		]
		\def\yoffset{8mm}
		\node[initial,state,frontier] (0) {\treeNodeLabel{$s_0$}};
		\node[state,frontier] (1) [right of=0] {\treeNodeLabel{$s_1$}};
		\node[state,frontier] (2) [below of=1] {\treeNodeLabel{$s_2$}};
		
		\node[initial,state,frontier] [right of=1,xshift=1.5cm](q0) {\treeNodeLabel{$q_0$}};
		\node[state,frontier] (q1) [right of=q0] {\treeNodeLabel{$q_1$}};
		\node[state,frontier] (q2) [below of=q1] {\treeNodeLabel{$q_2$}};
		
		\path[every node/.style={font=\sffamily\scriptsize}]
		(0) edge [bend left] node[sloped,above] {$a/0$} (1)
		edge node[sloped,above] {$b/1$} (2)
		(1) edge node[below] {$a/1$} (0)
		edge [bend left] node[right] {$b/1$} (2)
		(2) edge [bend left] node[sloped,below] {$a/0$} (0)
		edge  node[sloped] {$b/1$} (1)
		(q0) edge node[sloped, above] {$b/1$} (q2)
		edge [bend left] node {$a/0$} (q1)	
		(q1) edge node[below] {$a/1$} (q0)
		edge [loop right] node {$b/1$} (q1)
		(q2) edge [bend left] node[sloped,below] {$a/0$} (q0)
		edge node[sloped] {$b/1$} (q1);
		\end{tikzpicture}
		\caption{A specification $\cal S$ (left) and an inequivalent implementation $\M$ (right).}
		\label{Fig:Conformance}
	\end{center}
\end{figure}	
\end{example}

\subsection{Observation Trees}
\label{prel apartness}

A \emph{functional simulation} is a function between Mealy machines that preserves the initial state and the transition and output functions.

\begin{definition}[Simulation]
\label{def refinement}
A \emph{functional simulation} between Mealy machines $\M$ and $\N$ is a function
$f\colon Q^{\M} \to Q^{\N}$ satisfying $f(q^{\M}_0)= q^{\N}_0$ 
and, for $q \in Q^{\M}$ and $i \in I$, 
\begin{eqnarray*}
	\delta^{\M}(q,i) \converges & ~~ \Rightarrow ~~ & f(\delta^{\M}(q,i)) = \delta^{\N}(f(q),i) \mbox{ and }
	\lambda^{\M}(q,i) = \lambda^{\N}(f(q), i).
\end{eqnarray*} 
We write $f\colon \M\to \N$ if $f$ is a functional simulation between $\M$ and $\N$.
\end{definition}
Note that if $f\colon \M\to \N$, each transition $q \xrightarrow{i/o} q'$ of $\M$ can be matched by a transition $f(q) \xrightarrow{i/o} f(q')$ of $\N$. 



For a given Mealy machine $\M$, an \emph{observation tree} for $\M$ is a Mealy machine itself that
represents the inputs and outputs that have been observed during testing of $\M$. 
Using functional simulations, we may define it formally as follows.

\begin{definition}[Observation tree]
	A Mealy machine $\Obs$ is 
	a \emph{tree} iff for each $q \in Q^{\Obs}$ there is a unique $\sigma \in I^*$ s.t.\ $q$ is reachable via $\sigma$.
	We write $\mathsf{access}(q)$ for the sequence of inputs leading to $q$.
	For $U \subseteq Q^{\Obs}$, we define $\mathsf{access}(U) = \{ \mathsf{access}(q) \mid q \in U \}$.
	For $q \neq q^{\Obs}_0$, we write $\mathsf{parent}(q)$ for the unique state $q'$ with an outgoing transition to $q$.	
 A tree $\Obs$ is an \emph{observation tree} for a Mealy machine $\M$ iff
  there is a functional simulation $f$ from $\Obs$ to $\M$.
\end{definition}

\begin{example}
Figure~\ref{Fig:turnstile}(right) shows an observation tree $\Obs$ for the Mealy machine $\Spec$ of Figure~\ref{Fig:turnstile}(left).
Mealy machine $\Spec$, an example taken from \cite{SouchaB18}, models the behavior of a turnstile. Initially, the turnstile is locked ($L$), but when a coin is inserted ($c$) then, although no response is observed ($N$), the machine becomes unlocked ($U$). When a user pushes the bar ($p$) in the initial state, the turnstile is locked ($L$), but when the bar is pushed in the unlocked state it is free ($F$) and the user may pass. State colors indicate the functional simulation from $\Obs$ to $\Spec$.
\begin{figure}[t!]
	\begin{center}
		\begin{tikzpicture}[->,>=stealth',shorten >=1pt,auto,node distance=1.6cm,main node/.style={circle,draw,font=\sffamily\large\bfseries},
		]
		\def\yoffset{8mm}
		\node[initial,state,q1class] (L) {\treeNodeLabel{$L$}};
		\node[state,q2class] (U) [right of=0] {\treeNodeLabel{$U$}};
		\node[initial,state,q1class] (0) [right of=L, xshift=5cm]  {\treeNodeLabel{$0$}};
		\node[state,q2class] (1) [below left of=0,xshift=-1cm] {\treeNodeLabel{$1$}};
		\node[state,q2class] (2) [below left of=1] {\treeNodeLabel{$2$}};
		\node[state,q2class] (3) [below left of=2] {\treeNodeLabel{$3$}};
		\node[state,q1class] (4) [below of=3] {\treeNodeLabel{$4$}};
		\node[state,q1class] (5) [below right of=2,] {\treeNodeLabel{$5$}};
		\node[state,q1class] (6) [below of=5] {\treeNodeLabel{$6$}};
		\node[state,q1class] (7) [below right of=1] {\treeNodeLabel{$7$}};
		\node[state,q1class] (8) [below of=7] {\treeNodeLabel{$8$}};
		\node[state,q1class] (9) [below of=8] {\treeNodeLabel{$9$}};
		\node[state,q1class] (10) [below right of=0,xshift=1cm] {\treeNodeLabel{$10$}};
		\node[state,q2class] (11) [below left of=10] {\treeNodeLabel{$11$}};
		\node[state,q1class] (12) [below of=11] {\treeNodeLabel{$12$}};
		\node[state,q2class] (13) [below of=12] {\treeNodeLabel{$13$}};
		\node[state,q1class] (14) [right of=13] {\treeNodeLabel{$14$}};		
		\node[state,q1class] (15) [below right of=10] {\treeNodeLabel{$15$}};
		\node[state,q1class] (16) [below of=15] {\treeNodeLabel{$16$}};
		
		\path[every node/.style={font=\sffamily\scriptsize}]
		(L) edge [bend left] node[sloped,above] {$c/N$} (U)
		edge [loop below] node[left]{$p/L$} (L)
		(U) edge [bend left] node[below] {$p/F$} (L)
		edge [loop below] node[right] {$c/N$} (U)
		(0) edge node {$c/N$} (1)
		edge node {$p/L$} (10)
		(1) edge node {$c/N$} (2)
		edge node {$p/F$} (7)
		(2) edge node {$c/N$} (3)
		edge node {$p/F$} (5)
		(3) edge node[left] {$p/F$} (4)
		(5) edge node[left] {$p/L$} (6)
		(7) edge node {$p/L$} (8)
		(8) edge node {$p/L$} (9)
		(10) edge node {$c/N$} (11)
		edge node {$p/L$} (15)
		(11) edge node {$p/F$} (12)
		(12) edge node {$c/N$} (13)
		(13) edge node {$p/F$} (14)
		(15) edge node {$p/L$} (16)
		;
		\end{tikzpicture}
		\caption{A Mealy machine $\Spec$ (left) and an observation tree $\Obs$ for $\Spec$ (right).}
		\label{Fig:turnstile}
	\end{center}
\end{figure}
\end{example}

\subsection{Test Suites}
We recall some basic vocabulary of conformance testing for Mealy machines.

\begin{definition}[Test suites]
	Let $\Spec$ be a Mealy machine.
	A sequence $\sigma \in I^*$ with $\delta^{\Spec}(q_0, \sigma) \converges$ is called a \emph{test case} (or simply a \emph{test}) for $\Spec$.
	A \emph{test suite} $T$ for $\Spec$ is a finite set of tests for $\Spec$.
	A Mealy machine $\M$ \emph{passes} test $\sigma$ for $\Spec$ iff
	$\lambda^{\M} (q_0^{\M}, \sigma) = \lambda^{\Spec} (q_0^{\Spec}, \sigma)$, and
	passes  test suite $T$ for $\Spec$ iff it passes all tests in $T$.
\end{definition}
Observe that when $\M$ passes a test $\sigma$, it also passes all prefixes of $\sigma$.  This means that only the maximal tests from a test suite $T$ (tests that are not a proper prefix of another test in $T$) need to be executed to determine whether $\M$ passes $T$.
Also note that when $\M$ passes a test $\sigma$, we may conclude $\delta^{\M}(q_0^{\M}, \sigma) \converges$.

We like to think of test suites as observation trees. 
Thus, for instance, the test suite $T = \{ cccp, ccpp, cppp, pcpcp, ppp \}$ for the Mealy machine $\Spec$  of Figure~\ref{Fig:turnstile}(left) corresponds to the observation tree of Figure~\ref{Fig:turnstile}(right).
The definition below describes the general procedure for constructing a testing tree for a given test suite $T$ for a specification $\Spec$.  The states of the testing tree are simply all the prefixes of tests in $T$.  Since $T$ may be empty but a tree needs to have at least one state, we require that the empty sequence $\epsilon$ is a state.

\begin{definition}[Testing tree]
	\label{def: testing tree}
	Suppose $T$ is a test suite for a Mealy machine $\Spec$.
	Then the \emph{testing tree} $\mathsf{Tree}(\Spec, T)$ is the observation tree $\Obs$ given by:
	\begin{itemize}
		\item 
		$Q^{\Obs} = \{\epsilon\} \cup \mathit{Pref}(T)$
		and
		$q_0^{\Obs} = \epsilon$,
		\item 
		For all $\sigma \in I^*$ and $i \in I$ with $\sigma i \in Q^{\Obs}$, $\delta^{\Obs}(\sigma, i) = \sigma i$,
		\item 
		For all $\sigma \in I^*$ and $i \in I$ with $\sigma i \in Q^{\Obs}$, $\lambda^{\Obs}(\sigma, i) = \lambda^{\Spec}( \delta^{\Spec} (q_0^{\Spec}, \sigma), i)$.
	\end{itemize}
\end{definition}

There is a functional simulation from a testing tree to the specification that was used during its construction.

\begin{lemma}
	\label{lemma testing tree}
	 The function $f$ that maps each state $\sigma$ of $\Obs = \mathsf{Tree}(\Spec, T)$ to the state $\delta^{\Spec} (q_0^{\Spec}, \sigma)$ of $\Spec$ is a functional simulation.
\end{lemma}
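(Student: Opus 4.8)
The plan is to verify directly that the proposed map $f(\sigma) = \delta^{\Spec}(q_0^{\Spec}, \sigma)$ satisfies the two defining conditions of a functional simulation from \autoref{def refinement}. First I would observe that $f$ is well-defined on $Q^{\Obs}$: every state $\sigma$ of the testing tree lies in $\{\epsilon\} \cup \mathit{Pref}(T)$, and since each element of $\mathit{Pref}(T)$ is a prefix of some test for $\Spec$, we have $\delta^{\Spec}(q_0^{\Spec}, \sigma)\converges$ by definition of a test case (and trivially for $\epsilon$), so $f(\sigma)$ is a genuine state of $\Spec$.

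Next I would check the initial-state condition. Since $q_0^{\Obs} = \epsilon$ and $\delta^{\Spec}(q_0^{\Spec}, \epsilon) = q_0^{\Spec}$, we immediately get $f(q_0^{\Obs}) = q_0^{\Spec}$.

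The main work is verifying the transition and output conditions. Fix a state $\sigma \in Q^{\Obs}$ and an input $i \in I$ with $\delta^{\Obs}(\sigma, i)\converges$. By \autoref{def: testing tree}, $\delta^{\Obs}(\sigma, i)\converges$ exactly when $\sigma i \in Q^{\Obs}$, and in that case $\delta^{\Obs}(\sigma, i) = \sigma i$. For the transition condition I must show $f(\delta^{\Obs}(\sigma,i)) = \delta^{\Spec}(f(\sigma), i)$; unfolding $f$, the left side is $\delta^{\Spec}(q_0^{\Spec}, \sigma i)$ and the right side is $\delta^{\Spec}(\delta^{\Spec}(q_0^{\Spec}, \sigma), i)$, and these agree by the standard lifting of $\delta^{\Spec}$ to sequences (compositionality of the transition function). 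For the output condition I must show $\lambda^{\Obs}(\sigma, i) = \lambda^{\Spec}(f(\sigma), i)$; but the testing-tree definition sets $\lambda^{\Obs}(\sigma, i) = \lambda^{\Spec}(\delta^{\Spec}(q_0^{\Spec}, \sigma), i) = \lambda^{\Spec}(f(\sigma), i)$, which is precisely the required equality essentially by construction.

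I expect no genuine obstacle here, since the testing tree was defined so that its output function reads off exactly the value $f$ prescribes; the statement is really a sanity check that \autoref{def: testing tree} is consistent. The one point deserving care is the bookkeeping around partiality: I should confirm that the simulation conditions are only required when $\delta^{\Obs}(\sigma, i)\converges$, and that in this case $\delta^{\Spec}(f(\sigma), i)\converges$ as well (which follows since $\sigma i$ is also a prefix of a test, hence $\delta^{\Spec}(q_0^{\Spec}, \sigma i)\converges$, forcing $\delta^{\Spec}(\delta^{\Spec}(q_0^{\Spec},\sigma), i)\converges$). Once the partiality side-conditions are lined up, both displayed equalities reduce to the compositional lifting of $\delta^{\Spec}$ and the verbatim definition of $\lambda^{\Obs}$, completing the proof.
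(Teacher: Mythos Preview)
Your proposal is correct and follows essentially the same approach as the paper: verify the initial-state condition directly, use compositionality of $\delta^{\Spec}$ on sequences (the paper's Lemma~\ref{la lifted functions}) for the transition clause, and read the output clause off the definition of $\lambda^{\Obs}$. Your additional remarks on well-definedness and partiality are sound and merely make explicit what the paper leaves implicit.
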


The next lemma, which follows from the definitions, illustrates the usefulness of testing trees: a Mealy machine $\M$ passes a test suite $T$ for $\Spec$ iff there exists a functional simulation from $\mathsf{Tree}(\Spec, T)$ to $\M$.

\begin{lemma}
	\label{observation tree when test suite passes}
		Suppose $\Spec$ and $\M$ are Mealy machines, $T$ is a test suite for $\Spec$, and
		$\Obs = \mathsf{Tree}(\Spec, T)$. Suppose function $f$ maps each state $\sigma$ of $\Obs$ to state $\delta^{\M} (q_0^{\M}, \sigma)$ of $\M$.
		Then $f \colon \Obs \to \M$ iff $\M$ passes $T$.
\end{lemma}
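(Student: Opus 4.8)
The plan is to prove the two implications separately, the main technical tool being that a functional simulation preserves transitions and outputs along entire input sequences, not just single inputs. Concretely, I would first record the following consequence of \ref{def refinement} by induction on $|\sigma|$: if $g\colon \M \to \N$ and $\lambda^{\M}(q,\sigma)\converges$, then $\delta^{\N}(g(q),\sigma) = g(\delta^{\M}(q,\sigma))$ and $\lambda^{\N}(g(q),\sigma) = \lambda^{\M}(q,\sigma)$. The base case $\sigma = \epsilon$ is immediate, and the inductive step peels off the first input using the single-step clauses of the definition. Applied to the functional simulation of \ref{lemma testing tree}, this yields $\lambda^{\Obs}(\epsilon,\sigma) = \lambda^{\Spec}(q_0^{\Spec},\sigma)$ for every state $\sigma$ of $\Obs$, an identity I will reuse in both directions.

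For the direction ($\Rightarrow$), assume $f\colon\Obs\to\M$. Fix a test $\sigma \in T$; then $\sigma \in Q^{\Obs}$ and $\lambda^{\Obs}(\epsilon,\sigma)\converges$, since every prefix of $\sigma$ is a state of the tree. The preservation property applied to $f$, together with $f(\epsilon)=q_0^{\M}$, gives $\lambda^{\M}(q_0^{\M},\sigma) = \lambda^{\Obs}(\epsilon,\sigma)$, and the identity above gives $\lambda^{\Obs}(\epsilon,\sigma) = \lambda^{\Spec}(q_0^{\Spec},\sigma)$. Chaining these two equalities shows that $\M$ passes $\sigma$; as $\sigma$ was arbitrary, $\M$ passes $T$.

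For the direction ($\Leftarrow$), assume $\M$ passes $T$. The first task is to check that $f$ is well-defined as a total function $Q^{\Obs}\to Q^{\M}$: every state of $\Obs$ is $\epsilon$ or a prefix of some test $\tau\in T$, and since passing $\tau$ yields $\delta^{\M}(q_0^{\M},\tau)\converges$, definedness propagates to every prefix, so $f(\sigma) = \delta^{\M}(q_0^{\M},\sigma)\converges$ throughout. It then remains to verify the two simulation clauses. The initial-state clause is $f(\epsilon) = \delta^{\M}(q_0^{\M},\epsilon) = q_0^{\M}$. For a state $\sigma$ and input $i$ with $\delta^{\Obs}(\sigma,i)\converges$, \ref{def: testing tree} gives $\sigma i \in Q^{\Obs}$ and $\delta^{\Obs}(\sigma,i) = \sigma i$, so the transition clause $f(\sigma i) = \delta^{\M}(f(\sigma),i)$ is just the lifting $\delta^{\M}(q_0^{\M},\sigma i) = \delta^{\M}(\delta^{\M}(q_0^{\M},\sigma),i)$.

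The hard part is the remaining output clause $\lambda^{\Obs}(\sigma,i) = \lambda^{\M}(f(\sigma),i)$, because the testing tree records only the last output symbol along each edge while the passing condition is phrased for whole output words. I would resolve this by noting $\sigma i \in \mathit{Pref}(T)$, so $\M$ passes $\sigma i$ (passing is inherited by prefixes), that is $\lambda^{\M}(q_0^{\M},\sigma i) = \lambda^{\Spec}(q_0^{\Spec},\sigma i)$. Reading off the final symbol of each side, namely $\lambda^{\M}(\delta^{\M}(q_0^{\M},\sigma),i)$ on the left and $\lambda^{\Spec}(\delta^{\Spec}(q_0^{\Spec},\sigma),i)$ on the right, and comparing with the definition $\lambda^{\Obs}(\sigma,i) = \lambda^{\Spec}(\delta^{\Spec}(q_0^{\Spec},\sigma),i)$, gives exactly $\lambda^{\Obs}(\sigma,i) = \lambda^{\M}(f(\sigma),i)$. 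This establishes $f\colon\Obs\to\M$ and completes the equivalence.
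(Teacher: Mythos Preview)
Your proof is correct and follows essentially the same route as the paper: both directions hinge on the preservation of transitions and outputs along words by functional simulations (the paper's Lemma~\ref{la:refinement lifted to sigma}) together with the decomposition $\lambda(q,\sigma i)=\lambda(q,\sigma)\cdot\lambda(\delta(q,\sigma),i)$ (Lemma~\ref{la lifted functions}), used exactly as you describe to peel off the last output symbol. Your explicit check that $f$ is total on $Q^{\Obs}$ is a small piece of extra care the paper leaves implicit, but otherwise the arguments coincide.
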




\section{Fault Domains and Test Suite Completeness}
\label{test suite completeness}

A \emph{fault domain} reflects the tester's assumptions about faults that may occur in an implementation and that need to be detected during testing.

\begin{definition}[Fault domains and $\U$-completeness]
	A \emph{fault domain} is a set $\U$ of Mealy machines.
	A test suite $T$ for a Mealy machine $\Spec$ is \emph{$\U$-complete} if, for each $\M \in \U$,
	$\M$ passes $T$ implies $\M \approx \Spec$.
\end{definition}

The next three lemmas are immediate consequences of the above definition.
In particular, whenever $T$ is $\U$-complete, completeness is preserved when we add tests to $T$ or remove machines from $\U$.

\begin{lemma}
\label{extending test suite preserves completeness}
If $T$ and $T'$ are test suites with $T \subseteq T'$ and $T$ is $\U$-complete,
then $T'$ is $\U$-complete.
\end{lemma}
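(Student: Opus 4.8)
The plan is to argue directly from the definition of $\U$-completeness, exploiting the monotonicity of the ``passes'' relation in its test-suite argument. The essential observation is that passing a larger test suite is a stronger condition than passing a smaller one: since, by definition, a Mealy machine passes a test suite exactly when it passes every individual test in it, the inclusion $T \subseteq T'$ guarantees that any machine passing $T'$ automatically passes $T$.

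Concretely, I would fix an arbitrary $\M \in \U$ and assume $\M$ passes $T'$. First I would note that, because $T \subseteq T'$, every test of $T$ is also a test of $T'$; hence $\M$ passes all tests in $T$ and therefore passes $T$. Then, invoking the hypothesis that $T$ is $\U$-complete, I conclude $\M \approx \Spec$. Since $\M$ was an arbitrary element of $\U$, this shows that $\M$ passes $T'$ implies $\M \approx \Spec$ for every $\M \in \U$, which is precisely the definition of $\U$-completeness of $T'$.

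There is no genuine obstacle here: the statement is a straightforward monotonicity property whose proof is a single deduction chaining the inclusion with the definition of completeness. The only point worth flagging is that the claim presupposes $T'$ to be itself a test suite for $\Spec$ (so that ``$T'$ is $\U$-complete'' is even meaningful); this is already part of the hypothesis, so no additional well-formedness check is required.
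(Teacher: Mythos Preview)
Your proposal is correct and matches the paper's treatment: the paper does not give an explicit proof but simply notes that this lemma is an immediate consequence of the definition of $\U$-completeness, which is exactly the direct monotonicity argument you spell out.
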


\begin{lemma}
	\label{completeness inclusion}
	If $\U$ and $\U'$ are fault domains with $\U' \subseteq \U$, and $T$ is a $\U$-complete test suite, then $T$ is $\U'$-complete.
\end{lemma}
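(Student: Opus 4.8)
The plan is to unfold the definition of $\U'$-completeness directly. To show that $T$ is $\U'$-complete, I would fix an arbitrary Mealy machine $\M \in \U'$ and assume that $\M$ passes $T$; the goal is then to establish $\M \approx \Spec$.

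The key (and only) nontrivial observation is the use of the inclusion $\U' \subseteq \U$: since $\M \in \U'$, we immediately obtain $\M \in \U$. At this point I invoke the hypothesis that $T$ is $\U$-complete, which by definition says that for every machine in $\U$ --- and in particular for our $\M$ --- passing $T$ implies equivalence to $\Spec$. Instantiating this universally quantified statement at $\M$, together with our assumption that $\M$ passes $T$, yields $\M \approx \Spec$. Since $\M$ was an arbitrary element of $\U'$, this establishes that $T$ is $\U'$-complete.

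I do not expect any real obstacle here: the statement is a direct logical consequence of the definition of $\U$-completeness combined with the monotonicity of a universal quantifier under shrinking its range from $\U$ to the smaller set $\U'$. No structural properties of Mealy machines, testing trees, or the passing relation beyond the definition of completeness itself are required, and indeed the result holds for completely arbitrary fault domains.
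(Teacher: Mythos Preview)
Your proposal is correct and matches the paper's treatment: the paper does not give an explicit proof but simply notes that this lemma is an immediate consequence of the definition of $\U$-completeness, which is exactly the unfolding you carry out.
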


\begin{lemma}
	\label{completeness union}
	If a test suite for $\Spec$ is both $\U$-complete and $\U'$-complete,
	then it is $\U \cup \U'$-complete.
\end{lemma}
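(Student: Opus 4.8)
The plan is to unfold the definition of $\U \cup \U'$-completeness directly and reduce it to the two hypotheses by a case distinction on set membership. Concretely, I would fix a test suite $T$ for $\Spec$ and assume as hypotheses that $T$ is $\U$-complete and that $T$ is $\U'$-complete. To establish $\U \cup \U'$-completeness, I must show that for every Mealy machine $\M \in \U \cup \U'$, if $\M$ passes $T$ then $\M \approx \Spec$.

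The core of the argument is a single case split. First I would take an arbitrary $\M \in \U \cup \U'$ and assume $\M$ passes $T$. By definition of set union, either $\M \in \U$ or $\M \in \U'$. In the first case, $\U$-completeness of $T$ applied to $\M$ yields $\M \approx \Spec$; in the second case, $\U'$-completeness of $T$ applied to $\M$ yields the same conclusion. Since both cases deliver $\M \approx \Spec$, and $\M$ was arbitrary, $T$ satisfies the defining property of $\U \cup \U'$-completeness.

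There is essentially no obstacle here: the statement is an immediate consequence of the definition, and the only step worth spelling out is that the implication ``passes $T$ implies $\M \approx \Spec$'' is required to hold \emph{pointwise} for each machine, so it suffices to verify it separately on each half of the union. I would keep the write-up to a couple of sentences, as no nontrivial property of Mealy machines, passing, or equivalence is needed beyond what the definitions already provide.
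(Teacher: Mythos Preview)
Your proposal is correct and matches the paper's treatment: the paper declares this lemma an immediate consequence of the definition of $\U$-completeness, and your case split on $\M \in \U$ versus $\M \in \U'$ is exactly that immediate argument made explicit.
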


A particular class of fault domains that has been widely studied is based on the maximal number of states of implementations.  

\begin{definition}
	Let $m>0$. Then fault domain $\U_m$ is the set of all 
	Mealy machines with at most $m$ states.
\end{definition}
	 
In the literature, $\U_m$-complete test suites are usually called \emph{$m$-complete}.
Suppose $m \geq n$, where $n$ is the number of states of a specification $\Spec$. 
Given that the size of $m$-complete test suites grows exponentially in $m-n$, a tester cannot possibly consider all Mealy machines with up to $m$ states, if $m-n$ is large. Therefore, we will propose alternative and much larger fault domains that can still be fully explored during testing.

We consider fault domains of Mealy machines in which all states can be reached by first performing an access sequence from a set $A$ (typically a state cover of specification $\Spec$), followed by $k$ arbitrary inputs, for some small $k$.  These fault domains capture the tester's assumption that when bugs in the implementation introduce extra states, these extra states can be reached via a few transitions from states reachable via scenarios from $A$.
The next definition formally introduces the corresponding fault domains $\U^A_k$.
A somewhat similar notion was previously proposed by Maarse \cite{Maarse20}, in a specific context with action refinement.

\begin{definition}
Let $k$ be a natural number and let $A \subseteq I^*$.
Then fault domain $\U^A_k$ is the set of all Mealy machines $\M$
such that every state of $\M$ can be reached by an input sequence $\sigma \rho$, for some $\sigma \in A$ and $\rho \in I^{\leq k}$.
\end{definition}

\begin{example}
	\label{ex:SPYH}
	Mealy machine $\M$ from Figure~\ref{Fig:turnstileCE} is contained in fault domain
	$\U_1^A$, for  $A = \{ \epsilon, c \}$, since all states of $\M$ can be reached via at most one transition from the two states $L'$ and $U'$ that are reachable via $A$.
	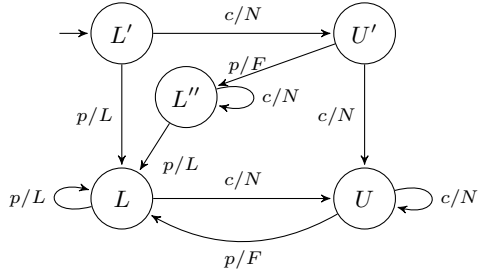
\begin{figure}[htb!]
		\begin{center}
			\begin{tikzpicture}[->,>=stealth',shorten >=1pt,auto,node distance=1.2cm,main node/.style={circle,draw,font=\sffamily\large\bfseries},
			]
			\def\yoffset{8mm}
			\node[initial,state,frontier] (L) {\treeNodeLabel{$L'$}};
			\node[state,frontier] (U) [right of=L,xshift=2cm] {\treeNodeLabel{$U'$}};
			\node[state,frontier] (Lp) [below of=L,yshift=-1cm] {\treeNodeLabel{$L$}};
			\node[state,frontier] (Lpp) [below right of=L] {\treeNodeLabel{$L''$}};
			\node[state,frontier] (Up) [below of=U,yshift=-1cm] {\treeNodeLabel{$U$}};
			
			\path[every node/.style={font=\sffamily\scriptsize}]
			(L) edge node {$c/N$} (U)
			edge node[left]{$p/L$} (Lp)
			(U) edge node[left] {$p/F$} (Lpp)
			edge node[left] {$c/N$} (Up)
			(Lp) edge node {$c/N$} (Up)
			edge [loop left] node {$p/L$} (Lp)
			(Up) edge [bend left] node {$p/F$} (Lp)
			edge [loop right] node[right] {$c/N$} (Up)
			(Lpp) edge [loop right] node {$c/N$} (Lpp)
			edge node {$p/L$} (Lp)
			;
			\end{tikzpicture}
			\caption{A Mealy machine $\M$ contained in fault domain $\U_1^A$, with $A = \{ \epsilon, c \}$.}
			\label{Fig:turnstileCE}
		\end{center}
	\end{figure}
\end{example}

\begin{remark}
The definition of $\U_k^A$ is closely related to the fundamental concept of \emph{eccentricity} known from graph theory \cite{DL2017}.
Consider a directed graph $G = (V, E)$.
For vertices $w, v \in V$, let $d(w,v)$ be the length of a shortest path from $w$ to $v$, or $\infty$ if no such a path exists.
The \emph{eccentricity} $\epsilon(w)$ of a vertex $w \in V$ is 
the maximum distance from $w$ to any other vertex:
\begin{eqnarray*}
	\epsilon(w) & = & \max_{v \in V} d(w,v).
\end{eqnarray*}
This definition generalizes naturally to subsets of vertices.
The distance from a set $W \subseteq V$ of vertices to a vertex $v \in V$, is the length of a shortest path from some vertex in $W$ to $v$, or $\infty$ if no such path exists.
The \emph{eccentricity} $\epsilon(W)$ of a set of vertices $W$ is
the maximum distance from $W$ to any other vertex:
\begin{eqnarray*}
	d(W, v)  =  \min_{w \in W}  d(w,v)
& ~~~~~~~~ &
\epsilon(W)  =  \max_{v \in V} d(W,v).
\end{eqnarray*}
We view Mealy machines as directed graphs in the obvious way.
In the example of Figure~\ref{Fig:turnstileCE}, the eccentricity of initial state $L'$ is $2$ (since every state of $\M$ can be reached with at most 2 transitions from $L'$) and the eccentricity of state $U'$ is $\infty$ (since there is no path from $U'$ to $L'$).
The eccentricity of $\{ L', U' \}$ is $1$, since state $L$ can be reached with a single transition from $L'$, and states $L''$ and $U$ can be reached with a single transition from $U'$.

Fault domain $\U^A_k$ can alternatively be defined as the set of Mealy machines $\M$ for which the eccentricity of the set of states reachable via $A$ is at most $k$. 
Note that, for a set of vertices $W$, $\epsilon(W)$ can be computed in linear time by contracting all elements of $W$ to a single vertex $w$, followed by a breadth-first search from $w$.
\end{remark}

The Mealy machine of Figure~\ref{Fig:turnstileCE}, which is contained in fault domain $\U_1^A$, has two states ($L'$ and $U'$) that are reached via a sequence from $A$, and three extra states that can be reached via a single transition from these two states.
More generally, if $A$ is a prefix closed set with $n$ sequences and the set of inputs $I$ contains $l$ elements, then at most $n$ states can be reached via sequence from $A$, and at most $nl - n + 1$ additional states can be reached via a single transition from states already reached by $A$.  A second step from $A$ may lead to $l (nl - n + 1)$ extra states, etc.
This leads us to the following proposition.
	
\begin{proposition}
	\label{number of states}
	Let $A \subset I^*$ be prefix closed with $|A| = n$, let $|I|=l$ and $k>0$. Then fault domain $\U_k^A$ contains Mealy machines with up to $n + (\sum_{j=0}^{k-1} l^j)(nl -n +1)$ states. 
\end{proposition}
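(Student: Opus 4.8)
The plan is to exploit the eccentricity reformulation recorded in the preceding remark: a machine $\M$ lies in $\U_k^A$ exactly when, viewing $\M$ as a directed graph, every state lies at graph-distance at most $k$ from the set $R_0$ of states reachable via sequences in $A$. I would therefore organise the states of an arbitrary $\M \in \U_k^A$ into breadth-first layers according to their distance from $R_0$: for $0 \le j \le k$, let $N_j$ be the number of states whose shortest distance from $R_0$ equals exactly $j$. Since every state is reachable via some $\sigma\rho$ with $\sigma \in A$ and $\rho \in I^{\le k}$, each state falls into exactly one layer with index $j \le k$, so that $|Q^{\M}| = \sum_{j=0}^{k} N_j$, and the whole argument reduces to bounding the individual layer sizes.

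The base layer is immediate: $N_0 = |R_0| \le |A| = n$, since each of the $n$ sequences in $A$ reaches at most one state. The decisive layer is $N_1$, and this is where prefix-closedness of $A$ enters. Because $A$ is finite and prefix-closed, its $n$ sequences form a tree under the prefix relation with exactly $n-1$ edges; each edge $(\sigma,\sigma i)$ induces a transition $\delta^{\M}(q_0^{\M},\sigma) \xrightarrow{i} \delta^{\M}(q_0^{\M},\sigma i)$ whose source and target both lie in $R_0$. When the $n$ sequences reach $n$ \emph{distinct} states, these $n-1$ tree edges yield $n-1$ distinct outgoing transitions internal to $R_0$; as $R_0$ emits at most $nl$ transitions in total, at most $nl-(n-1)=nl-n+1$ of them can leave $R_0$, and every distance-$1$ state is the target of such an exiting transition, so $N_1 \le nl - n + 1$. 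For the deeper layers I would use the elementary inequality $N_j \le l\,N_{j-1}$ for $j \ge 2$ (every distance-$j$ state has a predecessor in layer $j-1$, and each of the $N_{j-1}$ states there emits at most $l$ transitions). Unfolding the recursion gives $N_j \le l^{\,j-1}(nl-n+1)$ for every $j \ge 1$, and summing yields
\[
  |Q^{\M}| \;\le\; n + \sum_{j=1}^{k} l^{\,j-1}(nl-n+1) \;=\; n + \Bigl(\textstyle\sum_{j=0}^{k-1} l^{\,j}\Bigr)(nl-n+1),
\]
as required; partial transition functions only lower these counts, so incompleteness does not affect the bound.

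The one step that needs genuine care — and which I expect to be the main obstacle — is the estimate for $N_1$ in the degenerate situation where several sequences of $A$ collapse onto the same state, so that $a_0 := |R_0| < n$. Then $R_0$ offers only $a_0 l$ outgoing transitions, and the $n-1$ tree edges no longer obviously occupy $n-1$ distinct ones. I would resolve this by observing that the image of the prefix tree under $\sigma \mapsto \delta^{\M}(q_0^{\M},\sigma)$ is a subgraph spanning all $a_0$ states of $R_0$ and reachable from $q_0^{\M}$, hence realises at least $a_0-1$ distinct internal outgoing transitions (each input-labelled transition determines its target, so distinct spanning edges force distinct transitions). Substituting this refined count gives $N_1 \le a_0 l - (a_0 - 1)$, and feeding $a_0$ through the layer recursion shows the total is a nondecreasing function of $a_0$ because $l \ge 1$, so it is maximised at $a_0 = n$, recovering exactly the stated bound. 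Finally, to justify the phrase \emph{up to}, I would exhibit a witnessing machine attaining equality: take $A$ reaching $n$ distinct states, route each of the $nl-n+1$ non-tree transitions from $R_0$ to a fresh state, and let every state in layers $1,\dots,k-1$ branch on all $l$ inputs to fresh states; this yields a machine in $\U_k^A$ with precisely $n + (\sum_{j=0}^{k-1} l^{\,j})(nl-n+1)$ states.
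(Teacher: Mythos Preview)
Your argument is correct and follows precisely the layer-by-layer counting that the paper sketches informally in the paragraph preceding the proposition (which is in fact the only justification the paper offers; no formal proof appears in the appendix). You go further than the paper in two respects: you treat the degenerate case $|R_0|<n$ carefully and show the bound is monotone in $a_0$, and you exhibit an explicit witness attaining the bound---both of which the paper leaves implicit.
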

	
This observation implies that the number of states of Mealy machines in $\U_k^A$ grows exponentially in $k$ when there are at least $2$ inputs. 
Even for small values of $k$, the number of states may increase appreciably.
Consider, for instance, the Mealy machine model for the Free BSD 10.2 TCP server that was obtained through black-box learning by Fiter\u{a}u-Bro\c{s}tean et al \cite{FJV16:long}. This model has 55 states and 13 inputs, so if $A$ is a minimal state cover, then fault domain $\U_2^A$ contains Mealy machines with up to 9309 states. 

\vspace{0.5em}
Even though the size of Mealy machines in $\U_k^A$ grows exponentially in $k$, fault domain $\U_m$ is not contained in fault domain $\U_k^A$ if $m = |A|+k$.  
For instance, the machine of Figure~\ref{Fig:turnstile} has two states and is therefore contained in $\U_2$.  However, this machine is not contained in $\U_0^A$ if we take $A = \{ \epsilon, p \}$.
We need to extend $\U_k^A$ in order to obtain a proper inclusion.
Suppose that $A$ is a minimal state cover for a minimal specification $\Spec$. Then states in $\Spec$ reached by sequences from $A$ will be pairwise inequivalent.
Methods for generating $m$-complete test suites typically first check whether states of implementation $\M$ reached by sequences from $A$ are also inequivalent. So these methods exclude any model $\M$ in which distinct sequences from $A$ reach equivalent states. This motivates the following definition:

\begin{definition}
	Let $A \subseteq I^*$. Then fault domain $\U^A$ is the set of all Mealy machines $\M$ such that there are  $\sigma, \rho \in A$ with $\sigma \neq \rho$ and $\delta^{\M}(q_0^{\M}, \sigma) \approx \delta^{\M}(q_0^{\M}, \rho)$.
\end{definition}

Note that $\U^A$ is infinite and contains Mealy machines with arbitrarily many states. 
Under reasonable assumptions, fault domain $\U_m$ is contained in fault domain $\U_k^A \cup \U^A$.

\begin{theorem}
	\label{theorem k-A larger than m}
	Let $A \subset I^*$ be a finite set of input sequences with $\epsilon \in A$. 
	Let $k$ and $m$ be natural numbers with $m = |A| + k$. 
	Then $\U_m \subseteq \U_k^A \cup \U^A$.
\end{theorem}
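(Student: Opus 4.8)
The plan is to fix an arbitrary $\M \in \U_m$ and argue by cases on whether $\M \in \U^A$. If $\M \in \U^A$ we are immediately done, so assume $\M \notin \U^A$; the goal then becomes to show $\M \in \U_k^A$. By the reformulation given in the Remark, $\M \in \U_k^A$ is equivalent to the statement that the eccentricity $\epsilon(R)$ of the set $R = \{ \delta^{\M}(q_0^{\M}, \sigma) \mid \sigma \in A \}$ of states reachable via $A$ is at most $k$. So the entire proof reduces to bounding this eccentricity.

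The first key step is to pin down $|R|$. Since $\M \notin \U^A$, for all distinct $\sigma, \rho \in A$ we have $\delta^{\M}(q_0^{\M}, \sigma) \not\approx \delta^{\M}(q_0^{\M}, \rho)$, and inequivalent states are in particular distinct (equal states are trivially equivalent); hence the map $\sigma \mapsto \delta^{\M}(q_0^{\M}, \sigma)$ is injective on $A$. Because implementations are complete, every $\sigma \in A$ is defined, so this map is total and therefore $|R| = |A|$. Moreover, since the empty sequence $\epsilon$ lies in $A$ we have $q_0^{\M} \in R$, and since $\M$ is initially connected every state is reachable from $q_0^{\M}$, hence from $R$, so $\epsilon(R)$ is finite.

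The second step is a layered counting argument. Partition $Q^{\M}$ by distance from $R$, writing $L_j$ for the set of states at distance exactly $j$ from $R$; then $L_0 = R$. A shortest path witnessing distance $d$ passes through a state at each of the distances $0,1,\dots,d$, so every layer $L_0, L_1, \dots, L_{\epsilon(R)}$ is nonempty, and these layers are pairwise disjoint. Counting at least one state in each of $L_1,\dots,L_{\epsilon(R)}$ on top of the $|R| = |A|$ states of $L_0$ yields $|Q^{\M}| \geq |A| + \epsilon(R)$. Since $|Q^{\M}| \leq m = |A| + k$, we conclude $\epsilon(R) \leq k$, i.e.\ every state of $\M$ is reached by some $\sigma\rho$ with $\sigma \in A$ and $\rho \in I^{\leq k}$, so $\M \in \U_k^A$. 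This closes the remaining case and the proof.

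I expect the main obstacle to be the bookkeeping around the identity $|R| = |A|$ rather than the counting itself. Injectivity of the access map follows cleanly from $\M \notin \U^A$ via ``inequivalent implies distinct'', but $|R| = |A|$ crucially also requires that every sequence in $A$ be defined in $\M$: if some access sequence were undefined, then $R$ would carry fewer than $|A|$ states and the inequality would only weaken to $|Q^{\M}| \geq |R| + \epsilon(R)$, which no longer forces $\epsilon(R) \leq k$. This is precisely the point where completeness of the implementation $\M$ is essential, and it is the step most easily overlooked; the BFS-layering fact that each intermediate distance layer is nonempty is routine.
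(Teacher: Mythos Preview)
Your proof is correct and follows the same overall strategy as the paper: case-split on membership in $\U^A$, use $\M \notin \U^A$ to conclude that the sequences in $A$ reach $|A|$ pairwise distinct states, then count to bound the distance from this set to any state by $k$. The only cosmetic difference is that you phrase the counting step via BFS layers and eccentricity (leveraging the Remark), whereas the paper picks, for each target state, a path from $q_0$ minimizing the number of visits to states outside the $A$-reachable set $B$ and observes that such a path visits at most $m-|A|=k$ states outside $B$; your worry about needing every $\sigma\in A$ to be defined in $\M$ is a point the paper's proof glosses over too, simply asserting $|B|=|A|$.
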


We refer to $\U_k^A \cup \U^A$-complete test suites as \emph{$k$-$A$-complete}.
By Theorem~\ref{theorem k-A larger than m} and Lemma~\ref{completeness inclusion}, any $k$-$A$-complete test suite is also $m$-complete, if $m = |A|+k$.
The converse of the inclusion of Theorem~\ref{theorem k-A larger than m} does not hold, as $\U^A$ may contain FSMs with an unbounded number of states.   
The next example illustrates that a test suite $T$ can be $m$-complete, but not $0$-$A$-complete for a state cover $A$ of the specification.

\begin{example}
The set $A = \{ \epsilon, a, aa \}$ is a minimal state cover of Mealy machine $\Spec$ at the left of Figure~\ref{Fig:UAkcomplete different from kAcomplete}.  Machine $\Spec$ is minimal and $aaa$ is a \emph{distinguishing sequence} since it generates different outputs for each of the three states of $\Spec$ .
Mealy machine $\Spec$ is trivially contained in fault domains $\U_3$ and $\U^A_0$,
whereas Mealy machine $\M$ at the right of Figure~\ref{Fig:UAkcomplete different from kAcomplete} is
clearly not contained in  $\U_3$ and $\U^A_0$.
\begin{figure}[ht!]
	\begin{center}
		\begin{tikzpicture}[->,>=stealth',shorten >=1pt,auto,node distance=1.8cm,main node/.style={circle,draw,font=\sffamily\large\bfseries},
		]
		\def\yoffset{8mm}
		\node[initial,state,frontier] (0) {$s_0$};
		\node[state,frontier] (1) [right of=0] {$s_1$};
		\node[state,frontier] (2) [right of=1] {$s_2$};
		
		\node[initial,state,frontier] [right of=2,xshift=0.5cm](q0) {$q_0$};
		\node[state,frontier] (q1) [right of=q0] {$q_1$};
		\node[state,frontier] (q2) [right of=q1] {$q_2$};
		\node[state,frontier] (q3) [above of=q1] {$q_3$};
		
		\path[every node/.style={font=\sffamily\scriptsize}]
		(0) edge [bend left] node[above] {$a/0$} (1)
			edge [bend right] node[below] {$b/0$} (1)
		(1) edge [bend left] node[above] {$a/0$} (2)
			edge [bend right] node[below] {$b/0$} (2)
		(2) edge [loop above] node[above] {$a/1$} (2)
			edge [loop below] node[below] {$b/1$} (2)
		(q0) edge  node[below] {$b/0$} (q1)
			edge  node[sloped,above] {$a/0$} (q3)	
		(q1) edge node[above] {$a/0$} (q2)
			edge [bend right] node[below] {$b/0$} (q2)
		(q2) edge [loop above] node[above] {$a/1$} (q2)
			edge [loop below] node[below] {$b/1$} (q2)
		(q3) edge [loop right] node[right] {$a/0$} (q3)	
			edge node[sloped,above] {$b/0$} (q2)
			;
		\end{tikzpicture}
		\caption{A specification $\cal S$ (left) and an inequivalent implementation $\M$ (right).}
		\label{Fig:UAkcomplete different from kAcomplete}
	\end{center}
\end{figure}
However, $\M$ is contained in fault domain $\U^A$ since $\delta^{\M}(q_0, a) = \delta^{\M}(q_0, aa)$.
Note that the set $B = \{ \epsilon, b, bb \}$ is also a minimal state cover for $\Spec$, and the sequence $bbb$ is also a distinguishing sequence.
Using the Wp-method, to be discussed in more detail in Section~\ref{sec:applications}, $0$-$B$-complete test suites can easily be built from a state cover and a distinguishing sequence.  In particular, the set 
$T  =  B \cdot \{ bbb \} ~~ \cup ~~ B \cdot \{ a, b \} \cdot \{ bbb \}$
is a $0$-$B$-complete test suite for $\Spec$, and therefore (by Theorem~\ref{theorem k-A larger than m}) also $3$-complete.
However, since $\M$ passes test suite $T$, $T$ is not $\U_0^A \cup \U^A$-complete, and thus not $0$-$A$-complete.		
\end{example}
%

Below we give an example to show that, for $k>0$, $m$-complete test suites generated by the SPYH-method \cite{SouchaB18} are not always $k$-$A$-complete, if $m = |A|+k$.
Appendix~\ref{sc: appendix}
contains variations of this example, which demonstrate that the SPY-method \cite{SPY12} and the H-method \cite{DorofeevaEY05} are not $k$-$A$-complete either.

\begin{example}
	\label{ex:SPYHcnt}
	Consider specification $\Spec$ and testing tree $\Obs$ from Figure~\ref{Fig:turnstile}. This specification and the corresponding test suite $T = \{ cccp, ccpp, cppp, pcpcp, ppp \}$ were both taken from \cite{SouchaB18}, where the SPYH-method was used to generate $T$, which was shown to be $3$-complete for $\Spec$. 
	Consider the minimal state cover $A = \{ \epsilon, c \}$ for $\Spec$.
    FSM $\M$ from Figure~\ref{Fig:turnstileCE} belongs to fault domain
    $\U_1^A$, since all states can be reached via at most one transition from $L'$ or $U'$.
    Clearly $\Spec \not\approx \M$, as input sequence $cpcp$ provides a counterexample.
	Nevertheless, $\M$ passes test suite $T$.
	Thus the test suite generated by the SPYH-method \cite{SouchaB18} is not $1$-$A$-complete.
\end{example}

\section{A Sufficient Condition for $k$-$A$-Completeness}
\label{sec: main result}

In this section, we describe a sufficient condition for a test suite to be $k$-$A$-complete, which (based on ideas of \cite{DorofeevaEY05,Vaandrager2024,VaandragerGRW22}) is phrased entirely in terms of properties of its testing tree. This tree should contain access sequences for each state in the specification, successors for these states for all possible inputs should be present up to depth $k+1$, and apartness relations between certain states of the tree should hold.
Before we present our condition and its correctness proof, we first need to introduce the concepts of 
apartness, basis and stratification, and study their basic properties.

\subsection{Apartness}

In our sufficient condition, the concept of \emph{apartness}, inspired by a similar notion that is standard in constructive real analysis  \cite{troelstra_schwichtenberg_2000,GJapartness},  plays a central role.  

\begin{definition}[Apartness]
	For a Mealy machine $\M$, we say that states $q,r\in Q^{\M}$ are \emph{apart}
	(written $q \apart r$) iff there is some $\sigma\in I^*$ such that
	$\semantics{q}(\sigma)\converges$, $\semantics{r}(\sigma)\converges$,
	and $\semantics{q}(\sigma) \neq \semantics{r}(\sigma)$.
	We say that $\sigma$ is a \emph{separating sequence} for $q$ and $r$. We also call $\sigma$ a \emph{witness} of $q\apart r$ and write $\sigma
	\vdash q\apart p$.
\end{definition}
Note that the apartness relation $\mathord{\apart}\subseteq Q\times Q$ is irreflexive and symmetric. 
For the observation tree of Figure~\ref{Fig:turnstile} we may derive the following apartness pairs and corresponding witnesses:
$p \vdash 0 \apart 1$ and $p \vdash 0 \apart 11$.
Observe that when two states are apart they are not equivalent, but states that are not equivalent are not necessarily apart.  States $0$ and $12$, for instance, are neither equivalent nor apart.
However, for complete Mealy machines apartness coincides with inequivalence.

The apartness of states $q\apart r$ expresses that there is a conflict in their
semantics, and consequently, apart states can never be identified by a
functional simulation.

\begin{lemma}
	\label{la: apartness refinement}
	For a functional simulation $f\colon \Obs\to \M$,
	\[
	q\apart r\text{ in }\Obs
	\qquad\Longrightarrow
	\qquad
	f(q)\apart f(r)\text{ in }\M \qquad\text{for all }q, r\in Q^{\Obs}.
	\]
\end{lemma}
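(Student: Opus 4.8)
The plan is to reduce the statement to one auxiliary fact: a functional simulation preserves the (partial) output behaviour of every state. Concretely, I would first establish, by induction on the length of $\sigma$, that for every $q \in Q^{\Obs}$ and every $\sigma \in I^*$,
\[
\semantics{q}^{\Obs}(\sigma) \converges \;\Longrightarrow\; \semantics{f(q)}^{\M}(\sigma) \converges \ \text{ and }\ \semantics{f(q)}^{\M}(\sigma) = \semantics{q}^{\Obs}(\sigma).
\]
This is the natural lifting to sequences of the one-step matching property noted right after \autoref{def refinement}, namely that every transition $q \xrightarrow{i/o} q'$ of $\Obs$ is matched by $f(q) \xrightarrow{i/o} f(q')$ in $\M$.

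For the base case $\sigma = \epsilon$ both sides equal $\epsilon$ and are defined. For the inductive step, write $\sigma = i\rho$ with $i = {\sf hd}(\sigma)$ and $\rho = {\sf tl}(\sigma)$. If $\semantics{q}^{\Obs}(i\rho)\converges$, then the lifting of $\delta^{\Obs}$ and $\lambda^{\Obs}$ to sequences gives $\delta^{\Obs}(q,i)\converges$, say $q' = \delta^{\Obs}(q,i)$, with $o = \lambda^{\Obs}(q,i)$ and $\semantics{q'}^{\Obs}(\rho)\converges$. The simulation clause yields $\delta^{\M}(f(q),i) = f(q')$ and $\lambda^{\M}(f(q),i) = o$, and the induction hypothesis applied to $q'$ and $\rho$ gives $\semantics{f(q')}^{\M}(\rho) = \semantics{q'}^{\Obs}(\rho)$, defined. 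Concatenating the first output symbol $o$ with these equal continuations shows $\semantics{f(q)}^{\M}(i\rho)$ is defined and equals $\semantics{q}^{\Obs}(i\rho)$.

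Granting this fact, the lemma is immediate. Let $\sigma$ witness $q \apart r$ in $\Obs$, so that $\semantics{q}^{\Obs}(\sigma)$ and $\semantics{r}^{\Obs}(\sigma)$ are both defined and differ. Applying the preservation fact to $q$ and to $r$ gives $\semantics{f(q)}^{\M}(\sigma) = \semantics{q}^{\Obs}(\sigma)$ and $\semantics{f(r)}^{\M}(\sigma) = \semantics{r}^{\Obs}(\sigma)$, both defined, whence $\semantics{f(q)}^{\M}(\sigma) \neq \semantics{f(r)}^{\M}(\sigma)$. Thus $\sigma$ witnesses $f(q) \apart f(r)$ in $\M$. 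The only point requiring care --- and the place I would be most careful when writing out the details --- is the partiality bookkeeping in the induction: one must use the defining property $\lambda(q,i)\converges \Leftrightarrow \delta(q,i)\converges$ together with the lifting of $\delta,\lambda$ to sequences to justify that definedness of $\semantics{q}^{\Obs}(\sigma)$ propagates to each intermediate transition, and to observe that the simulation clause constrains $\M$ precisely where $\Obs$ has a transition, which is exactly the direction needed here (no converse is required).
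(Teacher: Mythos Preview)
Your proposal is correct. The paper does not spell out a proof of this lemma, but it separately states and proves exactly your auxiliary fact as \autoref{la:refinement lifted to sigma} (``a functional simulation preserves the transition and output functions on words''), from which the lemma follows in the same two-line way you describe; so your approach coincides with the paper's intended argument, with the only difference that you inline the induction rather than quoting the existing lemma.
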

Thus, whenever states are apart in the observation tree $\Obs$, we
know that the corresponding states in Mealy machine $\M$ are distinct.
The apartness relation satisfies a weaker version of
\emph{co-transitivity}, stating that if $\sigma\vdash r\apart r'$ and $q$ has
the transitions for $\sigma$, then $q$ must be apart from at least one of $r$
and $r'$, or maybe even both.
\begin{lemma}[Weak co-transitivity]
	\label{la: weak co-transitivity}
	In every Mealy machine $\M$,
	\[
	\sigma \vdash r \apart r' ~\wedge~ \stateCan{q}{\sigma}  ~~\Longrightarrow~~
	r \apart q ~\vee~ r' \apart q
	\qquad\text{for all }r,r',q\in Q^{\M}, \sigma\in I^*.
	\]
\end{lemma}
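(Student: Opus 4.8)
The plan is to argue directly from the definition of apartness, exploiting the elementary fact that the output $\semantics{q}(\sigma)$ is a single sequence which cannot differ from \emph{neither} $\semantics{r}(\sigma)$ \emph{nor} $\semantics{r'}(\sigma)$ once those two outputs are known to disagree.

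First I would unpack the hypotheses. The witness $\sigma \vdash r \apart r'$ unfolds to $\semantics{r}(\sigma)\converges$, $\semantics{r'}(\sigma)\converges$, and $\semantics{r}(\sigma) \neq \semantics{r'}(\sigma)$. The assumption $\stateCan{q}{\sigma}$ states that $\delta(q,\sigma)\converges$; by the defining property of Mealy machines that $\lambda(q,i)\converges \Leftrightarrow \delta(q,i)\converges$, lifted to sequences, this yields $\semantics{q}(\sigma)\converges$. Hence $\semantics{q}(\sigma)$ is a well-defined output sequence.

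The key step is then the trivial set-theoretic observation: since $\semantics{r}(\sigma)$ and $\semantics{r'}(\sigma)$ are distinct, the single value $\semantics{q}(\sigma)$ can equal at most one of them, so $\semantics{q}(\sigma) \neq \semantics{r}(\sigma)$ or $\semantics{q}(\sigma) \neq \semantics{r'}(\sigma)$ (possibly both). In the first case all three outputs are defined and $\semantics{q}(\sigma)$ differs from $\semantics{r}(\sigma)$, so $\sigma \vdash q \apart r$; in the second case $\sigma \vdash q \apart r'$ symmetrically. Finally, invoking the symmetry of $\apart$ noted just above, $q \apart r$ gives $r \apart q$ and $q \apart r'$ gives $r' \apart q$, which establishes the required disjunction.

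There is essentially no obstacle here: the statement follows from the definitions together with the observation that a value equals at most one member of a two-element set of distinct elements. The only point needing the slightest care is confirming that $\semantics{q}(\sigma)$ is actually defined, which is precisely what the premise $\stateCan{q}{\sigma}$ supplies through the $\lambda\converges \Leftrightarrow \delta\converges$ condition; without a definedness assumption on $q$ the conclusion would fail, which is why the hypothesis is stated in terms of $\delta$ rather than merely requiring $\sigma$ to be a separating sequence.
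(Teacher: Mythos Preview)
Your argument is correct and is exactly the natural one: unfold the definition of $\sigma \vdash r \apart r'$, use $\delta(q,\sigma)\converges$ to obtain $\semantics{q}(\sigma)\converges$, and observe that a single value cannot agree with both members of a pair of distinct values. The paper in fact states this lemma without proof, treating it as an immediate consequence of the definitions, so your write-up is, if anything, more detailed than what the paper provides.
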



\subsection{Basis}

In each observation tree, we may identify a \emph{basis}: an ancestor closed set of states that are pairwise apart.  In general, a basis is not uniquely determined, and an observation tree may for instance have different bases with the same size. However, once we have fixed a basis, the remaining states in the tree can be uniquely partitioned by looking at their distance from this basis.

\begin{definition}[Basis]
	Let $\Obs$ be an observation tree. A nonempty subset of states $B \subseteq Q^{\Obs}$ is called a \emph{basis} of $\Obs$ if
	\begin{enumerate}
		\item 
	 $B$ is ancestor-closed: for all $q \in B : q \neq q_0^{\Obs} ~ \Rightarrow ~ \mathsf{parent}(q) \in B$, and
	 \item
	 states in $B$ are pairwise apart: for all $q, q'\in B : q \neq q' ~ \Rightarrow ~ q \apart q'$.
\end{enumerate}
For each state $q$ of $\Obs$, the \emph{candidate set} $C(q)$ is the set of basis states that are not apart from $q$: $C(q) ~=~ \{ q'\in B \mid \neg ( q \apart q') \}$.
State $q$ is \emph{identified} if $| C(q) | = 1$.
\end{definition}

Since $B$ is nonempty and ancestor-closed, all states on the access path of a basis state are in the basis as well. In particular, the initial state $q^{\Obs}_0$ is in the basis. Also note that, by definition, basis states are identified.
The next lemma gives some useful properties of a basis.

\begin{lemma}
	\label{obs tree minimal state cover}
	Suppose $\Obs$ is an observation tree for $\M$ with $f \colon \Obs \to \M$ and basis $B$ such that  $| B | = | Q^{\M} |$. Then $f$ restricted to $B$ is a bijection, $\M$ is minimal, and $\mathsf{access}(B)$ is a minimal state cover for $\M$.
\end{lemma}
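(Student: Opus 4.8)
The plan is to reduce everything to a single structural fact: the restriction $f|_B \colon B \to Q^{\M}$ is a bijection. First I would establish injectivity. Take $q, q' \in B$ with $q \neq q'$; since $B$ is a basis, $q \apart q'$, and by \autoref{la: apartness refinement} this gives $f(q) \apart f(q')$ in $\M$. As apartness is irreflexive, $f(q) \neq f(q')$, so $f|_B$ is injective. Because $\Obs$ is a Mealy machine its state set, and hence $B$, is finite, so the hypothesis $|B| = |Q^{\M}|$ upgrades this injection between finite sets of equal cardinality to a bijection. This bijectivity is essentially the whole content of the lemma; the three stated conclusions follow quickly.

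For minimality of $\M$, let $r, r' \in Q^{\M}$ with $r \neq r'$. Surjectivity of $f|_B$ yields $q, q' \in B$ with $f(q) = r$ and $f(q') = r'$, and injectivity forces $q \neq q'$. Then $q \apart q'$ in the basis, so $f(q) \apart f(q')$ by \autoref{la: apartness refinement}, that is $r \apart r'$; and apart states are inequivalent, so $r \not\approx r'$. Thus distinct states of $\M$ are never equivalent, which together with reflexivity of $\approx$ is exactly the condition of \autoref{def:equivalence} for $\M$ to be minimal. To see that $\mathsf{access}(B)$ is a state cover, I would first lift the single-step simulation condition of \autoref{def refinement} to input sequences by induction, obtaining that whenever $\delta^{\Obs}(q_0^{\Obs}, \sigma) \converges$ we have $\delta^{\M}(q_0^{\M}, \sigma) = f(\delta^{\Obs}(q_0^{\Obs}, \sigma))$. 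Applying this with $\sigma = \mathsf{access}(q)$ for $q \in B$ shows that $\mathsf{access}(q)$ reaches precisely $f(q)$ in $\M$, and surjectivity of $f|_B$ then shows every state of $\M$ is reached by some sequence in $\mathsf{access}(B)$.

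It remains to confirm that this cover is finite, prefix-closed, and minimal. Finiteness is immediate from finiteness of $B$. Prefix-closedness follows from ancestor-closedness: any prefix of $\mathsf{access}(q)$ for $q \in B$ is the access sequence of an ancestor of $q$ on its access path, and repeated application of ancestor-closedness places every such ancestor in $B$, so its access sequence lies in $\mathsf{access}(B)$. For minimality of the cover, note that in a tree $\mathsf{access}$ is injective, and in the deterministic machine $\M$ each input sequence reaches at most one state; combined with injectivity of $f|_B$, the assignment $\mathsf{access}(q) \mapsto f(q)$ is a bijection from $\mathsf{access}(B)$ onto $Q^{\M}$, so each state of $\M$ is reached by exactly one sequence of $\mathsf{access}(B)$. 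The only point requiring care — and the step I would flag as the main obstacle — is this bookkeeping between the two machines: one must lift the functional simulation from inputs to input sequences and keep straight that $\mathsf{access}$ is injective on the tree while reachability in $\M$ is not, so that counting arguments transfer cleanly. Once the bijection $f|_B$ is in hand, no genuinely hard step remains.
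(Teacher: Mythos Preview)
Your proposal is correct and follows essentially the same route as the paper: establish injectivity of $f|_B$ via apartness and \autoref{la: apartness refinement}, upgrade to a bijection by the cardinality hypothesis, push apartness forward to conclude minimality of $\M$, and lift the simulation to input sequences to see that $\mathsf{access}(B)$ is a minimal state cover. The paper packages the injectivity step and the sequence-lifting step as separate lemmas (\autoref{f restricted to B injective} and \autoref{la:refinement lifted to sigma}) which you inline, and you are slightly more explicit about checking finiteness and prefix-closedness of the cover, but there is no substantive difference.
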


Whenever a subset of states $B$ of a testing tree is a basis, the
corresponding test suite is $\U^A$-complete, for $A = \mathsf{access}(B)$.

\begin{lemma} \label{La:basis}
	Let $\Spec$ be a Mealy machine, let $T$ be a test suite for $\Spec$, let $B$ be a basis for $\Obs = \mathsf{Tree}(\Spec, T)$, and let $A = \mathsf{access}(B)$.
	Then $T$ is $\U^A$-complete.
\end{lemma}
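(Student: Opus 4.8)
The plan is to establish the statement \emph{vacuously}: I would show that no Mealy machine in $\U^A$ can pass $T$, so that the defining implication of $\U^A$-completeness, ``$\M$ passes $T \Rightarrow \M \approx \Spec$'', holds trivially for every $\M \in \U^A$.

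First I would take an arbitrary $\M \in \U^A$ and assume, towards a contradiction, that $\M$ passes $T$. By Lemma~\ref{observation tree when test suite passes}, the map $f$ sending each state $\sigma$ of $\Obs = \mathsf{Tree}(\Spec, T)$ to $\delta^{\M}(q_0^{\M}, \sigma)$ is then a functional simulation $f \colon \Obs \to \M$. I would note that in a testing tree every state coincides with its own access sequence, so that $f(q) = \delta^{\M}(q_0^{\M}, \mathsf{access}(q))$ for each $q \in Q^{\Obs}$; in particular the basis states of $B$ are sent to the states of $\M$ reached along the sequences of $A = \mathsf{access}(B)$.

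Next I would invoke membership $\M \in \U^A$: by definition there are distinct $\sigma, \rho \in A$ with $\delta^{\M}(q_0^{\M}, \sigma) \approx \delta^{\M}(q_0^{\M}, \rho)$. Choosing basis states $q, q' \in B$ with $\mathsf{access}(q) = \sigma$ and $\mathsf{access}(q') = \rho$, uniqueness of access sequences in a tree together with $\sigma \neq \rho$ forces $q \neq q'$, whence $q \apart q'$ because basis states are pairwise apart. Applying Lemma~\ref{la: apartness refinement} transports this apartness across $f$, yielding $\delta^{\M}(q_0^{\M}, \sigma) \apart \delta^{\M}(q_0^{\M}, \rho)$. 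Since apart states are never equivalent, this contradicts the equivalence just assumed, so $\M$ cannot pass $T$.

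I do not anticipate a real obstacle here. The only point deserving care is the bookkeeping that identifies a testing-tree state with its access sequence, ensuring the functional simulation of Lemma~\ref{observation tree when test suite passes} really maps the basis $B$ onto the $A$-reachable states of $\M$; once this is in place, the logical core---transporting pairwise apartness of $B$ into $\M$ via Lemma~\ref{la: apartness refinement} and using that apartness refutes equivalence---is immediate. The mild conceptual subtlety is simply recognizing that completeness holds vacuously rather than through a direct equivalence argument.
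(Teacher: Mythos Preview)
Your proposal is correct and follows essentially the same approach as the paper's proof: both obtain a functional simulation $f \colon \Obs \to \M$ from Lemma~\ref{observation tree when test suite passes}, use that distinct basis states are apart, transport this apartness to $\M$ via Lemma~\ref{la: apartness refinement}, and conclude that the $A$-reachable states of $\M$ are pairwise inequivalent. The only cosmetic difference is that the paper phrases the argument as a contrapositive (``if $\M$ passes $T$ then $\M \notin \U^A$'') whereas you phrase it as a proof by contradiction.
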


\subsection{Stratification}

A basis $B$ induces a stratification of observation tree $\Obs$: first we have
the set $F^0$ of immediate successors of basis states that are not basis states themselves, next the set $F^1$ of immediate successors of states in $F^0$, etc.
In general, $F^k$ contains all states that can be reached via a path of length $k+1$ from $B$.
\begin{definition}[Stratification]
Let  $\Obs$ be an observation tree with basis $B$.
Then $B$ induces a \emph{stratification} of $Q^\Obs$ as follows.  For $k \geq 0$,
\begin{eqnarray*}
	F^k & = & \{ q \in Q^{\Obs} \mid d(B,q) = k+1 \}.
\end{eqnarray*}
We call $F^k$ the \emph{$k$-level frontier} and
write $F^{<k} = \bigcup_{0 \leq i<k} F^i$ and $F^{\leq k} = \bigcup_{0 \leq i \leq k} F^i$.
\end{definition}

\begin{example}
Figure~\ref{stratification} shows the stratification for an observation tree for specification $\Spec$ from Figure~\ref{Fig:Conformance} induced by basis $B = \{ 0, 1, 8 \}$.  
Witness $a a$ shows that the three basis states are pairwise apart, and therefore identified.
States from sets $B$, $F^0$, $F^1$ and $F^2$ are marked with  different colors.
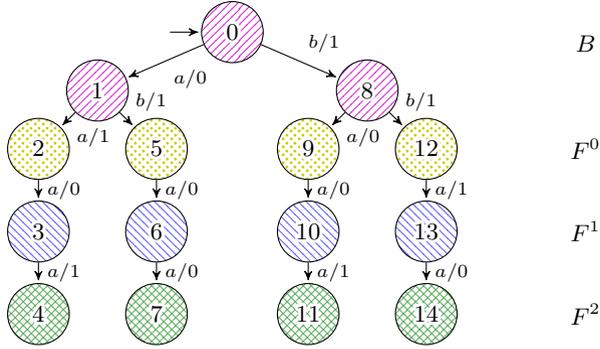
\begin{figure}[t!]
	\begin{center}
		\begin{tikzpicture}[->,>=stealth',shorten >=1pt,auto,node distance=1.3cm,main node/.style={circle,draw,font=\sffamily\large\bfseries}]
		\node[initial,state,basis] (0) {\treeNodeLabel{$0$}};
		\node[state,basis] (1) [below left of=0,xshift=-1cm] {\treeNodeLabel{$1$}};
		\node[state,frontier] (2) [below left of=1] {\treeNodeLabel{$2$}};
		\node[state,q1class] (3) [below of=2] {\treeNodeLabel{$3$}};
		\node[state,q2class] (4) [below of=3] {\treeNodeLabel{$4$}};
		\node[state,frontier] (5) [below right of=1,] {\treeNodeLabel{$5$}};
		\node[state,q1class] (6) [below of=5] {\treeNodeLabel{$6$}};
		\node[state,q2class] (7) [below of=6] {\treeNodeLabel{$7$}};
		\node[state,basis] (8) [below right of=0,xshift=1cm] {\treeNodeLabel{$8$}};
		\node[state,frontier] (9) [below left of=8] {\treeNodeLabel{$9$}};
		\node[state,q1class] (10) [below of=9] {\treeNodeLabel{$10$}};
		\node[state,q2class] (11) [below of=10] {\treeNodeLabel{$11$}};
		\node[state,frontier] (12) [below right of=8] {\treeNodeLabel{$12$}};		
		\node[state,q1class] (13) [below of=12] {\treeNodeLabel{$13$}};
		\node[state,q2class] (14) [below of=13] {\treeNodeLabel{$14$}};
		
		\node (F0) [right of=12,xshift=1cm] {$F^0$};
		\node (F1) [right of=13,xshift=1cm] {$F^1$};
		\node (F2) [right of=14,xshift=1cm] {$F^2$};
		\node (B) [above of=F0,yshift=0.3cm] {$B$};
		
		\path[every node/.style={font=\sffamily\scriptsize}]
		(0) edge node {$a/0$} (1)
			edge node {$b/1$} (8)
		(1) edge node {$a/1$} (2)
			edge node {$b/1$} (5)
		(8) edge node {$a/0$} (9)
			edge node {$b/1$} (12)
		(2) edge node {$a/0$} (3)
		(3) edge node {$a/1$} (4)
		(5) edge node {$a/0$} (6)
		(6) edge node {$a/0$} (7)
		(9) edge node {$a/0$} (10)
		(10) edge node {$a/1$} (11)
		(12) edge node {$a/1$} (13)
		(13) edge node {$a/0$} (14)
		;
		\end{tikzpicture}
	\end{center}
	\caption{Stratification of an observation tree induced by $B = \{ 0, 1, 8 \}$.}
	\label{stratification}
\end{figure}
In Figure~\ref{stratification}, $B$ is complete, but $F^0$, $F^1$ and $F^2$ are incomplete (since states of $F^0$ and $F^1$ have no outgoing $b$-transitions, and states of $F^2$ have no outgoing transitions at all).
The four $F^0$ states are also identified since $C(2) = \{ 0 \}$, $C(5) = \{ 8 \}$, $C(9) = \{ 0 \}$, and $C(12) = \{ 1 \}$.
Two states in $F^1$  are identified since $C(3) = C(10) = \{ 1 \}$, whereas the other two are not since $C(6) = C(13) = \{ 0, 8 \}$.
Since states in $F^2$ have no outgoing transitions, they are not apart from any other state, and thus $C(4) = C(7) = C(11) = C(14) = \{ 0, 1, 8 \}$.
\end{example}

\subsection{A Sufficient Condition for $k$-$A$-completeness}
We are now prepared to state our characterization theorem.  

\begin{theorem} \label{k-A-complete}
	Let $\M$ and $\Spec$ be Mealy machines, let $\Obs$ be an observation tree for both $\M$ and $\Spec$, let $B$ be a basis for $\Obs$ with $|B|=|Q^{\Spec}|$, let $A = \mathsf{access}(B)$, let $F^0, F^1,\ldots$ be the stratification induced by $B$, and let $k \geq 0$.
	Suppose $B$ and $F^{<k}$ are complete,
	all states in $F^k$ are identified, and the following condition holds:
	\begin{eqnarray}
	\label{cotransitivity requirement FV}
	\forall  q \in F^k ~ \forall r \in F^{< k} : &  & C(q) = C(r) \vee q \apart r
	\end{eqnarray}
	Suppose that $\M \in \U^A_k$.
	Then $\Spec\approx \M$.
\end{theorem}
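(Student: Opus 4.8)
The plan is to work inside the shared observation tree, transport its apartness structure into both machines, and then fold the tree's frontier back onto its basis, reducing everything to a single folding lemma. Since $\Obs$ is an observation tree for both machines, fix functional simulations $g\colon\Obs\to\Spec$ and $h\colon\Obs\to\M$. As $|B|=|Q^{\Spec}|$, \autoref{obs tree minimal state cover} yields that $g$ restricted to $B$ is a bijection onto $Q^{\Spec}$, that $\Spec$ is minimal, and that $A$ is a minimal state cover of $\Spec$; moreover, since every basis state is complete in $\Obs$ and $g|_B$ is onto, $\Spec$ is complete, so on $\Spec$ apartness, inequivalence and inequality all coincide. For a state $s$ of $\Spec$ write $b_s$ for the unique basis state with $g(b_s)=s$. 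A first observation, used throughout, is that whenever a tree state $q$ is identified with $C(q)=\{b\}$ we must have $g(q)=g(b)$: apartness of $q$ from every other basis state is transported by \autoref{la: apartness refinement} into $\Spec$, where minimality and completeness pin $g(q)$ to the single remaining value $g(b)$. Dually, $b_{g(q)}\in C(q)$ always holds, since $g(q)=g(b_{g(q)})$ and apartness is irreflexive.

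The core of the argument is the following \emph{Folding Claim}: every state $q\in B\cup F^{\le k}$ is identified, and for such $q$ with $C(q)=\{b\}$ one has $h(q)\approx h(b)$ in $\M$. Granting this, I would finish as follows. Define $R=\{(g(b),p)\mid b\in B,\ h(b)\approx p\}\subseteq Q^{\Spec}\times Q^{\M}$; it relates each $p$ to at most one $s$, because $h(b)\approx h(b')$ forces $b=b'$ (distinct basis states are apart, hence so are their $h$-images by \autoref{la: apartness refinement}, hence inequivalent). The pair $(q_0^{\Spec},q_0^{\M})=(g(q_0^{\Obs}),h(q_0^{\Obs}))$ lies in $R$, and $R$ is a bisimulation: given $(g(b),p)\in R$ and an input $i$, completeness of $B$ makes $bi$ exist, the two simulations give $\lambda^{\Spec}(g(b),i)=\lambda^{\Obs}(b,i)=\lambda^{\M}(h(b),i)=\lambda^{\M}(p,i)$, and the $i$-successors $g(bi)$ and $\delta^{\M}(p,i)\approx h(bi)$ are again $R$-related — trivially when $bi\in B$, and when $bi\in F^0$ exactly because the Folding Claim supplies $h(bi)\approx h(b_{g(bi)})$. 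Since $\Spec$ is complete, a bisimulation containing $(q_0^{\Spec},q_0^{\M})$ gives $\Spec\approx\M$. The hypothesis $\M\in\U_k^A$ is not used in this wrap-up; it enters only in the proof of the Folding Claim.

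I would prove the Folding Claim by downward induction on the stratification level, from $F^k$ to $F^0$ (the case $B$ being trivial). The base level $F^k$ is identified by hypothesis, and for $q\in F^k$ with $C(q)=\{b\}$ one argues, using that the basis-image states are complete, that $h(q)\approx h(b)$. For the inductive step take $r\in F^j$ with $j<k$; since $F^{<k}$ is complete, each successor $ri$ lies in $F^{j+1}$ and is, by the induction hypothesis, identified and folded. Relating the candidate set of $r$ to those of its successors via weak co-transitivity (\autoref{la: weak co-transitivity}), together with condition~\eqref{cotransitivity requirement FV}, one shows that no wrong basis state survives in $C(r)$, so $r$ is identified with $C(r)=\{b_{g(r)}\}$, and then that $h(r)\approx h(b_{g(r)})$.

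The main obstacle is precisely this inductive step: reflecting identification and equivalence back into $\M$ across the \emph{incomplete} part of the tree. Because $\Obs$ is explored only to frontier depth $k$, a separating sequence distinguishing $h(q)$ from a wrong candidate in $\M$ generally cannot be replayed from $q$ inside $\Obs$, so one cannot invoke co-transitivity naively. This is exactly where the two remaining hypotheses do their work: $\M\in\U_k^A$ bounds the eccentricity of the basis-reachable states by $k$, so every state of $\M$ — in particular every state that could witness $h(q)\not\approx h(b)$ — is reached within $k$ inputs of a basis image, making a depth-$(k{+}1)$ exploration deep enough; and condition~\eqref{cotransitivity requirement FV} supplies the otherwise-missing witnesses by forcing any level-$k$ state that is not apart from a frontier state to share its (singleton) candidate set. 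Verifying that these two ingredients combine, through weak co-transitivity, to propagate singleton candidate sets and $\M$-equivalence down every branch is the crux; the $k=0$ case is the clean degenerate instance, where $\U_0^A$ makes $h|_B$ a bijection onto $Q^{\M}$ and apartness from the wrong basis states alone forces the folding.
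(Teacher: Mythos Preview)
Your overall plan is reasonable and the wrap-up bisimulation using only the basis states would indeed work if the Folding Claim held. The genuine gap is in the proof of the Folding Claim itself: the $\M$-equivalence part, $h(q)\approx h(b)$, cannot be obtained by the downward induction you describe. Concretely, look at your own inductive step. For $r\in F^j$ you reduce $h(r)\approx h(b_{g(r)})$ to the equivalences $h(ri)\approx h(\delta^{\Obs}(b_{g(r)},i))$ for each input $i$. The successor $ri$ is in $F^{j+1}$, so your hypothesis gives $h(ri)\approx h(b_{g(ri)})$; but $\delta^{\Obs}(b_{g(r)},i)$ lives in $B\cup F^0$, and when it lies in $F^0$ you need the Folding Claim at level $0$, which in a downward induction from $F^k$ is exactly the last case, not yet available. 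The base case has the same circularity: for $q\in F^k$, invoking $\M\in\U_k^A$ produces a tree state $t''\in B\cup F^{<k}$ with $h(t'')=h(q)$ and (via condition~\eqref{cotransitivity requirement FV}) $C(t'')=\{b\}$, so you are reduced to $h(t'')\approx h(b)$ --- the claim at a lower level. Your remark ``using that the basis-image states are complete'' does not supply an argument here. The underlying reason is that $\approx$ on $\M$ is a coinductive property; you cannot establish it by a finite induction on tree depth without somewhere setting up a relation and closing it under one transition step.

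The paper sidesteps this entirely by enlarging the bisimulation candidate: it takes $R=\{(f(t),g(t))\mid t\in B\cup F^{<k}\}$, using \emph{all} tree states below level $k$ as witnesses rather than folding everything back to the basis first. The transfer condition for $R$ then becomes a single step: if a successor $t'$ lands in $F^k$, use $\M\in\U_k^A$ to find some $t''\in B\cup F^{<k}$ with the same $\M$-image, and use condition~\eqref{cotransitivity requirement FV} (or the basis property) to conclude $C(t')=C(t'')$, hence the same $\Spec$-image, so the pair is in $R$ via $t''$. No induction on levels, no prior $\M$-equivalence needed. Your Folding Claim is in fact a \emph{consequence} of this bisimulation, not a stepping stone toward it; if you want to rescue your route, you would have to recast the Folding Claim as a bisimulation on $\M$ (e.g.\ show that the symmetric-transitive closure of $\{(h(q),h(b_{g(q)}))\mid q\in B\cup F^{\le k}\}$ is a bisimulation), which in the end amounts to the paper's argument viewed through the bijection $g|_B$.
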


As a corollary, we obtain a sufficient condition for $k$-$A$-completeness.

\begin{corollary} \label{Cy:k-A-complete}
	Let $\Spec$ be a Mealy machine, let $T$ be a test suite for $\Spec$, let $\Obs = \mathsf{Tree}(\Spec, T)$, let $B$ be a basis for $\Obs$ with $|B|=|Q^{\Spec}|$, let $A = \mathsf{access}(B)$, let $F^0, F^1,\ldots$ be the stratification of $\Obs$ induced by $B$, and let $k \geq 0$.
	Suppose $B$ and $F^{<k}$ are complete,
	all states in $F^k$ are identified, and condition (\ref{cotransitivity requirement FV}) holds.
	Then $T$ is $k$-$A$-complete.
\end{corollary}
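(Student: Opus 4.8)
The plan is to decompose the target fault domain and dispatch each piece to machinery already in place. Since $k$-$A$-completeness is by definition $\U_k^A \cup \U^A$-completeness, Lemma~\ref{completeness union} reduces the goal to proving that $T$ is both $\U^A$-complete and $\U_k^A$-complete.

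For the first part, $\U^A$-completeness is immediate: we are given that $B$ is a basis for $\Obs = \mathsf{Tree}(\Spec, T)$ and that $A = \mathsf{access}(B)$, so Lemma~\ref{La:basis} applies verbatim and yields that $T$ is $\U^A$-complete.

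For the second part, I would fix an arbitrary $\M \in \U_k^A$ that passes $T$ and aim to conclude $\M \approx \Spec$. The crucial step is to exhibit $\Obs$ as an observation tree for \emph{both} $\M$ and $\Spec$, so that Theorem~\ref{k-A-complete} becomes applicable. Lemma~\ref{lemma testing tree} already supplies a functional simulation from $\Obs$ to $\Spec$, and since $\M$ passes $T$, Lemma~\ref{observation tree when test suite passes} supplies a functional simulation from $\Obs$ to $\M$. The remaining hypotheses of Theorem~\ref{k-A-complete} are inherited directly from those of the corollary: $B$ is a basis with $|B| = |Q^{\Spec}|$, $A = \mathsf{access}(B)$, the stratification $F^0, F^1, \ldots$ is induced by $B$, the sets $B$ and $F^{<k}$ are complete, all states in $F^k$ are identified, and condition (\ref{cotransitivity requirement FV}) holds; finally $\M \in \U_k^A$ by our choice of $\M$. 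Invoking the theorem then gives $\Spec \approx \M$, which establishes $\U_k^A$-completeness.

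Combining the two completeness claims via Lemma~\ref{completeness union} shows that $T$ is $\U_k^A \cup \U^A$-complete, i.e.\ $k$-$A$-complete. I expect no real obstacle here, as the corollary is essentially a repackaging of Theorem~\ref{k-A-complete} together with Lemma~\ref{La:basis}; all the genuine difficulty resides in the proof of the theorem itself. The one point demanding a little care is checking that a single tree $\Obs$ can serve simultaneously as an observation tree for both machines, which the two simulation lemmas provide.
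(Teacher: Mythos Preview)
Your proposal is correct and follows essentially the same route as the paper: split the fault domain via Lemma~\ref{completeness union}, handle $\U^A$ with Lemma~\ref{La:basis}, and handle $\U_k^A$ by taking an arbitrary $\M\in\U_k^A$ that passes $T$, observing that $\Obs$ is an observation tree for both $\Spec$ and $\M$, and invoking Theorem~\ref{k-A-complete}. If anything you are slightly more careful than the paper in citing Lemma~\ref{lemma testing tree} for the $\Spec$ side of the simulation, whereas the paper attributes both to Lemma~\ref{observation tree when test suite passes}.
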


The conditions of Corollary~\ref{Cy:k-A-complete} do not only impose restrictions on testing tree $\Obs$, but also on specification $\Spec$.
Suppose that the conditions of Corollary~\ref{Cy:k-A-complete} hold.
Then, by Lemma~\ref{lemma testing tree}, there is a functional simulation $f \colon \Obs \to \Spec$.
By Lemma~\ref{obs tree minimal state cover},
$f$ restricted to $B$ is a bijection, $\Spec$ is minimal, and $\mathsf{access}(B)$ is a minimal state cover for $\Spec$. Furthermore,
since $B$ is complete and $f$ restricted to $B$ is a bijection, $\Spec$ is also complete.
Minimality and completeness of specifications are common assumptions in conformance testing.

\begin{example}
A simple example of the application of Corollary~\ref{Cy:k-A-complete},
is provided by the observation tree from Figure~\ref{stratification} for the specification $\Spec$ from Figure~\ref{Fig:Conformance}. This observation tree corresponds to the test suite 
$T = \{ aaaa, abaa, baaa, bbaa \}$.
We claim that this test suite is $0$-$A$-complete, for $A = \{ \epsilon, a, b \} = \mathsf{access}(B)$.
Note that condition (\ref{cotransitivity requirement FV}) vacuously holds when $k=0$.  
All other conditions of Corollary~\ref{Cy:k-A-complete} are also met: basis $B$ is complete and all states in $F^0$ are identified.  Therefore, test suite $T$ is $0$-$A$-complete.
We may slightly optimize the test suite by replacing test $bb aa$ by test $b b a$, since all conditions of the corollary are still met for the reduced testing tree.
\end{example}

The conditions of Corollary~\ref{Cy:k-A-complete} are sufficient for $k$-$A$-completeness, but the following trivial example illustrates that they are not necessary.

\begin{example}
Consider the Mealy machines $\Spec$ of Figure~\ref{Fig:not necessary}, which has a single state, two inputs $a$ and $b$, and minimal state cover $A = \{ \epsilon \}$.
\begin{figure}[ht!]
	\begin{center}
		\begin{tikzpicture}[->,>=stealth',shorten >=1pt,auto,node distance=2cm,main node/.style={circle,draw,font=\sffamily\large\bfseries},
		]
		\def\yoffset{8mm}
		\node[initial,state,frontier] (0) {};

		\path[every node/.style={font=\sffamily\scriptsize}]
		(0) edge [loop right] node[right] {$a/0$, $b/1$} (0)
	;
		\end{tikzpicture}
		\caption{Test suite $T = \{ ab \}$ is $0$-$\{ \epsilon \}$-complete for the above specification.}
		\label{Fig:not necessary}
	\end{center}
\end{figure}
Test suite $T = \{ ab \}$ does not meet the conditions of Corollary~\ref{Cy:k-A-complete}, since
the basis of the corresponding testing tree is not complete.
Nevertheless, we claim that $T$ is $0$-$A$-complete.
Since $A$ is a singleton, the fault domain $\U^A$ is empty.
The fault domain $\U^A_0$ only contains Mealy machines with a single state, and self-loop transitions for inputs $a$ and $b$.  Test suite $T$ verifies that the outputs of these transitions are in agreement with $\Spec$.  Thus, any machine in $\U^A_0$ that passes $T$ will be equivalent to $\Spec$. Hence $T$ is $0$-$A$-complete.
\end{example}

Theorem~\ref{k-A-complete} and Corollary~\ref{Cy:k-A-complete} do not require that all states in $F^{<k}$ are identified, but 
the other conditions of the theorem/corollary already imply this.
\begin{proposition} \label{full frontier identified}
	Let $\Obs$ be an observation tree for $\Spec$, $B$ a basis for $\Obs$ with $|B| = |Q^{\Spec}|$,  $F^0, F^1,\ldots$ the stratification induced by $B$, and $k \geq 0$.
	Suppose $B$ and $F^{<k}$ are complete, all states in $F^k$ are identified, and
	condition (\ref{cotransitivity requirement FV}) holds.
	Then all states in $F^{<k}$ are identified
\end{proposition}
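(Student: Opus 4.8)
The plan is to fix an arbitrary $r \in F^{<k}$ and argue by contradiction that $r$ is identified. Before starting, I would record one fact that drives the whole argument: by \autoref{obs tree minimal state cover} the simulation $f\colon\Obs\to\Spec$ restricts to a bijection on $B$, and by \autoref{la: apartness refinement} together with irreflexivity of $\apart$, any two states $u,v$ with $f(u)=f(v)$ satisfy $\neg(u\apart v)$. In particular each state $s$ has a unique basis state $\beta(s)$ with $f(\beta(s))=f(s)$, and $\beta(s)\in C(s)$, so $C(s)\neq\emptyset$. Hence \textqt{not identified} means $|C(r)|\ge 2$, and I may pick two distinct candidates $b,b'\in C(r)$; being distinct basis states they are apart, so I fix a shortest witness $\sigma\vdash b\apart b'$, and write $d=d(B,r)$ and $m=k+1-d$, noting $1\le d\le k$ and $m\ge 1$.

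The key case distinction is whether $r$ can run $\sigma$ far enough to reach the frontier $F^k$. If $|\sigma|\le k-d$, then every intermediate state $\delta(r,\sigma')$ has depth at most $k$, hence lies in $B\cup F^{<k}$ and is complete, so $\delta(r,\sigma)\converges$; weak co-transitivity (\autoref{la: weak co-transitivity}) then gives $r\apart b$ or $r\apart b'$, contradicting $b,b'\in C(r)$. So the real work is the case $|\sigma|\ge m$, which is where condition~(\ref{cotransitivity requirement FV}) enters.

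In that case I would split $\sigma=\sigma_1\sigma_2$ with $|\sigma_1|=m$. Since $r\notin B$ its descendants descend strictly, so $q:=\delta(r,\sigma_1)$ is defined (all intermediate depths are $\le k$) and sits at depth exactly $k+1$, i.e.\ $q\in F^k$; meanwhile $\hat b:=\delta(b,\sigma_1)$ and $\hat b':=\delta(b',\sigma_1)$ are defined with $d(B,\hat b),d(B,\hat b')\le m\le k$, so they lie in $B\cup F^{<k}$. Routine witness manipulation (using $\neg(b\apart r)$ and $\neg(b'\apart r)$) then shows $\semantics{b}(\sigma_1)=\semantics{r}(\sigma_1)=\semantics{b'}(\sigma_1)$, that $\neg(\hat b\apart q)$ and $\neg(\hat b'\apart q)$, and that the tail still separates the images, $\sigma_2\vdash\hat b\apart\hat b'$ with $\sigma_2\neq\epsilon$ (if $\sigma_2=\epsilon$ we fall back to the easy case, as $r$ then follows all of $\sigma$). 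Now, when $\hat b\in F^{<k}$, condition~(\ref{cotransitivity requirement FV}) applied to $q\in F^k$ and $\hat b$—its apartness disjunct excluded by $\neg(q\apart\hat b)$—forces $C(\hat b)=C(q)$; when instead $\hat b\in B$, directly $\hat b\in C(q)$. As $q$ is identified, $C(q)=\{\beta(q)\}$ with $f(\beta(q))=f(q)$, and combined with $\beta(\hat b)\in C(\hat b)$ this yields $f(\hat b)=f(q)$, and symmetrically $f(\hat b')=f(q)$. Thus $f(\hat b)=f(\hat b')$, so by the opening fact $\neg(\hat b\apart\hat b')$ — contradicting $\sigma_2\vdash\hat b\apart\hat b'$.

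I expect the main obstacle to be the depth bookkeeping in this hard case: the argument works precisely because the descendant $q$ of $r$ lands in $F^k$ while the descendants $\hat b,\hat b'$ of the shallow basis states $b,b'$ stay inside $B\cup F^{<k}$, which is exactly what makes condition~(\ref{cotransitivity requirement FV}) applicable to $q$ against $\hat b$ and $\hat b'$. This balance rests on the inequality $m=k+1-d\le k$, i.e.\ on $r$ being a genuine frontier state ($d\ge 1$); once the depths are tracked correctly, the remaining steps are standard prefix/suffix reasoning about separating sequences and the functional simulation.
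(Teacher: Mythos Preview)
Your proposal is correct and follows essentially the same approach as the paper's proof: assume a non-identified state has two basis candidates $b,b'$, take a separating sequence $\sigma$, run it from the state until reaching $F^k$, and use condition~(\ref{cotransitivity requirement FV}) together with the functional simulation to derive a contradiction. Your depth bookkeeping via $d$ and $m$ and the auxiliary map $\beta$ make the argument slightly more explicit and uniform than the paper's case-by-case treatment of whether the descendants $\hat b,\hat b'$ land in $B$ or in $F^{<k}$, but the underlying structure is identical.
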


Appendix~\ref{condition 1 is needed} 
contains an example to illustrate that the converse implication does not hold: even if all states in $B \cup F^{\leq k}$ are identified,
condition (\ref{cotransitivity requirement FV}) may not hold.
The next proposition asserts that
condition (\ref{cotransitivity requirement FV})  implies co-transitivity for a much larger collection of triples, namely triples of a basis state, a state in $F^i$ and a state in $F^j$, for all $i \neq j$.

\begin{proposition} \label{variant condition}
	Let $\Obs$ be an observation tree for $\Spec$, $B$ a basis for $\Obs$ with $|B| = |Q^{\Spec}|$,  $F^0, F^1,\ldots$ the stratification induced by $B$, and $k \geq 0$.
	Suppose $B$ and $F^{<k}$ are complete, all states in $F^k$ are identified, and
	condition (\ref{cotransitivity requirement FV}) holds.
	Then
	$
	\forall i, j ~ \forall  q \in F^i ~ \forall r \in F^j : 0 \leq i < j \leq k  ~~ \Rightarrow  ~~ C(q) = C(r) \vee q \apart r
	$
\end{proposition}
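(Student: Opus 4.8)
The plan is to prove the statement by downward induction on $j$, with base case $j=k$, reducing everything to reasoning about \emph{identified} states together with the functional simulation $f\colon\Obs\to\Spec$ of Lemma~\ref{lemma testing tree}. First note that under the hypotheses every state in $F^{\leq k}$ is identified: states in $F^k$ by assumption, and states in $F^{<k}$ by Proposition~\ref{full frontier identified}. Hence for every $q\in F^i$ and $r\in F^j$ with $i<j\le k$ the candidate sets are singletons, say $C(q)=\{b_q\}$ and $C(r)=\{b_r\}$, so that $C(q)=C(r)$ is equivalent to $b_q=b_r$. Here $\Spec$ is minimal (Lemma~\ref{obs tree minimal state cover}) and complete (since $B$ is complete and $f$ restricts to a bijection on $B$), so in $\Spec$ apartness coincides with inequivalence.

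As a preliminary step I would show that an identified state $q$ with $C(q)=\{b\}$ satisfies $f(q)=f(b)$. Indeed, for every $b'\in B$ with $b'\neq b$ we have $q\apart b'$, hence $f(q)\apart f(b')$ by Lemma~\ref{la: apartness refinement}, so $f(q)\not\approx f(b')$. Since $f$ restricts to a bijection $B\to Q^{\Spec}$ (Lemma~\ref{obs tree minimal state cover}), we have $f(q)=f(b'')$ for a unique $b''\in B$; were $b''\neq b$ we would get $f(q)\not\approx f(q)$, a contradiction, so $f(q)=f(b)$.

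For the base case $j=k$, any $q\in F^i$ with $i<k$ lies in $F^{<k}$, so the pair $(r,q)$ with $r\in F^k$ is exactly an instance of condition~(\ref{cotransitivity requirement FV}); this yields $C(r)=C(q)\vee r\apart q$, which by symmetry of $\apart$ is the desired $C(q)=C(r)\vee q\apart r$. For the inductive step, fix $j<k$ and assume the claim for $j+1$. Take $q\in F^i$, $r\in F^j$ with $i<j$ and suppose $\neg(q\apart r)$; I must show $b_q=b_r$. Since $i<j<k$, both $q$ and $r$ lie in $F^{<k}$ and are therefore complete. For each input $a$, completeness gives $\lambda^{\Obs}(q,a)\converges$ and $\lambda^{\Obs}(r,a)\converges$, and these outputs must coincide, as otherwise $a$ alone would witness $q\apart r$. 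Writing $q_a=\delta^{\Obs}(q,a)\in F^{i+1}$ and $r_a=\delta^{\Obs}(r,a)\in F^{j+1}$, any separating sequence $\sigma$ for $q_a,r_a$ would extend to the witness $a\sigma$ for $q,r$; hence $\neg(q_a\apart r_a)$. Applying the induction hypothesis to $q_a\in F^{i+1}$ and $r_a\in F^{j+1}$ (legitimate since $i+1<j+1\le k$) then gives $C(q_a)=C(r_a)$ for every $a$.

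It remains to transfer this back to $q$ and $r$, which is the main obstacle and where the simulation $f$ does the work. For each $a$, the states $q_a$ and $r_a$ are identified with the same candidate, so by the preliminary step $f(q_a)=f(r_a)$; since $f$ is a functional simulation this reads $\delta^{\Spec}(f(q),a)=\delta^{\Spec}(f(r),a)$, and the matching outputs give $\lambda^{\Spec}(f(q),a)=\lambda^{\Spec}(f(r),a)$. Thus $f(q)$ and $f(r)$ agree on one-step outputs and have identical successors for every input, so a straightforward induction on input length shows $f(q)\approx f(r)$, and minimality of $\Spec$ yields $f(q)=f(r)$. Finally, by the preliminary step $f(b_q)=f(q)=f(r)=f(b_r)$, and injectivity of $f$ on $B$ gives $b_q=b_r$, i.e.\ $C(q)=C(r)$, completing the induction. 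I expect the only delicate points to be the level bookkeeping (ensuring $q_a,r_a$ stay within $F^{\leq k}$ so the hypotheses apply) and the detour through $f$ and minimality of $\Spec$, since apartness in $\Obs$ cannot be recovered directly from inequivalence in $\Spec$.
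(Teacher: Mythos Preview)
Your proof is correct, but it takes a genuinely different route from the paper's argument. The paper does not use induction at all: assuming $C(q)\neq C(r)$, it observes via Lemma~\ref{candidate set} and Lemma~\ref{la: apartness refinement} that $f(q)\apart f(r)$ in $\Spec$, picks a separating sequence $\sigma$ for $f(q),f(r)$, and then runs $\sigma$ from $q$ and $r$ in $\Obs$ until the $r$-descendant first reaches $F^k$ (which happens because $F^{<k}$ is complete). At that point one has $q'\in F^{<k}$ and $r'\in F^k$ with $f(q')\apart f(r')$, hence $C(q')\neq C(r')$, and a \emph{single} application of condition~(\ref{cotransitivity requirement FV}) yields $q'\apart r'$; a witness $\tau$ for this then gives $\rho\tau\vdash q\apart r$.

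By contrast, you induct downward on $j$, advance one input at a time, and in the inductive step argue via equality of all one-step successors in $\Spec$ together with minimality to conclude $f(q)=f(r)$. Your detour through ``same successors and outputs for every input implies equivalence, hence equality by minimality'' is correct but heavier than necessary; the paper's approach avoids it by jumping in one shot to level $k$ along a well-chosen separating sequence and invoking condition~(\ref{cotransitivity requirement FV}) exactly once. What your approach buys is a self-contained structural induction that never needs to exhibit a concrete separating sequence in $\Obs$; what the paper's approach buys is a shorter, induction-free argument that makes the role of condition~(\ref{cotransitivity requirement FV}) more transparent.
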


As a consequence of the next proposition, condition (\ref{cotransitivity requirement FV}) can be equivalently formulated as
\begin{eqnarray}
\label{cotransitivity requirement}
\forall  q \in F^k ~ \forall r \in F^{< k} ~ \forall s \in B : s \apart q &  \;\Rightarrow\;& s \apart r \vee q \apart r
\end{eqnarray}
Condition (\ref{cotransitivity requirement}) says that apartness is \emph{co-transitive} for triples of states in the observation tree consisting of a state in $F^k$, a state in $F^{<k}$, and a basis state.
Co-transitivity is a fundamental property of apartness \cite{troelstra_schwichtenberg_2000,GJapartness}.  

\begin{proposition} \label{candidate set cotransitivity}
	Let $\Obs$ be an observation tree for $\Spec$, $B$ a basis for $\Obs$, and $|B| = |Q^{\Spec}|$.
	Suppose $q$ and $r$ are states of $\Obs$ and $q$ is identified.
	Then
	\begin{eqnarray*}
		C(q) = C(r) \vee q \apart r & ~~ \Leftrightarrow ~~& [ \forall s \in B : s \apart q \; \Rightarrow \; s \apart r \vee q \apart r]
	\end{eqnarray*}
\end{proposition}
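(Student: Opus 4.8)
The plan is to unfold both sides of the biconditional in terms of candidate sets, reducing everything to a single nontrivial case. By definition of $C(\cdot)$, for a basis state $s$ we have $s \apart q$ precisely when $s \notin C(q)$, and similarly $s \apart r$ precisely when $s \notin C(r)$. Using this, I would first dispose of the case $q \apart r$: here both sides of the equivalence hold trivially, since $q \apart r$ is a disjunct on the left and makes the conclusion of the implication on the right vacuously true. It therefore remains to treat the case $\neg(q \apart r)$, under which the left-hand side reduces to the equality $C(q) = C(r)$ and the right-hand side reduces to the inclusion $C(r) \subseteq C(q)$ (the implication $s \apart q \Rightarrow s \apart r$ says exactly that $s \notin C(q) \Rightarrow s \notin C(r)$, as the disjunct $q \apart r$ is false). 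Thus the proposition collapses to showing that, when $\neg(q \apart r)$, the inclusion $C(r) \subseteq C(q)$ is equivalent to the equality $C(q) = C(r)$.

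The implication $C(q) = C(r) \Rightarrow C(r) \subseteq C(q)$ is immediate, so the content lies in the converse. Here I would exploit the hypothesis that $q$ is identified, i.e.\ $|C(q)| = 1$; write $C(q) = \{s_0\}$. Then $C(r) \subseteq \{s_0\}$, and to conclude $C(r) = \{s_0\} = C(q)$ it suffices to rule out $C(r) = \emptyset$.

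The step I expect to be the crux is a separate claim that \emph{every} state of $\Obs$ has a nonempty candidate set. To establish it I would use Lemma~\ref{obs tree minimal state cover}: since $|B| = |Q^{\Spec}|$, the functional simulation $f \colon \Obs \to \Spec$ witnessing that $\Obs$ is an observation tree for $\Spec$ restricts to a bijection on $B$. Given any state $r$, let $s$ be the unique basis state with $f(s) = f(r)$. If $s \apart r$ held in $\Obs$, then Lemma~\ref{la: apartness refinement} would give $f(s) \apart f(r)$ in $\Spec$, contradicting irreflexivity of apartness since $f(s) = f(r)$. Hence $\neg(s \apart r)$, so $s \in C(r)$ and $C(r) \neq \emptyset$. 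Applying this to $r$ and combining with $C(r) \subseteq \{s_0\}$ yields $C(r) = \{s_0\} = C(q)$, which finishes the converse and hence the whole equivalence.
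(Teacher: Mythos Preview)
Your proof is correct and follows essentially the same approach as the paper's, just with a slightly different case decomposition: you factor out the case $q \apart r$ up front and then reduce both sides to a set-inclusion statement, whereas the paper treats the two directions of the biconditional separately. The only minor difference is that the paper invokes Lemma~\ref{candidate set} directly to obtain $C(r)\neq\emptyset$, while you re-derive this fact inline from Lemma~\ref{obs tree minimal state cover} and Lemma~\ref{la: apartness refinement}; the content is the same.
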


\subsection{Algorithm}
We will now present an algorithm that checks, for a given observation tree $\Obs$ with $N$ states,
in $\Theta(N^2)$ time, for all pairs of states, whether they are apart or not.
In practice, models of realistic protocols often have up to a few dozen states and inputs \cite{NeiderSVK97}. This means that if $k$ equals 2 or 3, the observation tree will contain up to a few thousand states, and so with some optimization (e.g.\  on the fly computation of apartness pairs) our algorithm is practical. For larger benchmarks the observation trees already contain millions of states, and our algorithm cannot be applied. Nevertheless, from a theoretical perspective it is interesting to know that the apartness relation (and hence also our sufficient condition for $k$-$A$-completeness) can be computed in polynomial time.
		
		\begin{algorithm}
			\small
			\begin{algorithmic}[1]
				\Function{FillApartnessArray}{$~$}
				\For{$q \in Q$}
				\For{$q'\in Q$}
				\If{$\neg \mathit{Visited}(q,q')$}
				{\sc ApartnessCheck}$(q,q')$
				\EndIf
				\EndFor
				\EndFor
				\State \Return $\mathit{Apart}$
				\EndFunction
				
				\Function{ApartnessCheck}{$q,q'$}		 
				\State $l \gets \mathit{Adj}[q]$, $l' \gets \mathit{Adj}[q']$
				\While{$l \neq \epsilon \wedge l' \neq \epsilon \wedge \neg \mathit{Apart}(q,q')$}
				\State $r \gets \mathsf{hd(l)}$, $r' \gets \mathsf{hd}(l')$
				\If{$in(r) < in(r')$}
				$l \gets \mathsf{tl}(l)$
				\ElsIf{$in(r') < in(r)$}
			 $l' \gets \mathsf{tl}(l')$
				\ElsIf{$out(r) = out(r')$}
				\If{$\neg \mathit{Visited}(r,r')$}
				 {\sc ApartnessCheck}$(r,r')$
				\EndIf
				\State $\mathit{Apart}(q,q') \gets \mathit{Apart}(q,q') \vee \mathit{Apart}(r,r')$
				\State $l \gets \mathsf{tl}(l)$, $l' \gets \mathsf{tl}(l')$
				\Else
				$\mathit{Apart}(q,q') \gets \mathit{true}$
				\EndIf
				\EndWhile
				\State $\mathit{Visited}(q,q') \gets \mathit{true}$
				\EndFunction
			\end{algorithmic}
		\caption{Computing the apartness relation}
		\label{algorithm apartness}
		\end{algorithm}
	
Algorithm~\ref{algorithm apartness} assumes a total order $<$ on the set of inputs $I$ and two partial functions $in : Q^{\Obs} \partialto I$ and $out : Q^{\Obs} \partialto O$ which, for each noninitial state $q$, specify the input and output, respectively, of the unique incoming transition of $q$.
It also assumes, for each state $q \in Q^{\Obs}$, an adjacency list $\mathit{Adj}[q] \in (Q^{\Obs})^*$ that contains the immediate successors of $q$, sorted on their inputs.
The algorithm maintains two Boolean arrays $\mathit{Visited}$ and $\mathit{Apart}$ to record
whether a pair of states $(q, q')$ has been visited or is apart, respectively.  
Initially, all entries in both arrays are $\mathit{false}$.
When exploring a pair of states $(q, q')$, Algorithm~\ref{algorithm apartness} searches for outgoing transitions of $q$ and $q'$ with the same input label.  In this case, if the outputs are different then it concludes
that $q$ and $q'$ are apart. Otherwise,
if the outputs are the same, it considers the target states $r$ and $r'$.  If the pair $(r, r')$ has not been visited yet then the algorithm recursively explores whether this pair of states is apart.
If $r$ and $r'$ are apart then $q$ and $q'$ are apart as well.	
	
The theorem below asserts the correctness and time complexity of Algorithm~\ref{algorithm apartness}.

\begin{theorem} \label{correctness apartness algorithm}
Algorithm~\ref{algorithm apartness} terminates with running time $\Theta(N^2)$, where $N = | Q^{\Obs} |$. 
Upon termination, $\mathit{Apart}(q,q') = \mathit{true}$ iff $q \apart q'$, for all $q, q'\in Q^{\Obs}$.
\end{theorem}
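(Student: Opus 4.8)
The plan is to prove correctness via a recursive characterization of apartness on trees, and to bound the running time by an amortized counting argument. The key observation, which I would record first, is that in an observation tree two states $q, q'$ satisfy $q \apart q'$ if and only if there is an input $i \in I$ with $\delta(q,i)\converges$ and $\delta(q',i)\converges$ such that either $\lambda(q,i) \neq \lambda(q',i)$, or $\delta(q,i) \apart \delta(q',i)$. This follows directly from the definition of apartness: a witness $\sigma$ must be nonempty (the empty sequence gives equal outputs) and of the form $i\tau$ with $i$ a common input; the first output already separates $q$ and $q'$ when $\lambda(q,i)\neq\lambda(q',i)$, and otherwise $\tau$ witnesses apartness of the two $i$-successors. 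This is exactly the fixpoint that Algorithm~\ref{algorithm apartness} computes.

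For correctness I would treat termination and the computed value separately. Since $\Obs$ is a tree, whenever ApartnessCheck$(q,q')$ recurses on a pair $(r,r') = (\delta(q,i), \delta(q',i))$, both $r$ and $r'$ lie strictly deeper than $q$ and $q'$, so the quantity $\min(\mathsf{height}(q), \mathsf{height}(q'))$ strictly decreases along recursive calls; as the tree is finite, the recursion is well-founded and terminates. The $\mathit{Visited}$ array never causes a pair to be reprocessed, and I would maintain the invariant that whenever $\mathit{Visited}(q,q')$ becomes $\mathit{true}$ the value $\mathit{Apart}(q,q')$ is already correct, which holds because $\mathit{Visited}(q,q')$ is set only at the very end of ApartnessCheck$(q,q')$. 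Correctness of the computed value then follows by well-founded induction on $\min(\mathsf{height}(q),\mathsf{height}(q'))$: the while loop merges the two input-sorted adjacency lists, so it inspects exactly the common inputs $i$; for each such $i$ it sets $\mathit{Apart}(q,q')$ to $\mathit{true}$ when $out(r)\neq out(r')$ (a differing-output witness), and otherwise disjoins the inductively correct value $\mathit{Apart}(r,r')$ of the successor pair. By the characterization above, $\mathit{Apart}(q,q')$ ends $\mathit{true}$ exactly when some common input witnesses apartness. The one subtlety is early termination: the loop exits as soon as $\mathit{Apart}(q,q')$ turns $\mathit{true}$, but this is sound since apartness once established is final; and if it never turns true the merge runs to completion, so all common inputs have been inspected and the answer $\mathit{false}$ is justified.

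For the complexity I would prove matching bounds. The lower bound $\Omega(N^2)$ is immediate from the nested loops of FillApartnessArray, which touch all $N^2$ pairs. For the upper bound, memoization guarantees ApartnessCheck is entered at most once per pair, and its \emph{own} work (ignoring the charges passed on to recursive calls, each of which is billed to its own pair) is $O(|\mathit{Adj}[q]| + |\mathit{Adj}[q']|)$, since every iteration of the while loop advances at least one list pointer. Summing over all pairs and using that an $N$-state tree satisfies $\sum_{q} |\mathit{Adj}[q]| = N-1$ gives $\sum_{q,q'}(|\mathit{Adj}[q]| + |\mathit{Adj}[q']|) = 2N(N-1) = O(N^2)$; together with the $O(N^2)$ loop overhead and array initialization this yields the $O(N^2)$ upper bound, hence $\Theta(N^2)$.

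I expect the main obstacle to be the amortized accounting: one must charge the non-recursive merge work of each pair to that pair alone, so that the total telescopes against the linear number $N-1$ of tree edges, while simultaneously relying on the $\mathit{Visited}$ memoization to guarantee each pair is processed once. The interplay of the recursion, the early-exit guard $\neg\mathit{Apart}(q,q')$, and the memoization is where the analysis must be carried out with care; the correctness induction itself is routine once the recursive characterization of $\apart$ and the height-decrease property of the tree are in place.
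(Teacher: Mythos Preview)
Your proposal is correct and follows essentially the same approach as the paper: both rely on the recursive characterization of apartness (common input with either differing output or apart successors) and on memoization to ensure each pair is processed once, yielding $\Theta(N^2)$. You supply considerably more detail than the paper's terse proof---an explicit termination measure via heights, an invariant linking $\mathit{Visited}$ to correctness, and careful handling of the early-exit guard---and your amortization sums adjacency-list lengths over pairs rather than the paper's direct count of transition pairs, but these are equivalent bookkeeping choices leading to the same $2N(N-1)$ bound.
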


Once we know the apartness relation, there are several things we can do, e.g.,
we may check in $\Theta(N^2)$ time whether the conditions of Theorem~\ref{k-A-complete} hold
(but note that $N$ grows exponentially in $k$).
First we check in linear time whether all states in $B$ and $F^{<k}$ are complete. Next, in one pass over array $\mathit{Apart}$, we compute the candidate sets for all frontier states.  By Proposition~\ref{full frontier identified}, if some state in the frontier has a candidate set with more than one element, then the conditions of Theorem~\ref{k-A-complete} do not hold.  Otherwise, we can check in constant time whether frontier states have the same candidate set.  Finally, we check condition (\ref{cotransitivity requirement FV}) with a double for-loop over $F^k$ and $F^{<k}$.

If we can compute the apartness relation, we may also select appropriate state identifiers on-the-fly in order to generate small $k$-$A$-complete test suites (similar to the H-method \cite{DorofeevaEY05}), and
we can check in $\Theta(N^2)$ time whether tests can be removed from a given test suite without compromising $k$-$A$-completeness.  

\section{Deriving Completeness for Existing Methods}
\label{sec:applications}

In this section, we show how $k$-$A$-completeness of two popular algorithms for test suite generation, the Wp and HSI methods, follows from Corollary~\ref{Cy:k-A-complete}.
We also present an alternative $m$-completeness proof of the H-method \cite{DorofeevaEY05}, which is a minor variation of the proof of Theorem~\ref{k-A-complete}.
In order to define the Wp and HSI methods, we need certain sets (of sets) of sequences.
The following definitions are based on \cite{ThesisJoshua}; see \cite{DorofeevaEMCY10,ThesisJoshua} for a detailed exposition. 

\begin{definition}
	Let $\Spec$ be a Mealy machine.
	\begin{itemize}
	\item 
	A \emph{state identifier} for a state $q \in Q^{\Spec}$ is a set $W_q \subseteq I^*$ such that for every inequivalent state $r \in Q^{\Spec}$, $W_q$ contains a separating sequence for $q$ and $r$, i.e., a witness for their apartness.
	\item 
	We write $\{ W_q \}_{q \in Q^{\Spec}}$ or simply $\{ W_q \}_q$ for a set that contains a state identifier $W_q$ for each $q \in Q^{\Spec}$. 
	\item 
	If ${\cal W} = \{ W_q \}_q$ is a set of state identifiers, then the \emph{flattening} $\bigcup {\cal W}$ is the set $\{ \sigma \in I^* \mid \exists q \in Q^{\Spec} : \sigma \in W_q \}$.
	\item 
	If $W$ is a set of input sequences and ${\cal W} = \{ W_q \}_q$ is a set of state identifiers, then the \emph{concatenation} $W \odot {\cal W}$ is defined as
	$\{ \sigma \tau \mid \sigma \in W,~ \tau \in W_{\delta^{\Spec}(q_0^{\Spec}, \sigma)} \}$.
	\item 
	A set of state identifiers $\{ W_q \}_q$ is \emph{harmonized} if, for each pair of inequivalent states $q, r \in Q^{\Spec}$, $W_q \cap W_r$ contains a separating sequence for $q$ and $r$.
	We refer to such a set of state identifiers as a \emph{separating family}.
	\end{itemize}
\end{definition}


$k$-$A$-completeness of the Wp-method \cite{FujiwaraEtAl91} follows from Corollary~\ref{Cy:k-A-complete} via routine checking. 

\begin{proposition}[$k$-$A$-completeness of the Wp-method]
	\label{completeness Wp}
	Let $\Spec$ be a complete, minimal Mealy machine, $k \geq 0$,
	$A$ a minimal state cover for $\cal S$, and ${\cal W} = \{ W_q \}_q$ a set of state identifiers. Then $T = (A \cdot I^{\leq k+1}) \cup (A \cdot I^{\leq k} \cdot \bigcup {\cal W}) \cup (A \cdot I^{\leq k+1} \odot {\cal W})$ is a $k$-$A$-complete test suite for $\Spec$.
\end{proposition}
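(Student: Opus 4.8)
The plan is to instantiate \cref{Cy:k-A-complete}, so the whole argument reduces to exhibiting a suitable basis of the testing tree $\Obs = \mathsf{Tree}(\Spec, T)$ and verifying its hypotheses. First I would note that $T$ is a genuine (finite) test suite: since $\Spec$ is complete every sequence in $T$ is a test, and finiteness is immediate from finiteness of $A$, $I$ and each $W_q$. By \cref{lemma testing tree} there is a functional simulation $f \colon \Obs \to \Spec$ sending a state $\sigma$ to $\delta^{\Spec}(q_0^{\Spec}, \sigma)$. Because $A$ is prefix-closed and $A \subseteq A \cdot I^{\leq k+1} \subseteq T$, every $\sigma \in A$ is a state of $\Obs$, so I would take $B$ to be the set of these states; then $\mathsf{access}(B) = A$, and since $A$ is a minimal state cover of the minimal machine $\Spec$, $|B| = |A| = |Q^{\Spec}|$ and $f$ restricts to a bijection $B \to Q^{\Spec}$ by \cref{obs tree minimal state cover}.

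Next I would check that $B$ is indeed a basis and that the completeness requirements hold. Ancestor-closedness of $B$ follows from prefix-closedness of $A$. For the remaining conditions I would first record how the stratification lines up with the shape of $T$: because $\Obs$ is a tree and $B$ is prefix-closed, a state $q \in F^j$ has the form $\sigma \rho$ with $\sigma \in A$ its longest $A$-prefix and $|\rho| = j+1$, so $F^j$ consists exactly of the states in $A \cdot I^{j+1}$ at distance $j+1$ from $B$. Completeness of $B$ and of $F^{<k}$ then follows by a counting-of-lengths argument: for $q \in F^j$ with $j \leq k-1$ (or $q \in B$) and $i \in I$, the sequence $q\,i$ consists of an $A$-prefix followed by at most $k+1$ inputs, hence lies in $A \cdot I^{\leq k+1} \subseteq T$ and is a state of $\Obs$.

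The core of the argument is to show that every state of $F^{\leq k}$ is identified and that condition (\ref{cotransitivity requirement FV}) holds; both follow from a single observation about where separating sequences live in $\Obs$. Fix $q \in F^j$ with $j \leq k$, so $q \in A \cdot I^{\leq k+1}$, and let $b_q \in B$ be the unique basis state with $f(b_q) = f(q)$. For any state $x \in A \cdot I^{\leq k}$ (in particular any basis state and any state of $F^{<k}$) with $f(x) \not\approx f(q)$, the state identifier $W_{f(q)}$ contains a separating sequence $w$ for $f(q)$ and $f(x)$. The concatenation component $A \cdot I^{\leq k+1} \odot {\cal W}$ puts the path $q\,w$ into $\Obs$ (as $w \in W_{f(q)}$), while the flattening component $A \cdot I^{\leq k} \cdot \bigcup {\cal W}$ puts the path $x\,w$ into $\Obs$ (as $w \in \bigcup {\cal W}$); since $\Spec$ is complete, $f$ forces the outputs along $w$ from $q$ and from $x$ to be $\semantics{f(q)}^{\Spec}(w)$ and $\semantics{f(x)}^{\Spec}(w)$, which differ, so $q \apart x$ in $\Obs$. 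Applying this with $x$ ranging over $B$ shows that $q$ is apart from every basis state other than $b_q$, while $\neg(q \apart b_q)$ by \cref{la: apartness refinement} (as $f(q) = f(b_q)$), so $C(q) = \{b_q\}$ and every state of $F^{\leq k}$ is identified; applying it with $q \in F^k$ and $x = r \in F^{<k}$ gives $q \apart r$ whenever $f(q) \not\approx f(r)$, and otherwise $b_q = b_r$ so $C(q) = C(r)$, which is exactly condition (\ref{cotransitivity requirement FV}). Pairwise apartness of distinct basis states, needed to make $B$ a basis, is the same computation with $\bigcup {\cal W}$ used on both sides.

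I expect the main obstacle to be the asymmetric bookkeeping in this last step: for a deep state $q \in F^k$ the witnessing sequence must be drawn from its own identifier $W_{f(q)}$ and supplied by the $\odot$ component, since the flattening component only reaches depth $k$, whereas the shallower partner $x$ obtains the very same $w$ from the flattening component. Getting these two sources to meet on a common separating sequence is precisely what dictates the three-part shape of $T$. Everything else (finiteness, the bijection, completeness, and the translation of the candidate-set condition into apartness) is routine once the stratification is identified with the length strata $A \cdot I^{j+1}$.
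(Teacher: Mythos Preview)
Your proposal is correct and follows the same approach as the paper: instantiate \cref{Cy:k-A-complete} with $B$ the set of $\Obs$-states accessed by $A$, verify basis-ness and completeness from the shape of $T$, and establish identification plus condition~(\ref{cotransitivity requirement FV}) by pairing the $\odot$-component (supplying the separating sequence after the deep state) with the flattening component $A \cdot I^{\leq k} \cdot \bigcup\mathcal W$ (supplying the same sequence after the shallow partner). One cosmetic point: your invocation of \cref{obs tree minimal state cover} for the bijection $f|_B$ precedes your verification that $B$ is a basis (pairwise apartness is only argued later), but the bijection already follows directly from $A$ being a minimal state cover, so this is harmless.
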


Also $k$-$A$-completeness of
the HSI-method of Luo et al \cite{Luo1995} and Petrenko et al \cite{YP90,PetrenkoYLD93}
follows from Corollary~\ref{Cy:k-A-complete} via a similar argument.

\begin{proposition}[$k$-$A$-completeness of the HSI-method]
	\label{completeness HSI}
	Let $\cal S$ be a complete, minimal Mealy machine, $k \geq 0$,
	$A$ a minimal state cover for $\cal S$, and ${\cal W}$ a separating family. Then $T = (A \cdot  I^{\leq k+1}) \cup (A \cdot  I^{\leq k+1} \odot {\cal W})$ is $k$-$A$-complete for $\Spec$.
\end{proposition}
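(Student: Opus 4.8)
The plan is to instantiate Corollary~\ref{Cy:k-A-complete} with the testing tree $\Obs = \mathsf{Tree}(\Spec, T)$ and the basis $B$ induced by the state cover $A$, and then to verify its four hypotheses. By Lemma~\ref{lemma testing tree} there is a functional simulation $f \colon \Obs \to \Spec$ sending each state $\sigma$ to $\delta^{\Spec}(q_0^{\Spec}, \sigma)$; throughout I write $\bar q = f(q)$ for the $\Spec$-state corresponding to a tree state $q$. I would take $B = \{ q \in Q^{\Obs} \mid \mathsf{access}(q) \in A \}$. Because $A$ is prefix-closed, $B$ is ancestor-closed, and because $A$ is a minimal state cover of the minimal machine $\Spec$, the map $\sigma \mapsto \bar\sigma$ is a bijection from $A$ onto $Q^{\Spec}$, so $f|_B$ is a bijection, $|B| = |Q^{\Spec}|$, and $A = \mathsf{access}(B)$.

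First I would establish that $B$ is genuinely a basis, i.e.\ that its states are pairwise apart, since this is where the harmonized hypothesis does its work. For distinct $\sigma, \rho \in A$ the states $\bar\sigma, \bar\rho$ are inequivalent in $\Spec$; since ${\cal W}$ is a separating family, $W_{\bar\sigma} \cap W_{\bar\rho}$ contains a sequence $w$ separating them. The block $A \cdot I^{\leq k+1} \odot {\cal W}$ contains both $\sigma w$ (as $w \in W_{\bar\sigma}$) and $\rho w$ (as $w \in W_{\bar\rho}$), so both continuations are present in $\Obs$; as $w$ produces different outputs from $\bar\sigma$ and $\bar\rho$, it witnesses the apartness of the corresponding basis states. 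The same mechanism is the engine for all remaining hypotheses: for any two tree states $q,r$ whose accesses lie in $A \cdot I^{\leq k+1}$ with $\bar q \neq \bar r$, a common witness $w \in W_{\bar q} \cap W_{\bar r}$ is appended to both $q$ and $r$ by the $\odot {\cal W}$ block, yielding $q \apart r$; and by Lemma~\ref{la: apartness refinement}, $\neg(q \apart r)$ whenever $\bar q = \bar r$ (else $\bar q \apart \bar q$, contradicting irreflexivity).

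Next I would check completeness and identification, which hinge on the block $A \cdot I^{\leq k+1} \subseteq T$. Writing $d(B,q) = |\mathsf{access}(q)| - \max\{ |\sigma| : \sigma \in A,\ \sigma \leq \mathsf{access}(q)\}$ for the distance of a tree state from the basis, any $q$ with $d(B,q) \leq k+1$ has $\mathsf{access}(q) = \sigma^\ast \tau$, where $\sigma^\ast \in A$ is the longest $A$-prefix and $|\tau| = d(B,q) \leq k+1$, so $\mathsf{access}(q) \in A \cdot I^{\leq k+1}$. For $q \in B \cup F^{<k}$ this forces $|\tau| \leq k$, whence $\sigma^\ast \tau\, j \in A \cdot I^{\leq k+1} \subseteq T$ for every $j \in I$, so $B$ and $F^{<k}$ are complete; crucially, the $\odot {\cal W}$ suffixes can never create incomplete shortcut states inside $F^{<k}$, precisely because every state at distance $\leq k$ already has its access in $A \cdot I^{\leq k+1}$. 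For $q \in F^k$ we likewise get $\mathsf{access}(q) \in A \cdot I^{\leq k+1}$, so the full identifier $W_{\bar q}$ is appended to $q$; combining the harmonized-witness argument with Lemma~\ref{la: apartness refinement}, $q$ is apart from every basis state except the unique $b^\ast \in B$ with $\bar{b^\ast} = \bar q$, hence $C(q) = \{b^\ast\}$ and $q$ is identified.

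Finally, condition~(\ref{cotransitivity requirement FV}) is verified for $q \in F^k$ and $r \in F^{<k}$ by the same dichotomy: each carries its full identifier, so if $\bar q = \bar r$ then $C(q) = C(r) = \{b^\ast\}$, while if $\bar q \neq \bar r$ the common witness in $W_{\bar q} \cap W_{\bar r}$ gives $q \apart r$. All hypotheses of Corollary~\ref{Cy:k-A-complete} then hold, so $T$ is $k$-$A$-complete. I expect the main obstacle to be the careful bookkeeping that the distance-from-basis formula places exactly the right states into $F^{<k}$ and $F^k$ and that identifier suffixes introduce no incomplete states at low frontier levels; once this is pinned down, the harmonized-family apartness argument is the conceptual heart and applies uniformly to basis apartness, frontier identification, and co-transitivity.
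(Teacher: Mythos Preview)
Your proposal is correct and follows essentially the same route as the paper: instantiate Corollary~\ref{Cy:k-A-complete} with $\Obs = \mathsf{Tree}(\Spec,T)$ and $B$ the states reached via $A$, then verify the basis, completeness, identification, and condition~(\ref{cotransitivity requirement FV}) using the harmonized witnesses in $W_{\bar q}\cap W_{\bar r}$. The paper's proof is slightly more terse (it just says ``since $A\cdot I^{\leq k+1}\subseteq T$, sets $B$ and $F^{<k}$ are complete'' without your distance-from-basis bookkeeping), and it explicitly proves the stronger statement that all of $F^{\leq k}$ is identified in one step rather than reusing the argument implicitly for $F^{<k}$ during the co-transitivity check, but these are presentational differences only.
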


The $W$-method of \cite{Vas73,Ch78}, and the
UIOv-method of  \cite{ChanEtAl89} are instances of the Wp-method, 
and the ADS-method of \cite{LYa94} and the hybrid ADS method of \cite{SmeenkMVJ15} are instances of the HSI-method.
This means that $k$-$A$-completeness of these methods also follows from our results.

The H-method of Dorofeeva et al \cite{DorofeevaEY05} is based on a variant of our Theorem~\ref{k-A-complete} which requires that all states in $F^{\leq k}$ are identified and replaces condition (\ref{cotransitivity requirement FV}) by
\begin{eqnarray}
\label{cotransitivity requirement IM}
\forall  q, r \in F^{\leq k} : q \xrightarrow{+} r & \Rightarrow & C(q) = C(r) \vee q \apart r
\end{eqnarray}
By Proposition~\ref{variant condition}, condition (\ref{cotransitivity requirement IM}) is implied by condition (\ref{cotransitivity requirement FV}) of Theorem~\ref{k-A-complete}.
Appendix~\ref{sc: appendix} 
contains an example showing that the H-method is not $k$-$A$-complete. 
However, as shown by \cite[Theorem 1]{DorofeevaEY05}, the H-method is $m$-complete. 
Our condition (\ref{cotransitivity requirement FV}) can be viewed as a strengthening of condition (\ref{cotransitivity requirement IM}) of the H-method, needed for $k$-$A$-completeness.
Below we present an alternative formulation of the $m$-completeness result for the H-method of \cite{DorofeevaEY05}, restated for our setting. 
\iflong
Appendix~\ref{sc:proofs}
\else
The full version of this article
\fi
contains a proof of this proposition that uses the same proof technique as Theorem~\ref{k-A-complete}.

\begin{proposition}[$m$-completeness of the H-method] \label{k-complete}
	Let $\Spec$ be a Mealy machine with $n$ states, let $T$ be a test suite for $\Spec$, let $\Obs = \mathsf{Tree}(\Spec, T)$, let $B$ be a basis for $\Obs$ with $n$ states, let $A = \mathsf{access}(B)$, let $F^0, F^1,\ldots$ be the stratification of $\Obs$ induced by $B$, and let $m, k \geq 0$ with $m = n+k$.
	Suppose $B$ and $F^{<k}$ are complete,
	all states in $F^{\leq k}$ are identified, and condition (\ref{cotransitivity requirement IM}) holds.
	Then $T$ is $m$-complete.
\end{proposition}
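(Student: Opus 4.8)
The plan is to reuse the folding technique from the proof of Theorem~\ref{k-A-complete}, with the global state bound $|Q^{\M}| \le m = n+k$ playing the role that the reachability assumption $\M \in \U^A_k$ played there. Since $\M$ passes $T$, Lemma~\ref{observation tree when test suite passes} yields a functional simulation $g \colon \Obs \to \M$, and Lemma~\ref{lemma testing tree} gives $f \colon \Obs \to \Spec$. As $|B| = n = |Q^{\Spec}|$, Lemma~\ref{obs tree minimal state cover} tells us that $f$ restricted to $B$ is a bijection, $\Spec$ is minimal, and $A$ is a minimal state cover; together with completeness of $B$ this also makes $\Spec$ complete. Basis states are pairwise apart, so by Lemma~\ref{la: apartness refinement} their images $g(b)$ are pairwise apart, hence pairwise distinct: $g$ restricted to $B$ is injective, and $\M$ therefore contains at most $k$ states outside the $n$-element set $g(B)$.

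I would then prove $\semantics{q_0^{\Spec}}^{\Spec} = \semantics{q_0^{\M}}^{\M}$ directly, by induction on the input word being read. The invariant keeps a tree state $t$ with $f(t) = \delta^{\Spec}(q_0^{\Spec}, w)$ and $g(t) \approx \delta^{\M}(q_0^{\M}, w)$ in $\M$, where $w$ is the prefix read so far, starting from $t = q_0^{\Obs}$. While $t$ has the required successor in $\Obs$ (guaranteed inside $B$ and $F^{<k}$, both complete), I move along it; the next outputs of $\Spec$ and $\M$ then both equal $\lambda^{\Obs}(t, i)$ and hence agree. When $t$ is a frontier leaf I \emph{fold}: since $t$ is identified with $C(t) = \{ b \}$, apartness of $t$ from every other basis state, Lemma~\ref{la: apartness refinement}, bijectivity of $f$ on $B$, and completeness of $\Spec$ together force $f(t) = f(b)$, so the specification side folds for free; the implementation side requires $g(t) \approx g(b)$ in $\M$ (the Key Lemma below). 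After folding we continue from $b \in B$, which again has a full subtree, so the induction never stalls and all outputs coincide.

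The crux---and the step I expect to be hardest---is the implementation-side folding $g(t) \approx g(b)$ for identified $t \in F^{\leq k}$ with $C(t) = \{ b \}$, where the state bound and condition~(\ref{cotransitivity requirement IM}) must be combined. By completeness of $B$ and $F^{<k}$ there is a path $b = t_0, t_1, \ldots, t_{k+1}$ in $\Obs$ with $t_i \in F^{i-1}$ for $i \ge 1$. Its $k+1$ images $g(t_1), \ldots, g(t_{k+1})$ cannot be $k+1$ distinct states all avoiding $g(B)$, as $\M$ has at most $k$ states outside $g(B)$; hence two path images coincide, or some $g(t_j)$ equals a basis image. In either case we obtain two tree states with equal $\M$-image that are related by $\xrightarrow{+}$ inside $F^{\le k}$ (or, in the basis case, identified with the same candidate); since states with equal image cannot be apart (Lemma~\ref{la: apartness refinement} plus irreflexivity of $\apart$), condition~(\ref{cotransitivity requirement IM}) forces their candidate sets to agree. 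Propagating this consistency along repeats, using determinism of $\M$ to turn coinciding images into matching future behavior, is what upgrades non-apartness in the finite tree into genuine equivalence $g(t) \approx g(b)$ in $\M$; making this propagation precise is the main obstacle. Once the Key Lemma is established, the folding induction yields $\Spec \approx \M$, proving $m$-completeness. This is exactly where the weaker condition~(\ref{cotransitivity requirement IM}) of the H-method suffices: the global bound $|Q^{\M}| \le n+k$ supplies the pigeonhole that in Theorem~\ref{k-A-complete} was instead provided by the reachability assumption $\M \in \U^A_k$.
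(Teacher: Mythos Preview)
Your overall shape---use the tree, exploit $|Q^{\M}|\le n+k$ via pigeonhole, and fold back when the tree runs out---is the right one, and the bisimulation-versus-word-induction difference is largely cosmetic. The genuine gap is the Key Lemma itself.

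The claim $g(t)\approx g(b)$ for every identified $t$ with $C(t)=\{b\}$ is a \emph{semantic} equivalence between two states of $\M$; it is essentially equivalent to $\M\approx\Spec$ and cannot be established one pair at a time before the global argument is finished. Your sketch does not prove it: the path $b=t_0,t_1,\ldots,t_{k+1}$ is built from the candidate $b$ and need not contain $t$ at all, so whatever pigeonhole coincidence you find among $g(t_1),\ldots,g(t_{k+1})$ says nothing about $g(t)$. Even if you replace this by the access path of $t$, pigeonhole only yields \emph{equality} of $g$-images somewhere on that path, not the equivalence $g(t)\approx g(b)$; the ``propagation'' step you flag as the main obstacle is exactly the whole theorem.

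The paper avoids this by never claiming $g(t)\approx g(b)$. Instead it folds to a state $t''$ on the \emph{access path} $W$ of $t'\in F^k$ with $g(t'')=g(t')$ (state equality, obtained from pigeonhole once $g$ is shown injective on $B\cup W$, which has $n+k=m$ elements), and then uses identification together with condition~(\ref{cotransitivity requirement IM}) to deduce $f(t'')=f(t')$. Injectivity of $g$ on $B\cup W$ is where the extra ingredient you are missing enters: the bisimulation witness $t$ for $(s,q)$ is chosen from the \emph{lowest possible stratum}, so that any coincidence $g(u)=g(u')$ with $f(u)=f(u')$ among ancestors of $t'$ would contradict minimality of $t$. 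Your framework has no analogue of this minimal-stratum selection, and without it the injectivity step---hence the pigeonhole---does not go through. If you adjust your fold to target such a $t''$ (equal $g$-image, lower stratum) rather than the basis candidate $b$, and incorporate a lowest-stratum choice, your word-by-word induction becomes a reformulation of the paper's bisimulation proof.
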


\section{Conclusions and Future Work}
\label{sec: discussion}
In order to solve a long-standing open problem of Hennie \cite{Hennie64},
we proposed the notion of $k$-$A$-completeness.
We showed that the fault domain for $k$-$A$-completeness is larger than the one for $m$-completeness (if $m = |A|+ k$), and includes FSMs with a number of extra states that grows exponentially in $k$.
We provided a sufficient condition for $k$-$A$-completeness in terms of apartness of states of the observation induced by a test suite. Our condition can be checked efficiently (in terms of the size of the test suite) and can be used to prove $k$-$A$-completeness of 
the Wp-method of \cite{FujiwaraEtAl91} and the HSI-method of \cite{Luo1995,YP90,PetrenkoYLD93}.
Our results show that the Wp and HSI methods are complete for much larger fault domains than the ones for which they were originally designed.
We presented counterexamples to show that the SPY-method \cite{SPY12}, the H-method \cite{DorofeevaEY05}, and the SPYH-method \cite{SouchaB18} are not $k$-$A$-complete.

We view our results as an important step towards the definition of fault domains that capture realistic assumptions about possible faults, but still allow for the design of sufficiently small test suites that are complete for the fault domain and can be run within reasonable time.
A promising research direction is to reduce the size of the fault domains via reasonable assumptions on the structure of the implementation under test.  Kruger et al \cite{KrugerJR24} show that significantly smaller test suites suffice if the fault domains $\U_m$ are reduced by making plausible assumptions about the implementation.  The same approach can also be applied for the fault domains $\U_k^A \cup \U^A$ proposed in this article.
We expect that our results will find applications in the area of model learning
\cite{Ang87,PeledVY99,Vaa17,HowarS2018}.
Black-box conformance testing, which is used to find counterexamples for hypothesis models, has become the main bottleneck in applications of model learning \cite{Vaa17,YangASLHCS19}.
This provides motivation and urgency to find fault domains that provide a better fit with applications and allow for smaller test suites \cite{KrugerJR24}.

An important question for future research is to explore empirical evidence for the usefulness of our new fault domains: how often are faulty implementations of real systems contained in fault domains $\U_k^A$, for small $k$ and a sensible choice of $A$?
H\"{u}bner et al \cite{HubnerHP19} investigated how test generation methods perform for SUTs whose behaviors lie outside the fault domain. They considered some realistic SUTs 
--- SystemC implementations of a ceiling speed monitor and an airbag controller described in SysML  ---
and used mutation operators to introduce bugs in a systematic and automated manner. Their experiments show that $m$-complete test suites generated by the  W- and Wp-methods exhibit significantly greater test strength than conventional random testing, even for behavior outside the fault domain.
It would be interesting to revisit these experiments and check which fraction of the detected faults is outside $\U_m$ but contained in $\U_k^A$.

Our condition is sufficient but not necessary for completeness.
If one can prove, based on the assumptions of the selected fault domain, that two traces $\sigma$ and $\tau$ reach the same state both in specification $\Spec$ and in implementation $\M$, then it makes no difference in test suites whether a suffix $\rho$ is appended  after $\sigma$ or after $\tau$.
Sim\~{a}o et al \cite{SimaoPY09,SPY12} were the first to exploit this idea of \emph{convergence} to reduce the size of test suites in a setting for $m$-completeness.
It is future work to adapt these results to reduce the size of $k$-$A$-complete test suites.
Also, the complexity of deciding whether a test suite is $k$-$A$-complete is still unknown.
Another direction for future work is to lift our results to partial FSMs with observable nondeterminism, e.g.\  by adapting the state counting method \cite{PetrenkoYLD93}.

Closest to our characterization is the work of Sachtleben~\cite{Sachtleben24}, who develops unifying frameworks for proving $m$-completeness of test generation methods.
Inspired by the H-method, he defines an abstract H-condition that is sufficiently strong to prove $m$-completeness of the Wp, HSI, SPY, H and SPYH methods.
Sachtleben~\cite{Sachtleben24} also considers partially defined FSMs with observable nondeterminism, and takes convergence into account. Moreover, he mechanized all his proofs using Isabelle.
It would be interesting to explore whether the formalization of \cite{Sachtleben24} can be adapted to our notion of $k$-$A$-completeness.

An obvious direction for future work is to extend our notion of $k$-$A$-completeness to richer classes of models, such as timed automata and register automata.  The recent work of \cite{RK25} may provide some guidance here.
 %
%
It will also be interesting to explore if our characterization can be used for efficient test suite generation, or for pruning test suites that have been generated by other methods.
%


\bibliographystyle{plainurl}
\bibliography{abbreviations,dbase}

\appendix
\newpage
\section{Condition (\ref{cotransitivity requirement FV}) is Needed}
\label{condition 1 is needed}

Proposition~\ref{full frontier identified} establishes that condition (\ref{cotransitivity requirement FV}) implies that all states in $F^{<k}$ are identified.
The converse implication does not hold: even if all states in $B \cup F^{\leq k}$ are identified, condition (\ref{cotransitivity requirement FV}) may not hold.
The Mealy machines $\cal S$ and $\M$ of Figure~\ref{Fig:co-transitivity needed} present a counterexample with $k=1$ and $A = \{ \epsilon, l, r \}$.
\begin{figure}[ht!]
	\begin{center}
		\begin{tikzpicture}[->,>=stealth',shorten >=1pt,auto,node distance=1.8cm,main node/.style={circle,draw,font=\sffamily\large\bfseries},
			]
			\def\yoffset{8mm}
			\node[initial,state,frontier] (0) {\treeNodeLabel{$q_0$}};
			\node[state,frontier] (1) [right of=0] {\treeNodeLabel{$q_1$}};
			\node[state,frontier] (2) [below of=1] {\treeNodeLabel{$q_2$}};
			
			\node[initial,state,frontier] [right of=1,xshift=2cm](q0) {\treeNodeLabel{$q_0$}};
			\node[state,frontier] (q1) [right of=q0] {\treeNodeLabel{$q_1$}};
			\node[state,frontier] (q2) [below of=q1] {\treeNodeLabel{$q_2$}};
			\node[state,frontier] (q3) [left of=q0] {\treeNodeLabel{$q_3$}};

			\path[every node/.style={font=\sffamily\scriptsize}]
			(0) edge [bend left] node[sloped,above] {$r/0$} (1)
			edge node[sloped,above] {$l/0$} (2)
			(1) edge node[below] {$l/1$} (0)
			edge [bend left] node[right] {$r/0$} (2)
			(2) edge [bend left] node[sloped,below] {$r/1$} (0)
			edge  node[left] {$l/0$} (1)
			(q0) edge node[sloped, above] {$l/0$} (q2)
			edge [bend left] node {$r/0$} (q1)	
			(q1) edge node[below] {$l/1$} (q0)
			edge [bend left] node {$r/0$} (q2)
			(q2) edge [bend left] node[sloped,below] {$r/1$} (q3)
			edge node[left] {$l/0$} (q1)
			(q3) edge[loop above] node {$l/0$} (q3)
			edge [bend left=45] node {$r/0$} (q1)
			;
		\end{tikzpicture}
		\caption{A specification $\cal S$ (left) and a faulty implementation $\M$ (right).}
		\label{Fig:co-transitivity needed}
	\end{center}
\end{figure}
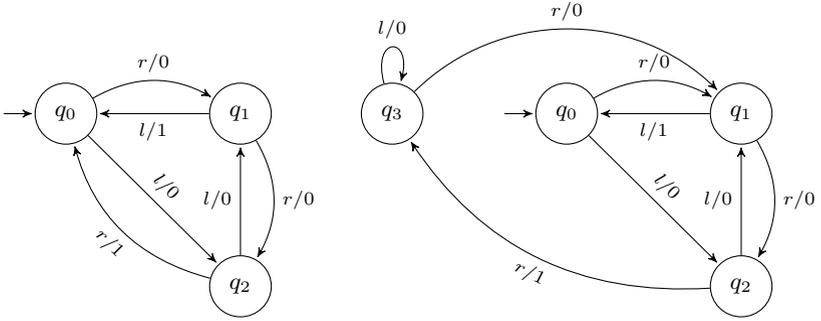
Note that these machines are not equivalent: input sequence $r r r l l l$ distinguishes them.
The extra state $q_3$ of $\M$ behaves similar as state $q_0$ of $\cal S$, but is not equivalent.  
Figure \ref{observation tree M and S} shows an observation tree $\Obs$ for both $\cal S$ and $\M$. 
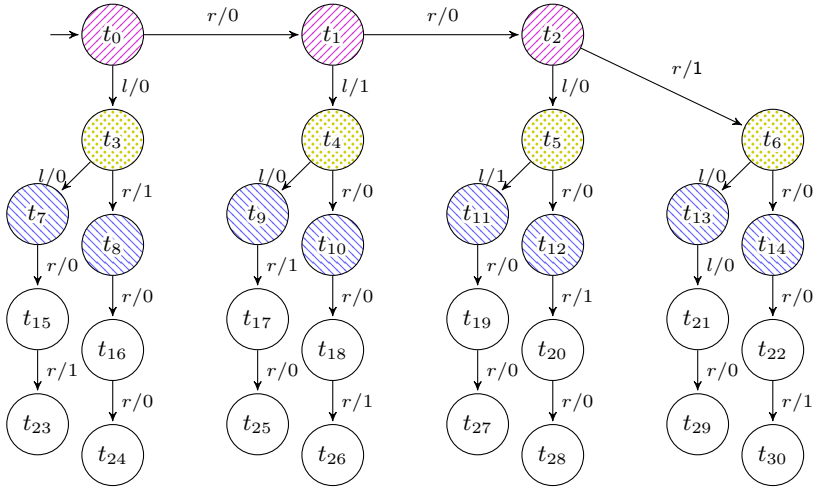
\begin{figure}[b!]
	\begin{center}
		\begin{tikzpicture}[->,>=stealth',shorten >=1pt,auto,node distance=1.4cm,main node/.style={circle,draw,font=\sffamily\large\bfseries}]
			\node[initial,state,basis] (0) {\treeNodeLabel{$t_0$}};
			\node[state,basis] (1) [right of=0,xshift=1.5cm] {\treeNodeLabel{$t_1$}};
			\node[state,basis] (2) [right of=1,xshift=1.5cm] {\treeNodeLabel{$t_2$}};
			\node[state,frontier] (3) [below of=0] {\treeNodeLabel{$t_3$}};
			\node[state,frontier] (4) [below of=1] {\treeNodeLabel{$t_4$}};
			\node[state,frontier] (5) [below of=2,] {\treeNodeLabel{$t_5$}};
			\node[state,frontier] (6) [right of=5,xshift=1.5cm] {\treeNodeLabel{$t_6$}};
			\node[state,q1class] (7) [below left of=3] {\treeNodeLabel{$t_7$}};
			\node[state,q1class] (8) [below of=3] {\treeNodeLabel{$t_8$}};
			\node[state,q1class] (9) [below left of=4] {\treeNodeLabel{$t_9$}};
			\node[state,q1class] (10) [below  of=4] {\treeNodeLabel{$t_{10}$}};
			\node[state,q1class] (11) [below left of=5] {\treeNodeLabel{$t_{11}$}};
			\node[state,q1class] (12) [below of=5] {\treeNodeLabel{$t_{12}$}};		
			\node[state,q1class] (13) [below left of=6] {\treeNodeLabel{$t_{13}$}};
			\node[state,q1class] (14) [below  of=6] {\treeNodeLabel{$t_{14}$}};
			\node[state] (15) [below of=7] {\treeNodeLabel{$t_{15}$}};
			\node[state] (16) [below of=8] {\treeNodeLabel{$t_{16}$}};
			\node[state] (17) [below of=9] {\treeNodeLabel{$t_{17}$}};
			\node[state] (18) [below of=10] {\treeNodeLabel{$t_{18}$}};
			\node[state] (19) [below of=11] {\treeNodeLabel{$t_{19}$}};
			\node[state] (20) [below of=12] {\treeNodeLabel{$t_{20}$}};
			\node[state] (21) [below of=13] {\treeNodeLabel{$t_{21}$}};
			\node[state] (22) [below of=14] {\treeNodeLabel{$t_{22}$}};
			\node[state] (23) [below of=15] {\treeNodeLabel{$t_{23}$}};
			\node[state] (24) [below of=16] {\treeNodeLabel{$t_{24}$}};
			\node[state] (25) [below of=17] {\treeNodeLabel{$t_{25}$}};
			\node[state] (26) [below of=18] {\treeNodeLabel{$t_{26}$}};
			\node[state] (27) [below of=19] {\treeNodeLabel{$t_{27}$}};
			\node[state] (28) [below of=20] {\treeNodeLabel{$t_{28}$}};
			\node[state] (29) [below of=21] {\treeNodeLabel{$t_{29}$}};
			\node[state] (30) [below of=22] {\treeNodeLabel{$t_{30}$}};
			
			\path[every node/.style={font=\sffamily\scriptsize}]
			(0) edge node {$r/0$} (1)
			edge node {$l/0$} (3)
			(1) edge node {$r/0$} (2)
			edge node {$l/1$} (4)
			(2) edge node {$r/$1} (6)
			edge node {$l/0$} (5)
			(3) edge node[left] {$l/0$} (7)
			edge node {$r/1$} (8)
			(4) edge node[left] {$l/0$} (9)
			edge node {$r/0$} (10)
			(5) edge node[left] {$l/1$} (11)
			edge node {$r/0$} (12)
			(6) edge node[left] {$l/0$} (13)
			edge node {$r/0$} (14)
			(7) edge node {$r/0$} (15)
			(15) edge node {$r/1$} (23)
			(8) edge node {$r/0$} (16)
			(16) edge node {$r/0$} (24)
			(9) edge node {$r/1$} (17)
			(17) edge node {$r/0$} (25)
			(10) edge node {$r/0$} (18)
			(18) edge node {$r/1$} (26)
			(11) edge node {$r/0$} (19)
			(19) edge node {$r/0$} (27)
			(12) edge node {$r/1$} (20)
			(20) edge node {$r/0$} (28)
			(13) edge node {$l/0$} (21)
			(21) edge node {$r/0$} (29)
			(14) edge node {$r/0$} (22)
			(22) edge node {$r/1$} (30)
			;
		\end{tikzpicture}
	\end{center}
	\caption{Observation tree for FSMs $\cal S$ and $\M$ from Figure~\ref{Fig:co-transitivity needed}.}
	\label{observation tree M and S}
\end{figure}
Observation tree $\Obs$ meets all the requirements of Theorem~\ref{k-A-complete}, except condition (\ref{cotransitivity requirement FV}).
One way to think of $\Obs$ is that $\M$ cherry picks distinguishing sequences from $\cal S$ to ensure that the $F^1$ states are identified by a sequence for which $\cal S$ and $\M$ agree.
Note that $B$ and $F^0$ are both complete, and all states in $B$ and $F^{\leq 1}$ are identified.
However, $\Obs$ does not satisfy condition (\ref{cotransitivity requirement FV}) as $t_{13}$ is not apart from $t_6$,  $C(t_6) = \{ t_0 \}$ and $C(t_{13}) = \{t_2 \}$.
The example shows that without condition (\ref{cotransitivity requirement FV}), Theorem~\ref{k-A-complete} doesn't hold.

\newpage

\section{The SPY and H-methods are not $k$-$A$-Complete}
\label{sc: appendix}

\begin{example}
	\label{ex:SPY}
	Figure~\ref{Fig:SPY}(left) shows the running example $\Spec$ from the article by Sim\~{a}o, Petrenko and Yevtushenko that introduces the SPY-method \cite{SPY12}. Using this method, a 3-complete test suite $\{ aaaa, baababba$, $bbabaa \}$ was derived in \cite{SPY12}.
	Consider the minimal state cover $A = \{ \epsilon, a \}$ for $\Spec$.
	Implementation $\M$ from Figure~\ref{Fig:SPY}(right) is contained in fault domain
	$\U_1^A$, since all states can be reached via at most one transition from $0'$ and $1'$.
	Clearly $\Spec \not\approx \M$, as input sequence $aab$ provides a counterexample.
	Nevertheless, $\M$ passes the derived test suite.
	Thus the test suite generated by the SPY-method \cite{SPY12} is not $1$-$A$-complete.
	\begin{figure}[ht!]
		\begin{center}
			\begin{tikzpicture}[->,>=stealth',shorten >=1pt,auto,node distance=2.5cm,main node/.style={circle,draw,font=\sffamily\large\bfseries},
			]
			\def\yoffset{8mm}
			\node[initial,state,frontier] (s0) {\treeNodeLabel{$s_0$}};
			\node[state,frontier] (s1) [right of=s0] {\treeNodeLabel{$s_1$}};
			\node[initial,state,frontier] (0p) [right of=s1,xshift=0.5cm] {\treeNodeLabel{$0'$}};
			\node[state,frontier] (1p) [right of=0p,xshift=2cm] {\treeNodeLabel{$1'$}};
			\node[state,frontier] (0) [below of=0p,yshift=-1cm] {\treeNodeLabel{$0$}};
			\node[state,frontier] (0pp) [below right of=0p] {\treeNodeLabel{$0''$}};
			\node[state,frontier] (1) [below of=1p,yshift=-1cm] {\treeNodeLabel{$1$}};
			
			\path[every node/.style={font=\sffamily\scriptsize}]
			(s0) edge node {$a/1$} (s1)
			edge [loop below] node {$b/0$} (s0)
			(s1) edge [bend left] node {$a/0$} (s0)
			edge [loop below] node {$b/0$} (s1)
			(0p) edge node {$a/1$} (1p)
			edge node[left]{$b/0$} (0)
			(1p) edge node[left] {$a/0$} (0pp)
			edge node[left] {$b/0$} (1)
			(0) edge node {$a/1$} (1)
			edge [loop below] node {$b/0$} (0)
			(1) edge [bend left] node {$a/0$} (0)
			edge [loop below] node {$b/0$} (1)
			(0pp) edge [loop left] node {$b/1$} (0pp)
			edge node {$a/1$} (1)
			;
			\end{tikzpicture}
			\caption{An implementation from fault domain
				$\U_1^A$ (right) that passes the $3$-complete test suite $\{ aaaa$, $baababba$, $bbabaa \}$ that was constructed for the specification (left) using the SPY-method.}
			\label{Fig:SPY}
		\end{center}
	\end{figure}
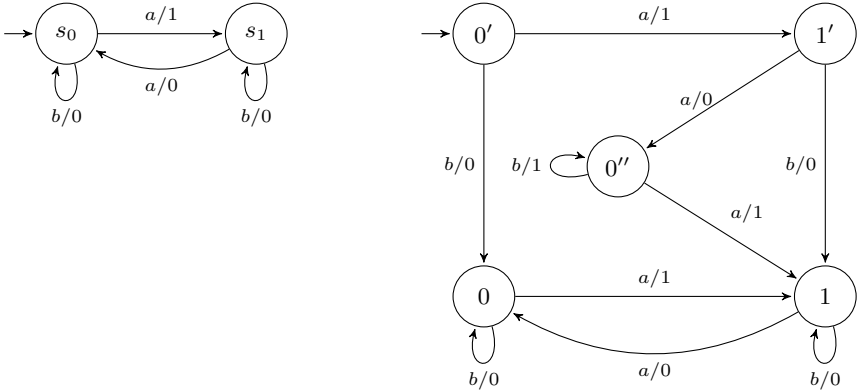
\end{example}

\begin{example}
		Figure~\ref{Fig:abc} shows the tree for a $3$-complete test suite generated by the H-method of Dorofeeva et al \cite{DorofeevaEY05} for the machine of Figure~\ref{Fig:H}(left). 
		This tree satisfies condition (\ref{cotransitivity requirement IM}) since the only transitions from $F^0$ to $F^1$ that change the candidate set are $a$-transitions, and the sources and targets of those transitions are apart.
		The machine of Figure~\ref{Fig:H}(right) will pass this test suite, even though the two machines are inequivalent ($cbc$ is a counterexample).
		It is easy to check that the machine on the right is in $\U_1^A$, for $A = \{ \epsilon, a \}$.
		Thus the test suite generated by the H-method is not $1$-$A$-complete.
		Indeed, the test suite does not meet condition (\ref{cotransitivity requirement FV}) since (for example) the states with access sequences $cb$ and $ac$ have different candidate sets but are not apart.
	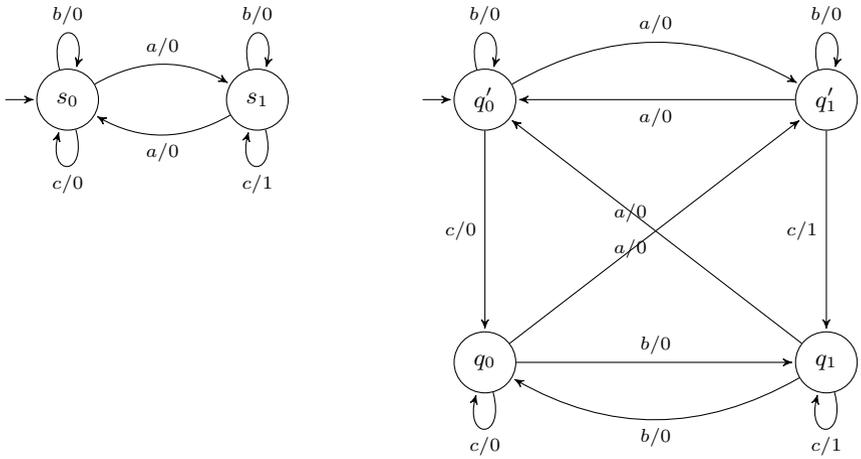
\begin{figure}[ht!]
		\begin{center}
			\begin{tikzpicture}[->,>=stealth',shorten >=1pt,auto,node distance=2.5cm,main node/.style={circle,draw,font=\sffamily\large\bfseries},
			]
			\def\yoffset{8mm}
			\node[initial,state,frontier] (s0) {\treeNodeLabel{$s_0$}};
			\node[state,frontier] (s1) [right of=s0] {\treeNodeLabel{$s_1$}};
			\node[initial,state,frontier] (0p) [right of=s1,xshift=0.5cm] {\treeNodeLabel{$q'_0$}};
			\node[state,frontier] (1p) [right of=0p,xshift=2cm] {\treeNodeLabel{$q'_1$}};
			\node[state,frontier] (0) [below of=0p,yshift=-1cm] {\treeNodeLabel{$q_0$}};
			\node[state,frontier] (1) [below of=1p,yshift=-1cm] {\treeNodeLabel{$q_1$}};
			
			\path[every node/.style={font=\sffamily\scriptsize}]
			(s0) edge[bend left] node {$a/0$} (s1)
				 edge [loop below] node {$c/0$} (s0)
				 edge [loop above] node {$b/0$} (s0)
			(s1) edge [bend left] node {$a/0$} (s0)
				 edge [loop below] node {$c/1$} (s1)
				 edge [loop above] node {$b/0$} (s1)
			(0p) edge[bend left] node {$a/0$} (1p)
				 edge node[left]{$c/0$} (0)
				 edge [loop above] node {$b/0$} (0p)
			(1p) edge node {$a/0$} (0p)
				 edge node[left] {$c/1$} (1)
				 edge [loop above] node {$b/0$} (1p)
			(0) edge node {$b/0$} (1)
				edge [loop below] node {$c/0$} (0)
				edge node{$a/0$} (1p)
			(1) edge [bend left] node {$b/0$} (0)
				edge [loop below] node {$c/1$} (1)
				edge node{$a/0$} (0p)
			;
			\end{tikzpicture}
			\caption{Specification (left) and implementation (right) from fault domain $\U_1^A$ that passes the test suite of Figure~\ref{Fig:abc}.}
			\label{Fig:H}
		\end{center}
	\end{figure}

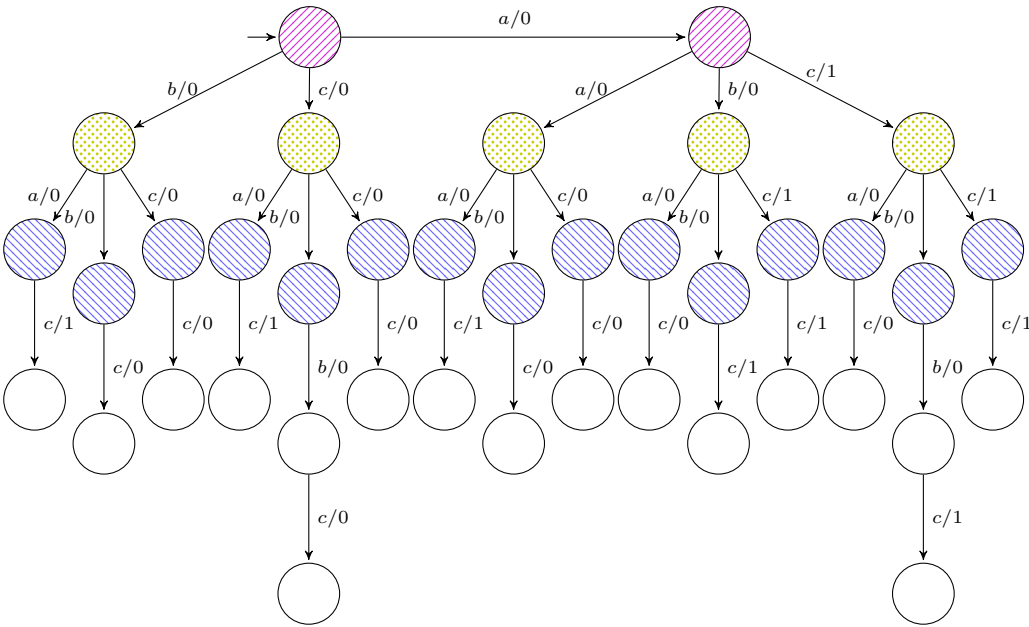
\begin{figure}[hbt!]
	\begin{center}
		\begin{tikzpicture}[->,>=stealth',shorten >=1pt,auto,node distance=2cm,main node/.style={circle,draw,font=\sffamily\large\bfseries}]
		\node[initial,state,basis] (init) {};
		\node[state,basis] (a) [right of=init,xshift=3.4cm] {};
		
		\node[state,frontier] (b) [below left of=init,xshift=-1.3cm] {};
		\node[state,frontier] (c) [right of=b,xshift=0.7cm] {};
		\node[state,frontier] (aa) [right of=c,xshift=0.7cm] {};
		\node[state,frontier] (ab) [right of=aa,xshift=0.7cm] {};
		\node[state,frontier] (ac) [right of=ab,xshift=0.7cm] {};
		
		\node[state,q1class] (ba) [below left of=b,xshift=0.5cm] {};
		\node[state,q1class] (bb) [below of=b] {};
		\node[state,q1class] (bc) [below right of=b,xshift=-0.5cm] {};
		\node[state,q1class] (ca) [below left of=c,xshift=0.5cm] {};
		\node[state,q1class] (cb) [below of=c] {};
		\node[state,q1class] (cc) [below right of=c,xshift=-0.5cm] {};
		\node[state,q1class] (aaa) [below left of=aa,xshift=0.5cm] {};
		\node[state,q1class] (aab) [below of=aa] {};
		\node[state,q1class] (aac) [below right of=aa,xshift=-0.5cm] {};
		\node[state,q1class] (aba) [below left of=ab,xshift=0.5cm] {};
		\node[state,q1class] (abb) [below of=ab] {};
		\node[state,q1class] (abc) [below right of=ab,xshift=-0.5cm] {};
		\node[state,q1class] (aca) [below left of=ac,xshift=0.5cm] {};
		\node[state,q1class] (acb) [below of=ac] {};
		\node[state,q1class] (acc) [below right of=ac,xshift=-0.5cm] {};
			
		\node[state] (bac) [below of=ba] {};
		\node[state] (bbc) [below of=bb] {};
		\node[state] (bcc) [below  of=bc] {};
		\node[state] (cac) [below  of=ca] {};
		\node[state] (cbb) [below of=cb] {};
		\node[state] (cbbc) [below of=cbb] {};
		\node[state] (ccc) [below  of=cc] {};
		\node[state] (aaac) [below  of=aaa] {};
		\node[state] (aabc) [below of=aab] {};
		\node[state] (aacc) [below of=aac] {};
		\node[state] (abac) [below  of=aba] {};
		\node[state] (abbc) [below of=abb] {};
		\node[state] (abcc) [below of=abc] {};
		\node[state] (acac) [below of=aca] {};
		\node[state] (acbb) [below of=acb] {};
		\node[state] (acbbc) [below of=acbb] {};
		\node[state] (accc) [below of=acc] {};
		
		\path[every node/.style={font=\sffamily\scriptsize}]
		(init)  edge node[above] {$a/0$} (a)
				edge node[left] {$b/0$} (b)
			    edge node[right] {$c/0$} (c)	
		(a) edge node[left] {$a/0$} (aa)
			edge node[right] {$b/0$} (ab)
			edge node[above] {$c/1$} (ac)	
		(b) edge node[left] {$a/0$} (ba)	
			edge node[left] {$b/0$} (bb)
			edge node[right] {$c/0$} (bc)	
		(c) edge node[left] {$a/0$} (ca)
			edge node[left] {$b/0$} (cb)
			edge node[right] {$c/0$} (cc)
		(aa) edge node[left] {$a/0$} (aaa)
			 edge node[left] {$b/0$} (aab)
			 edge node[right] {$c/0$} (aac)
		(ab) edge node[left] {$a/0$} (aba)
			 edge node[left] {$b/0$} (abb)
			 edge node[right] {$c/1$} (abc)
		(ac) edge node[left] {$a/0$} (aca)
			edge node[left] {$b/0$} (acb)
			edge node[right] {$c/1$} (acc)
		(ba) edge node[right] {$c/1$} (bac)
		(bb) edge node[right] {$c/0$} (bbc)
		(bc) edge node[right] {$c/0$} (bcc)
		(ca) edge node[right] {$c/1$} (cac)
		(cb) edge node[right] {$b/0$} (cbb)
		(cc) edge node[right] {$c/0$} (ccc)
		(aaa) edge node[right] {$c/1$} (aaac)
		(aab) edge node[right] {$c/0$} (aabc)
		(aac) edge node[right] {$c/0$} (aacc)
		(aba) edge node[right] {$c/0$} (abac)
		(abb) edge node[right] {$c/1$} (abbc)
		(abc) edge node[right] {$c/1$} (abcc)
		(aca) edge node[right] {$c/0$} (acac)
		(acb) edge node[right] {$b/0$} (acbb)
		(acc) edge node[right] {$c/1$} (accc)
		(cbb) edge node[right] {$c/0$} (cbbc)
		(acbb) edge node[right] {$c/1$} (acbbc)
		;
		\end{tikzpicture}
	\end{center}
	\caption{Testing tree for 3-complete test suite constructed for the specification of Figure~\ref{Fig:H}(left) using the H-method.}
	\label{Fig:abc}
\end{figure}
\end{example}

\newpage

\section{Proofs}
\label{sc:proofs}

We defined the lifted transition and output functions inductively for sequences of the form $i \sigma$. The next lemma basically says that we could have equivalently defined it inductively for sequences of the form $\sigma i$.

\begin{lemma}
	\label{la lifted functions}
	Let $\M$ be a Mealy machine, $q \in Q$, $i \in I$ and $\sigma \in I^*$. Then
	$\delta(q, \sigma i)  =  \delta(\delta(q, \sigma), i)$ and
	$\lambda(q, \sigma i) = \lambda(q, \sigma) \cdot \lambda(\delta(q, \sigma), i)$.
\end{lemma}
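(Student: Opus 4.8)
The plan is to proceed by induction on the length of $\sigma$, peeling off the \emph{head} of $\sigma$ so as to line up with the inductive (prepend) definition of the lifted functions. Recall that this definition sets $\delta(q,\epsilon) = q$ and $\lambda(q,\epsilon) = \epsilon$, and for $j \in I$ and $\tau \in I^*$ it sets $\delta(q, j\tau) = \delta(\delta(q,j),\tau)$ and $\lambda(q, j\tau) = \lambda(q,j) \cdot \lambda(\delta(q,j),\tau)$, everything understood up to Kleene equality so that definedness of a left-hand side coincides with simultaneous definedness of the constituents on the right.

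For the base case $\sigma = \epsilon$ I would simply observe that $\sigma i = i$, so both claimed equalities collapse to trivial identities: $\delta(q, i) = \delta(\delta(q,\epsilon), i)$ since $\delta(q,\epsilon)=q$, and $\lambda(q,i) = \epsilon \cdot \lambda(q,i) = \lambda(q,\epsilon) \cdot \lambda(\delta(q,\epsilon), i)$.

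For the inductive step I would write $\sigma = j\tau$ with $j = {\sf hd}(\sigma) \in I$ and $\tau = {\sf tl}(\sigma) \in I^*$, and assume the statement for $\tau$ at every state, in particular at $\delta(q,j)$. Then $\sigma i = j(\tau i)$, so one unfolding of the prepend definition gives $\delta(q,\sigma i) = \delta(\delta(q,j), \tau i)$; applying the induction hypothesis to $\delta(q,j)$ and $\tau$ yields $\delta(\delta(q,j),\tau i) = \delta(\delta(\delta(q,j),\tau), i)$, and since $\delta(\delta(q,j),\tau) = \delta(q, j\tau) = \delta(q,\sigma)$ by the prepend definition, the first equality follows. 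The output case is analogous: unfolding gives $\lambda(q,\sigma i) = \lambda(q,j) \cdot \lambda(\delta(q,j), \tau i)$, the induction hypothesis rewrites $\lambda(\delta(q,j),\tau i)$ as $\lambda(\delta(q,j),\tau) \cdot \lambda(\delta(\delta(q,j),\tau), i)$, and collecting the first two factors via $\lambda(q,j) \cdot \lambda(\delta(q,j),\tau) = \lambda(q, j\tau) = \lambda(q,\sigma)$ together with $\delta(\delta(q,j),\tau) = \delta(q,\sigma)$ delivers $\lambda(q,\sigma i) = \lambda(q,\sigma) \cdot \lambda(\delta(q,\sigma), i)$.

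The only genuine subtlety, and hence the step I would watch most carefully, is the bookkeeping of \emph{partiality}: since all equalities are Kleene equalities, at each unfolding I must check that the two sides are defined under exactly the same conditions. This is automatic from the definition, as $\delta(q, j\tau)\converges$ holds precisely when both $\delta(q,j)\converges$ and $\delta(\delta(q,j),\tau)\converges$, so definedness propagates through the induction in lockstep with the equalities and no separate definedness argument is needed. Everything else is routine symbol manipulation.
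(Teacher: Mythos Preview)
Your proposal is correct and follows essentially the same approach as the paper: induction on the length of $\sigma$, peeling off the head $j$ and applying the prepend definition plus the induction hypothesis at $\delta(q,j)$. The only cosmetic difference is that the paper treats the undefined case for $\delta(q,j)$ explicitly in the induction step, whereas you fold this into a single remark about Kleene equality; both are fine.
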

\begin{proof}
	We prove $\delta(q, \sigma i)  =  \delta(\delta(q, \sigma), i)$ by induction on the length of $\sigma$:
	\begin{itemize}
		\item 
		If $\sigma = \epsilon$ then
		\[
		\delta(q, \sigma i) = \delta(q, i) = \delta(\delta(q, \epsilon), i) = \delta(\delta(q, \sigma), i)
		\]
		\item 
		Induction step.  Assume $\sigma = j \rho$ with $j \in I$ and $\rho \in I^*$.  If $\delta(q, j)$ is undefined, then both
		$\delta(q, j \rho i)$ and $\delta(\delta(q, j \rho), i)$ are undefined, and the identity holds.  If $\delta(q, j)$ is defined then
		\[
		\delta(q, j \rho i) = \delta(\delta(q, j), \rho i) \overset{\text{ind. hyp.}}{=} \delta(\delta(\delta(q,j), \rho), i) = \delta(\delta(q, j \rho), i)
		\]
	\end{itemize}
	Similarly, we prove $\lambda(q, \sigma i) = \lambda(q, \sigma) \cdot \lambda(\delta(q, \sigma), i)$ by induction on the length of $\sigma$:
	\begin{itemize}
		\item 
		If $\sigma = \epsilon$ then
		\[
		\lambda(q, \sigma i) = \lambda(q, i) = \epsilon \cdot  \lambda(q, i) = \lambda(q, \epsilon) \cdot \lambda (\delta(q, \epsilon), i) = \lambda(q, \sigma) \cdot \lambda(\delta(q, \sigma), i)
		\]
		\item 
		Induction step.  Assume $\sigma = j \rho$ with $j \in I$ and $\rho \in I^*$.  If $\delta(q, j)$ and $\lambda(q, j)$ are undefined, then both $\lambda(q, \sigma i)$ and $\lambda(q, \sigma) \cdot \lambda(\delta(q, \sigma), i)$ are undefined, and the identity holds.  If $\delta(q, j)$ and $\lambda(q, j)$ are defined then
		\begin{eqnarray*}
			\lambda(q, \sigma i) & = &  \lambda(q, j \rho i) = \lambda(q, j) \cdot \lambda(\delta(q, j), \rho i) 
			\\
			& \overset{\text{ind. hyp.}}{=}	& \lambda(q, j) \cdot \lambda(\delta(q, j), \rho) \cdot \lambda(\delta(\delta(q,j), \rho), i)\\
			& = & \lambda(q, j \rho) \cdot \lambda(\delta(q, j \rho), i)\\
			& = &  \lambda(q, \sigma) \cdot \lambda(\delta(q, \sigma), i)  
		\end{eqnarray*}
	\end{itemize}
\end{proof}

The next lemma, which follows via a simple inductive argument from the definitions, states that a functional simulation preserves the transition and output functions on words.

\begin{lemma}
	\label{la:refinement lifted to sigma}
	Suppose $f \colon \M \to \N$.
	Suppose $q, q'\in Q^{\M}$, $\sigma \in I^*$ and $\rho \in O^*$.
	Then $q\xrightarrow{\sigma/\rho}q'$ implies $f(q) \xrightarrow{\sigma/\rho} f(q')$.
\end{lemma}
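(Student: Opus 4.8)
The plan is to induct on the length of the input word $\sigma$. The key observation that makes the argument go through cleanly is that the hypothesis $q \xrightarrow{\sigma/\rho} q'$ means precisely that $\delta^{\M}(q,\sigma) = q'$ and $\lambda^{\M}(q,\sigma) = \rho$ are both defined; hence every transition along the $\sigma$-labelled path out of $q$ exists, and the simulation clause of Definition~\ref{def refinement} can be invoked at each step without any worry about partiality.

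For the base case $\sigma = \epsilon$, the lifted functions give $q' = q$ and $\rho = \epsilon$, and likewise $\delta^{\N}(f(q),\epsilon) = f(q)$ and $\lambda^{\N}(f(q),\epsilon) = \epsilon$, so $f(q) \xrightarrow{\epsilon/\epsilon} f(q)$ holds trivially. For the inductive step, I would write $\sigma = i\tau$ with $i \in I$ and $\tau \in I^{*}$, using the head-based lifting already fixed in the preliminaries, and split the path as $q \xrightarrow{i/o} q''$ followed by $q'' \xrightarrow{\tau/\rho'} q'$, where $o \in O$, $\rho' \in O^{*}$, and $\rho = o\rho'$.

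Since $\delta^{\M}(q,i) = q''$ is defined, Definition~\ref{def refinement} yields $\delta^{\N}(f(q),i) = f(q'')$ and $\lambda^{\N}(f(q),i) = \lambda^{\M}(q,i) = o$, that is, $f(q) \xrightarrow{i/o} f(q'')$. Applying the induction hypothesis to the strictly shorter path $q'' \xrightarrow{\tau/\rho'} q'$ gives $f(q'') \xrightarrow{\tau/\rho'} f(q')$. Concatenating the two matched transitions then produces $f(q) \xrightarrow{i\tau/o\rho'} f(q')$, which is exactly $f(q) \xrightarrow{\sigma/\rho} f(q')$, completing the step.

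I do not expect a genuine obstacle here, since the statement is the routine fact that simulations preserve runs. The only point requiring a little care is the bookkeeping of the decomposition $\rho = o\rho'$ together with the matching split of $\sigma$ into head and tail: one must apply the simulation clause to the first transition (which is guaranteed defined by the assumption) and reserve the induction hypothesis for the tail, then recombine. Everything else follows directly from the definitions.
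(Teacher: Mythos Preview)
Your proposal is correct and follows essentially the same route as the paper's proof: both argue by induction on the length of $\sigma$, handle the base case $\sigma = \epsilon$ trivially, and in the inductive step split $\sigma = i\tau$ into a head transition $q \xrightarrow{i/o} q''$ (to which the simulation clause is applied) and a tail $q'' \xrightarrow{\tau/\rho'} q'$ (handled by the induction hypothesis), then recombine. The only cosmetic difference is that the paper writes out the two required equalities $\lambda^{\N}(f(q),\sigma) = \rho$ and $\delta^{\N}(f(q),\sigma) = f(q')$ explicitly rather than using the $\xrightarrow{\sigma/\rho}$ notation throughout.
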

\begin{proof}
	Suppose $q\xrightarrow{\sigma/\rho}q'$.
	Then $\lambda^{\M}(q,\sigma) = \rho$ and $\delta^{\M}(q, \sigma) = q'$.
	In order to establish $f(q) \xrightarrow{\sigma/\rho} f(q')$,
	we need to show that
	$\lambda^{\N}(f(q),\sigma) = \rho$ and $\delta^{\N}(f(q), \sigma) = f(q')$.
	We prove both equalities by induction on the length of $\sigma$.
	\begin{itemize}
		\item 
		Assume $\sigma = \epsilon$. Then 
		\[
		\lambda^{\N}(f(q),\sigma) = \lambda^{\N}(f(q),\epsilon) = \epsilon 
		= \lambda^{\M}(q,\epsilon) = \lambda^{\M}(q,\sigma) = \rho,
		\]
		\[
		\delta^{\N}(f(q), \sigma) = \delta^{\N}(f(q), \epsilon) = f(q) = f(\delta^{\M}(q, \epsilon)) = f(\delta^{\M}(q, \sigma)) = f(q').
		\]
		\item 
		Assume $\sigma = i \sigma'$. We have
		$\delta^{\M}(q, i \sigma') = \delta^{\M}(\delta^{\M}(q,i), \sigma')$ and
		$\lambda^{\M}(q, i \sigma') = \lambda^{\M}(q,i) \lambda^{\M}(\delta^{\M}(q,i), \sigma')$.
		Let $q'' = \delta^{\M}(q, i)$, $o = \lambda^{\M}(q,i)$ and
		$\rho'= \lambda^{\M} (q'', \sigma')$.
		Then we obtain $\rho = o \rho'$, $q\xrightarrow{i/o}q''$ and $q'' \xrightarrow{\sigma'/\rho'}q'$.
		Since $f$ is a functional simulation and by the induction hypothesis,
		$f(q)\xrightarrow{i/o} f(q'')$ and $f(q'') \xrightarrow{\sigma'/\rho'} f(q')$. Then
		\[
		\lambda^{\N}(f(q),\sigma) = \lambda^{\N}(f(q),i) \lambda^{\N}(\delta^{\N}(f(q),i), \sigma') = o \lambda^{\N}(f(q''), \sigma') = o \rho'= \rho
		\]
		\[
		\delta^{\N}(f(q), \sigma) = \delta^{\N}(\delta^{\N}(f(q), i), \sigma') = \delta^{\N}(f(q''), \sigma') = f(q')
		\]
		as required.
	\end{itemize}
\end{proof}

\begin{proofappendix}{lemma testing tree}
	We check the three conditions of Definition~\ref{def refinement}:
	\begin{enumerate}
		\item 
		$f(q_0^{\Obs}) = f(\epsilon) = \delta^{\Spec} (q_0^{\Spec}, \epsilon) = q_0^{\Spec}$.
		\item 
		Assume $\sigma \in Q^{\Obs}$ and $i \in I$ such that $\delta^{\Obs}(\sigma,i) \converges$. Then by Lemma~\ref{la lifted functions}:
		\[
		f(\delta^{\Obs}(\sigma,i)) = f(\sigma i) = \delta^{\Spec}(q_0^{\Spec}, \sigma i) =
		\delta^{\Spec}(\delta^{\Spec}(q_0^{\Spec}, \sigma), i) =  \delta^{\Spec}(f(\sigma),i).
		\] 
		\item 
		Assume $\sigma \in Q^{\Obs}$ and $i \in I$ such that $\lambda^{\Obs}(\sigma,i) \converges$. Then 
		\[
		\lambda^{\Obs}(\sigma,i) 
		= \lambda^{\Spec}( \delta^{\Spec} (q_0^{\Spec}, \sigma), i)
		= \lambda^{\Spec}(f(\sigma), i).
		\]
	\end{enumerate}
\end{proofappendix}

\begin{proofappendix}{observation tree when test suite passes}
	We prove both implications:
	\begin{itemize}
		\item 
		Suppose $\M$ passes $T$.
		We prove $f \colon \Obs\to \M$ by checking the three conditions of Definition~\ref{def refinement}:
		\begin{enumerate}
			\item 
			$f(q_0^{\Obs}) = f(\epsilon) = \delta^{\M}(q_0^{\M}, \epsilon) = q_0^{\M}$.
			\item 
			Assume $\delta^{\Obs}(\sigma, i) \converges$, for some $\sigma \in Q^{\Obs}$ and $i \in I$. Then by Lemma~\ref{la lifted functions}:
			\[
			f(\delta^{\Obs}(\sigma, i)) = f( \sigma i) = 
			\delta^{\M} (q_0^{\M}, \sigma i) = \delta^{\M} (\delta^{\M} (q_0^{\M}, \sigma), i) = \delta^{\M} (f(\sigma), i)
			\]
			\item 
			As $\M$ passes $T$, for all $\sigma \in T$, 
			$\lambda^{\M} (q_0^{\M}, \sigma) = \lambda^{\Spec} (q_0^{\Spec}, \sigma)$.
			This implies that, for all $\sigma i \in \mathit{Pref}(T)$,
			$\lambda^{\M} (q_0^{\M}, \sigma i) = \lambda^{\Spec} (q_0^{\Spec}, \sigma i)$.
			By Lemma~\ref{la lifted functions}, this implies
			\begin{eqnarray*}
				\lambda^{\M} (\delta^{\M}(q_0^{\M}, \sigma), i) & = & \lambda^{\Spec} (\delta^{\Spec}(q_0^{\Spec}, \sigma), i).
			\end{eqnarray*}
			Now assume $\lambda^{\Obs}(\sigma, i) \converges$, for some $\sigma \in Q^{\Obs}$ and $i \in I$.
			Then $\sigma i \in \mathit{Pref}(T)$, and therefore, by the above equation,
			\[
			\lambda^{\Obs}(\sigma, i) = \lambda^{\Spec}( \delta^{\Spec} (q_0^{\Spec}, \sigma), i) = \lambda^{\M} (\delta^{\M}(q_0^{\M}, \sigma), i)  = \lambda^{\M}(f(\sigma), i).
			\]
		\end{enumerate}
		\item 
		Suppose $f \colon \Obs \to \M$.
		Let $\sigma \in T$.  By construction, $\sigma$ is also a state of $\Obs$ and, for some $\rho$,
		$\epsilon \xrightarrow{\sigma/\rho} \sigma$.
		By Lemma~\ref{lemma testing tree} and Lemma~\ref{la:refinement lifted to sigma},
		$q_0^{\Spec} \xrightarrow{\sigma/\rho} \delta^{\Spec}(q_0^{\Spec}, \sigma)$.
		By the assumption and Lemma~\ref{la:refinement lifted to sigma},
		$q_0^{\M} \xrightarrow{\sigma/\rho} \delta^{\M}(q_0^{\M}, \sigma)$.
		Hence $\lambda^{\M} (q_0^{\M}, \sigma) = \rho = \lambda^{\Spec} (q_0^{\Spec}, \sigma)$.
	\end{itemize}	
\end{proofappendix}

\begin{proofappendix}{theorem k-A larger than m}
	Suppose $\M \in \U_m \setminus \U^A$. We must show that $\M \in \U_k^A$.
	Since $\M \not\in \U^A$, the states of $\M$ reached by sequences from $A$ are pairwise inequivalent.
	Therefore, the set $B$ of states of $\M$ that can be reached by sequences from $A$ contains $|A|$ elements. Also, since $\epsilon \in A$ we know that $q_0^{\M} \in B$.
	Let $q$ be a state of $\M$.
	Since we only consider Mealy machines that are initially connected, there exists a sequence of inputs $\sigma$ that reaches $q$.  Now pick $\sigma$ in such a way that the number of states outside $B$ visited by $\sigma$ is minimal. Observe that each state outside $B$ is visited at most once by $\sigma$ (otherwise there would be a loop and the number of visited states outside $B$ would not be minimal).
	Since $\M$ has at most $m$ states and $B$ contains $|A|$ states, this means that $\sigma$ visits at most $k$ states outside $B$.
	Since at least one state in $B$ is visited (namely $q_0^{\M}$), this means that $q$ can be reached with a sequence $\rho \in I^{\leq k}$ from a state in $B$.
	This implies that $\M \in \U_k^A$, as required.
\end{proofappendix}

\paragraph*{Bisimulations}
In the proof of our completeness result, we use the concept of bisimulation relations and the well-known fact that for (deterministic) Mealy machines bisimulation equivalence coincides with the equivalence $\approx$ defined above.

\begin{definition}[Bisimulation]
	\label{def bisimulation MM}
	A \emph{bisimulation} between Mealy machines $\M$ and $\N$ is a relation $R
	\subseteq Q^{\M} \times Q^{\N}$ satisfying, for all $q \in Q^{\M}$, $r \in Q^{\N}$, and $i \in I$,
	\begin{enumerate}
		\item
		$q^{\M}_0 \; R \; q^{\N}_0$,
		\item 
		$q \mathrel{R} r$ implies $\delta^{\M}(q, i) \converges ~ \Leftrightarrow ~ \delta^{\N}(r, i) \converges$,
		\item
		$q \mathrel{R} r \wedge \delta^{\M}(q,i) \converges$ implies
		$\delta^{\M}(q,i) \mathrel{R} \delta^{\N}(r, i)$ and
		$\lambda^{\M}(q,i) = \lambda^{\N}(r, i)$.
	\end{enumerate}
	We write $\M \simeq \N$ if there exists a bisimulation relation between $\M$ and $\N$.
\end{definition}


The next lemma, a variation of the classical result of \cite{Pa81}, is easy to prove.
\begin{lemma}
	\label{la:bisimulation}
	Let $\M, \N$ be Mealy machines.
	Then $\M \simeq \N$ iff $\M \approx \N$.
\end{lemma}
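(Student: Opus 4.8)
The plan is to prove the two implications of the biconditional separately, exploiting the standard correspondence between bisimilarity and trace equivalence for deterministic machines. For the direction $\M \simeq \N \Rightarrow \M \approx \N$, let $R$ be a bisimulation. I would prove, by induction on $|\sigma|$, the auxiliary claim that $q \mathrel{R} r$ implies $\lambda^{\M}(q,\sigma) = \lambda^{\N}(r,\sigma)$ (as a Kleene equality) for every $\sigma \in I^*$. The base case $\sigma = \epsilon$ is immediate, since both sides equal $\epsilon$. For the step $\sigma = i\sigma'$, clause~2 of \autoref{def bisimulation MM} synchronises definedness of the first transition: if both are undefined, both lifted outputs diverge and the Kleene equality holds vacuously; if both are defined, clause~3 yields a common first output $o = \lambda^{\M}(q,i) = \lambda^{\N}(r,i)$ together with $\delta^{\M}(q,i) \mathrel{R} \delta^{\N}(r,i)$, and the induction hypothesis applied to these successors and $\sigma'$ closes the case after prepending $o$. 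Instantiating the claim with the initial states, related by clause~1, gives $\semantics{q_0^{\M}}^{\M} = \semantics{q_0^{\N}}^{\N}$, i.e.\ $\M \approx \N$.

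For the converse $\M \approx \N \Rightarrow \M \simeq \N$, I would take the relation $R = \{ (q,r) \mid \semantics{q}^{\M} = \semantics{r}^{\N} \}$ (that is, $q \approx r$) and verify the three clauses of \autoref{def bisimulation MM}. Clause~1 is exactly the hypothesis $\M \approx \N$. For clause~2, equality of the semantics on the one-letter word $i$, together with the defining property $\lambda(q,i)\converges \Leftrightarrow \delta(q,i)\converges$ from \autoref{Mealy}, synchronises definedness of the $i$-transitions. For clause~3, assuming $q \mathrel{R} r$ and $\delta^{\M}(q,i)\converges$, equality of semantics on $i$ immediately gives $\lambda^{\M}(q,i) = \lambda^{\N}(r,i)$, and to establish $\delta^{\M}(q,i) \mathrel{R} \delta^{\N}(r,i)$ I would use \autoref{la lifted functions} to decompose $\semantics{q}^{\M}(i\sigma) = \lambda^{\M}(q,i) \cdot \semantics{\delta^{\M}(q,i)}^{\M}(\sigma)$ and symmetrically on $\N$; comparing these for arbitrary $\sigma$ and cancelling the common leading output yields $\semantics{\delta^{\M}(q,i)}^{\M} = \semantics{\delta^{\N}(r,i)}^{\N}$, as required.

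The hard part will not be the inductions themselves but the careful bookkeeping around partiality: every equality here is a Kleene equality, so at each step definedness on the two sides must be matched before the values are compared. In particular, the cancellation of the common first output $o$ in the second implication needs justification, by first observing (using $\delta^{\M}(q,i)\converges$ and the corresponding fact on $\N$) that $\semantics{\delta^{\M}(q,i)}^{\M}(\sigma)$ is defined exactly when $\semantics{q}^{\M}(i\sigma)$ is, and likewise for $\N$, so that divergence and convergence are handled uniformly. Once this is made precise, everything else reduces to unfolding the definitions of $\semantics{\cdot}$, $\approx$, and $\simeq$.
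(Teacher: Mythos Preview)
Your proof is correct and follows the standard textbook route for showing that bisimilarity coincides with trace equivalence on deterministic systems. The paper itself does not give a proof of this lemma: it merely remarks that the result is ``easy to prove'' and is a variation of Park's classical result, so there is nothing substantive to compare against. One tiny remark: the decomposition $\semantics{q}^{\M}(i\sigma) = \lambda^{\M}(q,i)\cdot\semantics{\delta^{\M}(q,i)}^{\M}(\sigma)$ is the \emph{definition} of the lifted output function (on words of the form $i\sigma$), whereas \autoref{la lifted functions} establishes the symmetric decomposition for words of the form $\sigma i$; so strictly speaking no appeal to that lemma is needed here, but this does not affect the argument.
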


The next lemmas give some useful properties of a basis.

\begin{lemma}
	\label{f restricted to B injective}
	Suppose $\Obs$ is an observation tree for $\M$ with $f \colon \Obs \to \M$ and basis $B$. Then $f$ restricted to $B$ is injective.
\end{lemma}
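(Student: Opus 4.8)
The plan is to reduce injectivity of $f|_B$ directly to the defining property of a basis together with the fact that apartness is preserved by functional simulations. Concretely, I would take two distinct basis states $q, q' \in B$ with $q \neq q'$ and aim to show $f(q) \neq f(q')$ in $\M$.

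First I would invoke clause (2) of the definition of a basis: since $q, q' \in B$ and $q \neq q'$, the two states are pairwise apart, so $q \apart q'$ holds in $\Obs$. Next, because $f \colon \Obs \to \M$ is a functional simulation, \autoref{la: apartness refinement} yields that apartness is preserved, giving $f(q) \apart f(q')$ in $\M$. Finally, recalling that the apartness relation is irreflexive (as noted right after the definition of apartness), $f(q) \apart f(q')$ forces $f(q) \neq f(q')$, since no state can be apart from itself. As $q, q'$ were arbitrary distinct elements of $B$, this establishes that $f$ restricted to $B$ is injective.

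I do not expect any real obstacle here: the argument is a one-line chain $q \neq q' \Rightarrow q \apart q' \Rightarrow f(q) \apart f(q') \Rightarrow f(q) \neq f(q')$, using only the ancestor-independent clause of the basis definition, the preservation lemma, and irreflexivity. The only point worth a moment's care is that the ancestor-closure clause of the basis plays no role whatsoever in this particular statement; all the work is carried by the pairwise-apartness clause. This lemma is presumably the first ingredient toward \autoref{obs tree minimal state cover}, where injectivity together with the cardinality hypothesis $|B| = |Q^{\M}|$ upgrades $f|_B$ to a bijection.
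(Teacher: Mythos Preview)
Your proposal is correct and matches the paper's proof essentially verbatim: take distinct $q,q'\in B$, use the pairwise-apartness clause of the basis to get $q\apart q'$, apply \autoref{la: apartness refinement} to obtain $f(q)\apart f(q')$, and conclude $f(q)\neq f(q')$ by irreflexivity. Your side remarks (that ancestor-closure is irrelevant here and that this feeds into \autoref{obs tree minimal state cover}) are also accurate.
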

\begin{proof}
	Let $q$ and $q'$ be two distinct states in $B$.  Since $q \apart q'$, we may conclude by Lemma~\ref{la: apartness refinement} that $f(q) \apart  f(q')$. Thus in particular $f(q) \neq f(q')$ and so $f$ restricted to $B$ is injective.
\end{proof}

\begin{proofappendix}{obs tree minimal state cover}
	By Lemma~\ref{f restricted to B injective}, $f$ restricted to $B$ is injective.
	Since  $| B | = | Q^{\M} |$, we may conclude that $f$ is a bijection between $B$ and $Q^{\M}$.
	Since states in $B$ are pairwise apart, it follows by Lemma~\ref{la: apartness refinement} that states from $Q^{\M}$ are pairwise apart.
	This means that $\M$ is minimal.
	Since every state in $B$ is reached by a unique sequence in $\mathsf{access}(B)$, and $f$ is a bijection, we may use Lemma~\ref{la:refinement lifted to sigma} to conclude that also every state in $Q^{\M}$ is reached by a unique sequence in $\mathsf{access}(B)$.
\end{proofappendix}

\begin{lemma}
	\label{candidate set}
	Suppose $\Obs$ is an observation tree for $\M$ with $f \colon \Obs \to \M$ and basis $B$ such that $| B | = | Q^{\M} |$. Let $q$ be a state of $\Obs$.
	Then there exists a state $r \in B$ with $r \in C(q)$ and $f(q)=f(r)$.
\end{lemma}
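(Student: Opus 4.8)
The plan is to reduce the statement to two facts we already have about the basis: that $f$ restricts to a bijection on $B$, and that apartness is irreflexive and preserved by functional simulations. First I would invoke \autoref{obs tree minimal state cover}: since $\Obs$ is an observation tree for $\M$ via $f$ and $|B| = |Q^{\M}|$, the restriction of $f$ to $B$ is a bijection onto $Q^{\M}$. In particular, $f$ maps $B$ \emph{onto} $Q^{\M}$.

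Next, since $f(q) \in Q^{\M}$, surjectivity of the restriction $f|_B$ yields a (unique) state $r \in B$ with $f(r) = f(q)$. This $r$ is the witness claimed by the lemma; the equality $f(q) = f(r)$ holds by construction, so it only remains to check that $r \in C(q)$, that is, that $\neg(q \apart r)$.

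For this last step I would argue by contradiction. Suppose $q \apart r$ in $\Obs$. By \autoref{la: apartness refinement}, applied to the functional simulation $f$, we obtain $f(q) \apart f(r)$ in $\M$. But $f(q) = f(r)$, and the apartness relation is irreflexive, so no state can be apart from itself --- a contradiction. Hence $\neg(q \apart r)$, which together with $r \in B$ is exactly $r \in C(q)$.

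There is no serious obstacle here: the whole argument is a short consequence of the surjectivity half of \autoref{obs tree minimal state cover} together with irreflexivity of $\apart$ and its preservation under $f$ (\autoref{la: apartness refinement}). The only point requiring a little care is to use the correct direction of the bijection --- surjectivity, to produce the basis state mapping to $f(q)$ --- rather than the injectivity that was needed in the preceding lemmas.
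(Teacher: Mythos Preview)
Your proposal is correct and follows essentially the same route as the paper: invoke \autoref{obs tree minimal state cover} to get that $f|_B$ is a bijection, pick the unique $r\in B$ with $f(r)=f(q)$, and use \autoref{la: apartness refinement} together with irreflexivity of apartness to conclude $\neg(q\apart r)$, i.e.\ $r\in C(q)$. The only cosmetic difference is that the paper phrases the last step as a direct application of the contrapositive of \autoref{la: apartness refinement}, whereas you spell out the contradiction explicitly.
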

\begin{proof}
	Let $f(q) = u$.
	By Lemma~\ref{obs tree minimal state cover}, $f$ restricted to $B$ is a bijection.
	Let $r \in B$ be the unique state with $f(r) = u$.
	Since $f(q) = f(r)$, Lemma~\ref{la: apartness refinement} implies that $q$ and $r$ are not apart.
	Hence $r \in C(q)$.
\end{proof}

\begin{proofappendix}{La:basis}
	Assume that $\M$ is a Mealy machine that passes $T$.
	By Lemma~\ref{observation tree when test suite passes}, $\Obs$ is an observation tree for $\M$.
	Let $f : \Obs \to \M$.
	Suppose $\sigma, \rho \in A$ with $\sigma \neq \rho$.
	Since $A = \mathsf{access}(B)$,
	$q = \delta^{\Obs}(q_0^{\Obs}, \sigma) \in B$ and $p = \delta^{\Obs}(q_0^{\Obs}, \rho) \in B$.
	Since $\Obs$ is a tree, $p \neq q$ and thus, as $B$ is a basis for $\Obs$, $q \apart p$.
	By Lemma~\ref{la: apartness refinement}, $f(q) \apart f(p)$.
	By Lemma~\ref{la:refinement lifted to sigma}, $f(q) = \delta^{\M}(q_0^{\M}, \sigma)$ and $f(p) = \delta^{\M}(q_0^{\M}, \rho)$. From this  we infer
	$\delta^{\M}(q_0^{\M}, \sigma) \not\approx \delta^{\M}(q_0^{\M}, \rho)$.
	This implies that distinct sequences from $A$ may never reach equivalent states in $\M$, and
	thus $\M \not\in \U^A$.
	Hence there are no Mealy machines in $\U^A$ that pass $T$, and we conclude that $T$ is $\U^A$-complete.
\end{proofappendix}

\begin{proofappendix}{k-A-complete}
	Let $f : \Obs \to \Spec$ and $g : \Obs \to \M$.  Define relation $R \subseteq Q^{\Spec} \times Q^{\M}$ by
	\begin{eqnarray*}
		(s, q) \in R & ~~ \Leftrightarrow ~~ &  [\exists t \in B \cup  F^{<k} : f(t) = s \wedge  g(t) = q].
	\end{eqnarray*}
	We claim that $R$ is a bisimulation between $\Spec$ and $\M$, as defined in Definition~\ref{def bisimulation MM}.
	\begin{enumerate}
		\item 
		Since $f$ is a functional simulation from $\Obs$ to $\Spec$, $f(q^{\Obs}_0) = q^{\Spec}_0$, and
		since $g$ is a functional simulation from $\Obs$ to $\M$, $g(q^{\Obs}_0) = q^{\M}_0$.
		Using $q^{\Obs}_0 \in B$, this implies $(q^{\Spec}_0, q^{\M}_0) \in R$.
		\item 
		Suppose $(s,q) \in R$ and $i \in I$. 
		Since $(s, q) \in R$, there is a $t \in B \cup F^{<k}$ such that $f(t) = s$ and $g(t) = q$.
		Since $B$ and $F^{<k}$ are complete, $\delta^{\Obs}(t,i) \downarrow$.
		Since $f$ and $g$ are functional simulations, also $\delta^{\Spec}(s,i) \downarrow$ and $\delta^{\M}(q,i) \downarrow$.
		Let $s' = \delta^{\Spec}(s,i)$, $q' = \delta^{\M}(q,i)$ and $t'= \delta^{\Obs}(t,i)$.
		Since $f$ and $g$ are functional simulations,
		$\lambda^{\Obs}(t,i) = \lambda^{\Spec}(s,i)$ and $\lambda^{\Obs}(t,i) = \lambda^{\M}(q,i)$.
		This implies $\lambda^{\Spec}(s,i) = \lambda^{\M}(q,i)$, as required. 
		Since $f$ and $g$ are functional simulations, $f(t') = s'$ and $g(t') = q'$.
		In order to prove $(s', q') \in R$, we consider two cases:
		\begin{enumerate}
			\item 
			$t' \in B \cup F^{<k}$.  In this case, since $f(t') = s'$ and $g(t') = q'$, $(s', q') \in R$ follows from the definition of $R$.
			\item 
			$t'\in F^k$. In this case, since we assume $\M \in \U^A_k$, there are sequences $\sigma \in \mathsf{access}(B)$ and $\rho \in I^{\leq k}$ such that $q'$ is reached by $\sigma \rho$.
			By the assumption that $B$ and $F^{<k}$ are complete, $t'' = \delta^{\Obs}(q_0^{\Obs}, \sigma\rho)$ is defined.
			By Lemma~\ref{la:refinement lifted to sigma}, $g(t'') = q'$.  
			Then, by Lemma~\ref{la: apartness refinement}, $t'$ and $t''$ are not apart.
			We claim that $t'$ and $t''$ have the same candidate set:  
			\begin{enumerate}
				\item 
				$t'' \in B$.   By definition of basis $B$, $C(t'') = \{ t'' \}$.
				Since not $t' \apart t''$ and $t'$ is identified, $C(t') = \{ t'' \}$.
				Hence $C(t'') = C(t')$.
				\item 
				$t'' \in F^{<k}$. Then by condition (\ref{cotransitivity requirement FV}) and since not $t' \apart t''$, $C(t'') = C(t')$.
			\end{enumerate}
			Since $t'$ is identified, $C(t') = \{ r \}$, for some $r \in B$.
			By Lemma~\ref{candidate set}, $f(t')=f(r)$.
			Since $C(t'') = C(t')$, also $C(t'') = \{ r \}$. Applying Lemma~\ref{candidate set} again gives
			$f(t'') = f(r)$. Hence $f(t'') = f(t') = s'$. This in turn implies $(s', q') \in R$, which completes the proof that $R$ is bisimulation.
		\end{enumerate}
	\end{enumerate}
	The theorem now follows by application of Lemma~\ref{la:bisimulation}.
\end{proofappendix}

\begin{proofappendix}{Cy:k-A-complete}
	Let $\M$ be a Mealy machine in $\U_k^A$.
	Assume that $\M$ passes $T$.
	By Lemma~\ref{observation tree when test suite passes}, $\Obs$ is an observation tree for both $\M$ and $\Spec$.
	Now Theorem~\ref{k-A-complete} implies $\M \approx \Spec$, and thus $T$ is $\U_k^A$-complete.
	By Lemma~\ref{La:basis}, $T$ is also $\U^A$-complete.
	Now $k$-$A$-completeness follows by Lemma~\ref{completeness union}.
\end{proofappendix}

\begin{proofappendix}{full frontier identified}
	Proof by contradiction.
	Assume that some state $q \in F^{<k}$ is not identified.
	Then there are distinct states $r, s \in B$ such that $\{ r, s \} \subseteq C(q)$.
	By definition of a basis, states $r$ and $s$ are apart.
	Let $\sigma$ witness the apartness of $r$ and $s$.
	Using that $F^{<k}$ is complete, we rerun $\sigma$ from state $q$ until we reach $F^k$.
	If $\delta^{\Obs}(q, \sigma) \in F^{\leq k}$, then by the
	weak co-transitivity Lemma~\ref{la: weak co-transitivity}, $q$ is apart from $r$ or from $s$, and we have a contradiction.
	Otherwise, let $\rho$ be the proper prefix of $\sigma$ with $\delta^{\Obs}(q, \rho) \in F^k$.
	If $\rho \vdash r \apart s$ then, by Lemma~\ref{la: weak co-transitivity}, $q$ is either apart from $r$ or from $s$, and we have a contradiction.
	So we may assume $\rho \not\vdash r \apart s$.
	Let $\delta^{\Obs}(q, \rho) = q'$, $\delta^{\Obs}(r, \rho) = r'$ and $\delta^{\Obs}(s, \rho) = s'$.
	Since $\sigma \vdash r \apart s$ but $\rho \not\vdash r \apart s$, we conclude $r'\apart s'$.
	Observe that both $r'$ and $s'$ are contained in $B \cup F^{<k}$.
	If both $r'$ and $s'$ are in $B$ then, since $q'$ is identified, $q'$ is apart from $r'$ or from $s'$.
	If $q' \apart r'$ then $q \apart r$ and we have a contradiction.
	If $q' \apart s'$ then $q \apart s$ and we have a contradiction.
	Otherwise, w.l.o.g., assume $r'\in F^{<k}$.
	Then, by condition (\ref{cotransitivity requirement FV}),
	$C(q') = C(r')$ or $q'\apart r'$.
	If $q' \apart r'$ then $q \apart r$ and we have a contradiction.
	So we conclude $C(q') = C(r')$.
	Since $q'$ is identified $C(q') = \{ t \}$, for some $t \in B$.
	If $s'\in B$ then, since $C(r') = \{ t \}$, $s'\neq t$ because otherwise $r'$ and $s'$ are not apart.
	This implies $q' \apart s'$, which implies $q \apart s$, which is a contradiction.
	If $s'\in F^{<k}$ then, by condition (\ref{cotransitivity requirement FV}),
	$C(q') = C(s')$ or $q'\apart s'$.
	If $q' \apart s'$ then $q \apart s$ and we have a contradiction.
	So we conclude $C(q') = C(s') = \{ t \}$.
	Let $f \colon \Obs \to \Spec$.
	By Lemma~\ref{candidate set}, $f(r') = f(t) = f(s')$.
	But since $r'\apart s'$, Lemma~\ref{la: apartness refinement} gives $f(r') \neq f(s')$.
	Contradiction.	
\end{proofappendix}

\begin{proofappendix}{variant condition}
	Note that, by the previous Proposition~\ref{full frontier identified}, all states in $F^{\leq k}$ are identified.
	Assume $i$ and $j$ are indices with $0 \leq i < j \leq k$, $q \in F^i$, $r \in F^j$ and $C(q) \neq C(r)$.  It suffices to prove that $q \apart r$.
	Let $f \colon \Obs \to \Spec$.
	Since both $q$ and $r$ are identified and $C(q) \neq C(r)$, it follows by Lemma~\ref{candidate set} and Lemma~\ref{la: apartness refinement} that $f(q) \apart f(r)$.
	Let $\sigma$ be a separating sequence for $f(q)$ and $f(r)$.
	If $\sigma \vdash q \apart r$, we are done.  Otherwise, since  $F^{<k}$ is complete, there is a prefix $\rho$ of $\sigma$ with $\delta^{\Obs}(r, \rho) \in F^k$.
	Let $q' \in F^{<k}$ and $r'\in F^k$ be the unique states such that $q \xrightarrow{\rho} q'$ and
	$r \xrightarrow{\rho} r'$.
	If $\rho \vdash q \apart r$, we are done.
	Otherwise, we conclude $f(q') \apart f(r')$.
	By Lemma~\ref{candidate set} and since both $q'$ and $r'$ are identified, we conclude $C(q') \neq C(r')$.
	Therefore, by condition (\ref{cotransitivity requirement FV}), $q' \apart r'$.
	Let $\tau$ be a witness for the apartness of $q'$ and $r'$.
	Then $\rho \tau$ is a witness for the apartness of $q$ and $r$.
	Thus $q \apart r$, as required.	
\end{proofappendix}

\begin{proofappendix}{candidate set cotransitivity}
	\begin{itemize}
		\item 
		``$\Rightarrow$''
		Assume $C(q) = C(r) \vee q \apart r$.
		Suppose $s \in B$ with $s \apart q$.  We need to show $s \apart r \vee q \apart r$.
		By our assumption, if $q \apart r$ then we are done.
		So suppose $C(q) = C(r)$. Then, since $s \apart q$, $s \not\in C(q)$.  Therefore $s \not\in C(r)$, which implies $s \apart r$, as required.
		\item 
		``$\Leftarrow$''
		Assume $\forall s \in B : s \apart q \; \Rightarrow \; s \apart r \vee q \apart r$.
		Suppose not $q \apart r$.  We need to show $C(q) = C(r)$.
		Because $q$ is identified, all basis states except one are apart from $q$.
		Let $q'$ be the unique basis state that is not apart from $q$.
		By our assumption, $r$ is apart from all states in $B \setminus \{ q'\}$.
		Thus $C(r) \subseteq \{ q'\}$.
		By Lemma~\ref{candidate set}, $C(r)$ contains at least one state.
		Therefore, we conclude that $C(r) = \{ q'\}$.
		This implies $C(q)=C(r)$, as required.
	\end{itemize}	
\end{proofappendix}

\begin{proofappendix}{correctness apartness algorithm}
	Correctness follows since two states $q$ and $q'$ are apart if and only if either (1) 
	both have an outgoing transition for the same input but with a different output, or (2) 
	both have an outgoing transition for the same input, leading to states $r$ and $r'$, respectively, such that $r$ and $r'$ are apart.
	This is exactly what the algorithm checks.
	
	Function {\sc ApartnessCheck} is called exactly once for each pair of states $(q,q')$.
	The overall complexity of the algorithm is $\Theta(N^2)$, because $\Obs$ has $N-1$ transitions, and each pair of transitions $(q, r)$ and $(q', r')$ is considered at most once by the algorithm
	(during execution of {\sc ApartnessCheck}$(q, q')$). The amount of work for each pair of transitions
	$(q, r)$ and $(q', r')$ is constant.
\end{proofappendix}

\begin{proofappendix}{completeness Wp}
	Since $\Spec$ is complete, $T$ is a test suite for $\Spec$.
	Let $\Obs = \mathsf{Tree}(\Spec, T)$ and $f : \Obs \to \Spec$.
	Let $B$ be the subset of states of $\Obs$ reached via an access sequence in $A$, and
	let $F^0, F^1,\ldots$ be the stratification induced by $B$.
	We check that the assumptions of Corollary~\ref{Cy:k-A-complete} hold:
	\begin{enumerate}
		\item 
		$B$ is a basis: 
		Since $A$ is a state cover for $\Spec$, it is prefix-closed. Hence, set $B$ is ancestor-closed.
		Suppose $q$ and $q'$ are two distinct states in $B$.
		We show that $q \apart q'$.
		Let $\sigma$ and $\sigma'$ be the access sequences of $q$ and $q'$, respectively.
		Then $\sigma \neq \sigma'$.
		Let $r = \delta^{\Spec}(q_0^{\Spec}, \sigma)$ and  $r' = \delta^{\Spec}(q_0^{\Spec}, \sigma')$.
		By Lemma~\ref{la:refinement lifted to sigma}, $f(q)=r$ and $f(q') = r'$.
		Since $A$ is a minimal state cover, $r \neq r'$, and since $\Spec$ is minimal, $r \not\approx r'$.
		Set $\bigcup {\cal W}$ contains a separating sequence $\rho$ for $r$ and $r'$.
		Since $A \cdot \bigcup {\cal W} \subseteq T$, $\delta^{\Obs}(q, \rho) \downarrow$ and $\delta^{\Obs}(q', \rho) \downarrow$.
		By Lemma~\ref{la:refinement lifted to sigma},
		$\lambda^{\Obs}(q, \rho) = \lambda^{\Spec}(r,\rho) \neq \lambda^{\Spec}(r',\rho) = \lambda^{\Obs}(q', \rho)$.
		Thus $\rho \vdash q \apart q'$, as required.
		\item
		Since $A \subseteq T$, $|A| = |B|$, and since $A$ is a minimal state cover, $|A| = | Q^{\Spec} |$.
		Hence $|B| = | Q^{\Spec} |$.
		\item 
		By construction, $A = \mathsf{access}(B)$.
		\item
		Since $A \cdot I^{\leq k+1} \subseteq T$, sets $B$ and $F^{<k}$ are complete.
		\item 
		All states in $F^k$ are identified:
		We show a stronger statement, namely that all states in $F^{\leq k}$ are identified.
		Suppose $r \in F^{\leq k}$.
		Let $\mathsf{access}(r) = \sigma$ and $s = \delta^{\Spec}(q_0^{\Spec}, \sigma)$.
		By Lemma~\ref{la:refinement lifted to sigma}, $f(r) = s$.
		By Lemma~\ref{obs tree minimal state cover}, $f$ restricted to $B$ is a bijection. Let $q \in B$ be the unique state with $f(q)=s$.
		Now suppose $q' \in B$ is distinct from $q$. Let $f(q') = s'$.
		By definition of a state identifier, $W_s$ contains a separating sequence $\rho$ for $s$ and $s'$.
		Since $A \cdot I^{\leq k+1} \odot {\cal W} \subseteq T$, $\delta^{\Obs}(r, \rho) \downarrow$.
		Since $ A \cdot \bigcup {\cal W} \subseteq T$,  $\delta^{\Obs}(q', \rho) \downarrow$.
		By Lemma~\ref{la:refinement lifted to sigma},
		\[
		\lambda^{\Obs}(q', \rho) = \lambda^{\Spec}(s',\rho) \neq \lambda^{\Spec}(s,\rho) = \lambda^{\Obs}(r, \rho).
		\]
		Thus $\rho \vdash r \apart q'$.
		Since $q'$ was chosen to be an arbitrary basis state different from $q$, this implies that $r$ is identified.
		\item 
		Condition (\ref{cotransitivity requirement FV}) holds:
		Suppose $q \in F^k$ and $r \in F^{<k}$.
		By the previous item, both $q$ and $r$ are identified, that is, there exist $q', r' \in B$ such that $C(q) = \{ q' \}$ and $C(r) = \{r'\}$.
		If $q'= r'$ then $C(q)=C(r)$ and we are done.  So assume $q'\neq r'$.
		Let $s = f(q')$ and $t = f(r')$. 
		By Lemma~\ref{candidate set}, $f(q) = s$ and $f(r)=t$.
		Since $f$ restricted to $B$ is a bijection, $s \neq t$,
		and since $\Spec$ is minimal, $s \not\approx t$.
		By definition of a state identifier, $W_s$ contains a separating sequence $\rho$ for $s$ and $t$.
		Since $A \cdot I^{\leq k+1} \odot {\cal W} \subseteq T$, $\delta^{\Obs}(q, \rho) \downarrow$.
		Since $A \cdot I^{\leq k} \cdot \bigcup {\cal W} \subseteq T$, $\delta^{\Obs}(r, \rho) \downarrow$.
		By Lemma~\ref{la:refinement lifted to sigma},
		$\lambda^{\Obs}(q, \rho) = \lambda^{\Spec}(s,\rho) \neq \lambda^{\Spec}(t,\rho) = \lambda^{\Obs}(r, \rho)$.
		Thus $\rho \vdash r \apart q$.
	\end{enumerate} 
	Since all conditions of Corollary~\ref{Cy:k-A-complete} hold, we conclude that $T$ is $k$-$A$-complete.
\end{proofappendix}

\begin{proofappendix}{completeness HSI}
	Since $\Spec$ is complete, $T$ is a test suite for $\Spec$.
	Let $\Obs = \mathsf{Tree}(\Spec, T)$ and $f : \Obs \to \Spec$.
	Let $B$ be the subset of states of $\Obs$ reached via an access sequence in $A$, and
	let $F^0, F^1,\ldots$ be the stratification induced by $B$.
	We check that the assumptions of Corollary~\ref{Cy:k-A-complete} hold:
	\begin{enumerate}
		\item 
		$B$ is a basis: 
		Since $A$ is a state cover for $\Spec$, it is prefix-closed. Hence set $B$ is ancestor-closed.
		Suppose $q$ and $q'$ are two distinct states in $B$.
		We show that $q \apart q'$.
		Let $\sigma$ and $\sigma'$ be the access sequences of $q$ and $q'$, respectively.
		Then $\sigma \neq \sigma'$.
		Let $r = \delta^{\Spec}(q_0^{\Spec}, \sigma)$ and  $r' = \delta^{\Spec}(q_0^{\Spec}, \sigma')$.
		By Lemma~\ref{la:refinement lifted to sigma}, $f(q)=r$ and $f(q') = r'$.
		Since $A$ is a minimal state cover, $r \neq r'$, and since $\Spec$ is minimal, $r \not\approx r'$.
		Since ${\cal W} = \{ W_q \}_q$ is a separating family, $W_r \cap W_{r'}$ contains a separating sequence $\rho$ for $r$ and $r'$.
		Since $A \odot {\cal W} \subseteq T$, $\delta^{\Obs}(q, \rho) \downarrow$ and $\delta^{\Obs}(q', \rho) \downarrow$.
		By Lemma~\ref{la:refinement lifted to sigma},
		$\lambda^{\Obs}(q, \rho) = \lambda^{\Spec}(r,\rho) \neq \lambda^{\Spec}(r',\rho) = \lambda^{\Obs}(q', \rho)$.
		Thus $\rho \vdash q \apart q'$, as required.
		\item
		Since $A \subseteq T$, $|A| = |B|$, and since $A$ is a minimal state cover, $|A| = | Q^{\Spec} |$.
		Hence $|B| = | Q^{\Spec} |$.
		\item 
		By construction, $A = \mathsf{access}(B)$.
		\item
		Since $A \cdot  I^{\leq k+1} \subseteq T$, sets $B$ and $F^{<k}$ are complete.
		\item 
		All states in $F^k$ are identified:
		We show a stronger statement, namely that all states in $F^{\leq k}$ are identified.
		Suppose $r \in F^{\leq k}$.
		Let $\mathsf{access}(r) = \sigma$ and let $s = \delta^{\Spec}(q_0^{\Spec}, \sigma)$.
		By Lemma~\ref{la:refinement lifted to sigma}, $f(r) = s$.
		By Lemma~\ref{obs tree minimal state cover}, $f$ restricted to $B$ is a bijection. 
		Let $q \in B$ be the unique state with $f(q)=s$.
		Now suppose $q' \in B$ is distinct from $q$, and let $f(q') = s'$.
		Since $f$ is a bijection, $s \neq s'$, and since $\Spec$ is minimal $s \not\approx s'$.
		By the definition of a separating family, $W_s \cap W_{s'}$ contains a separating sequence $\rho$ for $s$ and $s'$.
		Since $A \cdot  I^{\leq k+1} \odot {\cal W} \subseteq T$, $\delta^{\Obs}(r, \rho) \downarrow$ and $\delta^{\Obs}(q', \rho) \downarrow$.
		By Lemma~\ref{la:refinement lifted to sigma},
		$\lambda^{\Obs}(q', \rho) = \lambda^{\Spec}(s',\rho) \neq \lambda^{\Spec}(s,\rho) = \lambda^{\Obs}(r, \rho)$.
		Thus $\rho \vdash r \apart q'$.
		Since $q'$ was chosen to be an arbitrary basis state different from $q$, this implies that state $r$ is identified.
		\item 
		Condition (\ref{cotransitivity requirement FV}) holds:
		Suppose $q \in F^k$ and $r \in F^{<k}$.
		By the previous item, both $q$ and $r$ are identified, that is, there exist $q', r' \in B$ such that $C(q) = \{ q' \}$ and $C(r) = \{r'\}$.
		If $q'= r'$ then $C(q)=C(r)$ and we are done.  So assume $q'\neq r'$.
		Let $s = f(q')$ and $t = f(r')$. 
		By Lemma~\ref{candidate set}, $f(q) = s$ and $f(r)=t$.
		Since $f$ restricted to $B$ is a bijection, $s \neq t$, and
		since $\Spec$ is minimal, $s \not\approx t$.
		By the definition of a separating family, $W_s \cap W_t$ contains a separating sequence $\rho$ for $s$ and $t$.
		Since $A \cdot  I^{\leq k+1} \odot {\cal W} \subseteq T$, $\delta^{\Obs}(q, \rho) \downarrow$ and $\delta^{\Obs}(r, \rho) \downarrow$.
		By Lemma~\ref{la:refinement lifted to sigma},
		$\lambda^{\Obs}(q, \rho) = \lambda^{\Spec}(s,\rho) \neq \lambda^{\Spec}(t,\rho) = \lambda^{\Obs}(r, \rho)$.
		Thus $\rho \vdash r \apart q$.
	\end{enumerate} 
	Since all conditions of Corollary~\ref{Cy:k-A-complete} hold, we conclude that $T$ is $k$-$A$-complete.
\end{proofappendix}

\begin{proofappendix}{k-complete}
	Let $\M$ be a Mealy machine with at most $m$ states that passes suite $T$.
	In order to prove that $T$ is $m$-complete, it suffices to prove that $\M \approx \Spec$.
	Let $f : \Obs \to \Spec$ and $g : \Obs \to \M$.  Define relation $R \subseteq Q^{\Spec} \times Q^{\M}$ by
	\begin{eqnarray*}
		(s, q) \in R & \Leftrightarrow & \exists t \in B \cup  F^{<k} : f(t) = s \wedge  g(t) = q.
	\end{eqnarray*}
	We claim that $R$ is a bisimulation between $\Spec$ and $\M$.
	\begin{enumerate}
		\item 
		Since $f$ is a functional simulation from $\Obs$ to $\Spec$, $f(q^{\Obs}_0) = q^{\Spec}_0$, and
		since $g$ is a functional simulation from $\Obs$ to $\M$, $g(q^{\Obs}_0) = q^{\M}_0$.
		Using $q^{\Obs}_0 \in B$, this implies $(q^{\Spec}_0, q^{\M}_0) \in R$.
		\item 
		Suppose $(s,q) \in R$ and $i \in I$. 
		Since $(s, q) \in R$, there exists a $t \in B \cup F^{<k}$ such that $f(t) = s$ and $g(t) = q$.
		W.l.o.g.\  we select $t$ from the lowest possible stratum, that is, if $t \in F^i$ then there is no $\overline{t} \in B \cup F^{<i}$ with $f(\overline{t}) = s$ and $g(\overline{t}) = q$.
		Since $B$ and $F^{<k}$ are complete, $\delta^{\Obs}(t,i) \downarrow$.
		Since $f$ and $g$ are functional simulations, also $\delta^{\Spec}(s,i) \downarrow$ and $\delta^{\M}(q,i) \downarrow$.
		Let $s' = \delta^{\Spec}(s,i)$, $q' = \delta^{\M}(q,i)$ and $t'= \delta^{\Obs}(t,i)$.
		Since $f$ and $g$ are functional simulations,
		$\lambda^{\Obs}(t,i) = \lambda^{\Spec}(s,i)$ and $\lambda^{\Obs}(t,i) = \lambda^{\M}(q,i)$.
		This implies $\lambda^{\Spec}(s,i) = \lambda^{\M}(q,i)$, as required.
		Since $f$ and $g$ are functional simulations, $f(t') = s'$ and $g(t') = q'$.
		In order to prove $(s', q') \in R$, we consider two cases:
		\begin{enumerate}
			\item 
			$t' \in B \cup F^{<k}$.  In this case, since $f(t') = s'$ and $g(t') = q'$, $(s', q') \in R$ follows from the definition of $R$.
			\item 
			$t'\in F^k$.
			Let $W$ be the subset of states in $F^{<k}$ that occur in the access path of $t'$.
			We claim that $g$ is injective on $B \cup W$.
			Then, since $B \cup W$ has $n+k=m$ states, there must be a state $t'' \in B \cup W$ with $g(t'') = g(t')$.
			States $t''$ and $t'$ have the same candidate set (otherwise they would be apart by condition (\ref{cotransitivity requirement IM}) or by the fact that $t'$ has been identified, which would contradict $g(t'') = g(t')$). Then by Lemma~\ref{candidate set}, $f(t'') = f(t')$.
			This in turn implies that $(s', q') \in R$, which completes the proof that $R$ is bisimulation.
			
			Thus it remains to prove our claim that $g$ is injective on $B \cup W$:
			\begin{enumerate}
				\item 
				By Lemma~\ref{f restricted to B injective}, $g$ is injective on $B$.
				\item
				Let $u \in W$ and let $r \in B$.
				We claim $g(u) \neq g(r)$.	
				We consider two cases:
				\begin{itemize}
					\item 
					$C(u) = \{ r \}$.   Then by Lemma~\ref{candidate set}, $f(u) = f(r)$.
					But then $g(u) \neq g(r)$, because otherwise $t$ would not be in the lowest possible stratum.
					\item 
					$C(u) \neq \{ r\}$.  Then $u \apart r$ and therefore,
					by Lemma~\ref{la: apartness refinement}, $g(u) \neq g(r)$.
				\end{itemize}
				\item
				Let $u, u' \in W$ with $u \neq u'$.
				We claim $g(u) \neq g(u')$.
				By condition (\ref{cotransitivity requirement IM}), $C(u) = C(u')$ or $u\apart u'$.
				\begin{itemize}
					\item 
					If $C(u) = C(u')$, Lemma~\ref{candidate set} gives that $f(u) = f(u')$. 
					But then $g(u) \neq g(u')$, because otherwise $t$ would not be in the lowest possible stratum.
					\item 
					If $u \apart u'$ then $g(u) \neq g(u')$ by Lemma~\ref{la: apartness refinement}. 
				\end{itemize}
			\end{enumerate}
		\end{enumerate}
	\end{enumerate}
	The proposition now follows by application of Lemma~\ref{la:bisimulation}.
\end{proofappendix}

\end{document}